\setlist{itemsep=0mm}
\newcommand\numberthis{\addtocounter{equation}{1}\tag{\theequation}}
\newtheorem{theorem}{Theorem}
\newtheorem*{theorem*}{Theorem}
\newtheorem{lemma}[theorem]{Lemma}
\newtheorem*{lemma*}{Lemma}
\newtheorem{corollary}[theorem]{Corollary}
\theoremstyle{definition}
\newtheorem{definition}[theorem]{Definition}
\newtheorem{assumption}[theorem]{Assumption}
\newtheorem{remark}[theorem]{Remark}
\newtheorem*{remark*}{Remark}
\newcommand{\projerror}{g(n)}
\newcommand{\depthname}{depth-$d$ }
\newcommand{\Proj}{P}
\newcommand{\all}{ALL}
\newcommand{\mnote}[1]{}
\newcommand{\nnote}[1]{}
\newcommand{\mnotetwo}[1]{}
\definecolor{darkbrown}{RGB}{101,67,33}
\title{\bfseries\Large
Quasi-polynomial time approximation of output probabilities of geometrically-local, shallow quantum circuits.
}
\author{%
	Nolan J. Coble\footnote{Authors are listed alphabetically.}
	\thanks{%
		\texttt{ncoble@terpmail.umd.edu}
	},
  Matthew Coudron%
  \thanks{%
     \texttt{mcoudron@umd.edu} - Corresponding Author
    }
}
\begin{document}
\sloppy

\maketitle
\begin{abstract}
   We present a classical algorithm that, for any 3D geometrically-local, polylogarithmic-depth quantum circuit $C$ acting on $n$ qubits, and any bit string $x \in \{0,1\}^n$, can compute the quantity $|\bra{x}C   \ket{0^{\otimes n}}|^2$ to within any inverse-polynomial additive error in quasi-polynomial time.  It is known that it is $\#P$-hard to compute this same quantity to within $2^{-n^2}$ additive error \cite{Mov19, KMM21}. The previous best known algorithm for this problem used $O(2^{n^{1/3}}\poly(1/\epsilon))$ time to compute probabilities to within additive error $\epsilon$ \cite{BGM19}. Notably, the \cite{BGM19} paper included an elegant polynomial time algorithm for this estimation task restricted to 2D circuits, which makes a novel use of 1D Matrix Product States (MPS) carefully tailored to the 2D geometry of the circuit in question.  Surprisingly, it is not clear that it is possible to extend this use of MPS to address the case of 3D circuits in polynomial time.  This raises a natural question as to whether the computational complexity of the 3D problem might be drastically higher than that of the 2D problem.  In this work we address this question by exhibiting a quasi-polynomial time algorithm for the 3D case.  In order to surpass the technical barriers encountered by previously known techniques we are forced to pursue a novel approach: instead of using MPS techniques, we construct a recursive sub-division of the given 3D circuit using carefully designed block-encodings.  To our knowledge this is the first use of the block-encoding technique in a purely classical algorithm.

   Our algorithm has a Divide-and-Conquer structure, demonstrating how to approximate the desired quantity via several instantiations of the same problem type, each involving 3D-local circuits on about half the number of qubits as the original.  This division step is then applied recursively, expressing the original quantity as a weighted combination of smaller and smaller 3D-local quantum circuits.  A central technical challenge is to control correlations arising from entanglement that may exist between the different circuit ``pieces" produced this way.  We believe that the division step, which makes use of block-encodings \cite{GSLW18,LC16,AG19}, together with an Inclusion-Exclusion argument to reduce error in each recursive approximation, may be of independent interest.
   
     \mnote{Put these sentences in the introduction?:  We believe that our algorithm extends naturally to any fixed dimension $D$ by induction on the dimension, but we focus on the 3D case, as the simplest unresolved case, for concreteness.  Furthermore, we show that, under a natural, polynomial-time-checkable condition on the circuit $C$, our algorithm runs in polynomial time. This highlights the possibility that the super-polynomial worst-case time bound on our algorithm might be due to limitations of our analysis.}
\end{abstract}

\pagebreak
\section{Introduction}

\mnote{To Do List:   1) change the deltas in the theorem statements    4) Make the edits and citations suggested by QIP Reviewer 1 - explain composition of projectors, give motivation for problem and potential applications, can we compute the phase, etc .  6)  Consider adding citations suggested by Kyungjoo Noh     9)  Mention that our results work in log depth! (when they are quasi-polynomial time) 10)  Explain better our use of the BGM algorithm for the "thin" 3D case.  98) Update Lemma proofs in appendix to use outerproduct notation, rather than writing out the entire outerproduct. 99)  Update appendix lemmas to include edits to lemmas in the body (theorems, etc).  100)  Put abstract in final form.  Shorten abstract?  7)  Can our algorithm compute phases?  8)  Does our approach really depend on BGM or can we do it without?}

\nnote{I finished writing Lemma 18 proof. Make sure Lemma statements in the appendix match the body. Check the main theorem statements for accuracy. For instance, do we want Theorem 1 to be a simplified version of the later theorerm? As a side note, Lemma 8 is another reason poly time may be difficult (we had to change the bound at some point). What is your comment after Equation 20? Need a good description for K/T in the parameter table. There are some pretty aggressive comments directly before Section 5.2. Equation 55. $\delta$ and $d$ do not appear in Theorem 32, should they? Should we note the similarities/differences with Algorithm 4 parameters? Did we correctly take $K$ into account in the analysis of Algorithm 3/4? Lemma 30 uses an incorrect statement of Lemma 20. Delete the 'proof' of Lemma 31 in the appendix?  Find the explicit constant on $e(n)$ in Lemma 13, and update accordingly?}

\mnote{maybe we should delete appendix 3 as we discussed}

Many schemes for obtaining a quantum computational advantage with near-term quantum hardware are motivated by mathematical results proving the computational hardness of sampling from near-term quantum circuits.  In this work we consider quantum circuits which are geometrically local and have polylogarithmic circuit-depth.  It is known to be $\#P$-hard to compute output probabilities of $n$-qubit, geometrically-local, constant-depth quantum circuits to within $2^{-n^2}$ additive error \cite{Mov19}, a result which builds on an extensive line of research focusing on the hardness of sampling from quantum circuits \cite{AA11,BJS11, BMS16,NSC+17, BFNV19}.  It has even been shown, under several computational assumptions, that there is no classical polynomial time algorithm which, given a geometrically-local, constant-depth  quantum circuit, K, can produce samples whose distribution lies within a constant, in the $\ell_1$ distance, of the output distribution of K in the computational basis \cite{BVHSRE18}.

On the other hand, a series of works on the classical complexity of sampling from near-term quantum circuits, and related tasks, highlights the subtle nature of identifying an actual quantum advantage based on these tasks \cite{DHKLP18,HZN+20,NLPD+20}. 
These results frame the significance of the algorithm presented as Theorem 5 in \cite{BGM19}, which estimates output probabilites of 2D-local constant depth circuits to inverse polynomial additive error in polynomial time.  In fact, the original algorithm in \cite{BGM19}, actually estimates quantities of the form $\bra{0^{\otimes n}}C^{\dagger} \left (\otimes_{i=1}^n P_i
\right )  C\ket{0^{\otimes n}}$, where each $P_i \in \{X, Y, Z, I\}$ is a single-qubit Pauli observable operator.  However, it is straightforward to convert their algorithm to compute the quantity $\bra{0^{\otimes n}}C^{\dagger} \left (\otimes_{i=1}^n \ket{x_i}\bra{x_i} \right )  C\ket{0^{\otimes n}} =|\bra{x}C   \ket{0^{\otimes n}}|^2$, $x \in \{0,1\}^n$, instead. Theorem 5 of \cite{BGM19} constitutes a pertinent observation.  While it is hard to sample from constant-depth quantum circuits, it is still unresolved whether it is hard to estimate any property of such a circuit which could have been computed using a polynomial number of samples from the output of the quantum circuit itself.  In particular: A polynomial number of samples from a 2D-local, constant-depth quantum circuit only allows one to estimate output probabilites of that circuit to inverse polynomial additive error.  But, it is shown in Theorem 5 of \cite{BGM19} that this same task can be done in classical polynomial time!  One might ask: Is there a well-defined Decision problem which can be solved using only a polynomial number of samples from such a quantum circuit, together with classical post-processing, and yet cannot also be efficiently solved using classical computing alone?  This is unknown.

We note, at this point, some basic facts about the task of computing the quantity $|\bra{0^{\otimes n}}C   \ket{0^{\otimes n}}|^2$ which explain why we can focus on this task WLOG, and may motivate our interest in it:

\begin{itemize}
	
	\item If there is an algorithm to estimate the quantity $|\bra{0^{\otimes n}}C   \ket{0^{\otimes n}}|^2$, for any 3D-local \depthname quantum circuit $C$, then that algorithm can be used to estimate $|\bra{x}C   \ket{0^{\otimes n}}|^2$ for any $x \in \{0,1\}^n$.  The reason is that $|\bra{x}C   \ket{0^{\otimes n}}|^2 = |\bra{0^{\otimes n}}G  \ket{0^{\otimes n}}|^2$ where $G$ is taken to be the 3D-local circuit $G \equiv C \left ( \otimes_{i=1}^n X^{x_i}\right )$.  Here $X$ represents the single qubit Pauli operator $\sigma_X$. Note that $G$ is still a depth-$O(d)$ quantum circuit.

	\item  Any such algorithm can also estimate $|\bra{0^{\otimes n}}C Z^n C^{\dagger}  \ket{0^{\otimes n}}|^2$, which is the magnitude of the expected bias of the Parity of the output bits of $C$, when measured in the computational basis. This is true by virtue of the fact that $C Z^n C^{\dagger} $ is, itself, a 3D local, depth-$O(d)$ circuit.  So, this type of computational problem allows us to study the power of \depthname geometrically-local, quantum circuits combined with certain limited types of classical post-processing, like the Parity function. 
	
	\item The algorithm we present in this work can easily be modified to approximate marginal probabilities (e.g., the probability that $x_1 = 1$ for $x \in \{0,1\}^n$ sampled from the given circuit, etc).  Consequently, it is straightforward to use this algorithm to search for all  $x \in \{0,1\}^n$ which have probability at least $\delta$ in the output distribution of a given \depthname geometrically-local circuit $C$.  That is, searching for all of the ``$\delta$-heavy" strings of $C$.  When $\delta = 1/\poly(n)$ there can be at most $\poly(n)$ such strings and our algorithm can find them all in quasi-polynomial time.

\end{itemize}

The algorithm for 2D circuits presented in Theorem 5 of \cite{BGM19} makes a novel use of 1D Matrix Product States, carefully tailored to the 2D geometry of the circuit in question.  However, the authors of \cite{BGM19} point out that it is not clear that it is possible to extend this use of MPS to address the case of 3D circuits in polynomial time.  Instead they provide a sub-exponential time algorithm for the 3D case, which has time complexity $O(2^{n^{1/3}}\poly(1/\epsilon))$ for computing the desired quantity to within additive error $\epsilon$.  In this work we introduce a new set of techniques culminating in a divide-and-conquer algorithm which solves the 3D case in quasi-polynomial time.  

  Our algorithm has a divide-and-conquer structure with the goal being to divide the circuit $C$ into pieces, and reduce the original problem to a small number of new 3D-circuit problems involving circuits on only a fraction of the number of qubits as the original.  This division step requires the ability to construct Schmidt vectors of the state $C   \ket{0^{\otimes n}}$, across a given cut, via a \depthname geometrically-local quantum circuit, so that the new subproblems can be expressed as smaller instantiations of the original problem type.  We accomplish this through the use of block-encodings, a technique designed for quantum algorithms \cite{GSLW18,LC16,AG19, LMR14, KLL+17}, but used here as a subroutine of a classical simulation algorithm instead.  However, to date, we are only able to construct, as a block-encoding circuit, the \emph{leading} Schmidt vector across certain ``heavy" cuts. Due to this restriction we are forced to use a novel division step in our Divide-and-Conquer approach.  Instead of dividing about a single cut and constructing many of its Schmidt vectors as \depthname geometrically-local block-encodings, we must divide across many cuts and construct only their leading Schmidt vectors.  Interestingly, this process can still lead to low approximation error via an Inclusion-Exclusion style argument, as shown in Lemma \ref{clm:expansiontrick}.

These techniques culminate in a worst-case quasi-polynomial time algorithm for 3D circuits, which is our main result:

\begin{theorem}\label{lem:quasi-poly-run-time}
	There exists a classical algorithm which, for any 3D geometrically-local, depth-$d$  quantum circuit $C$ on $n$ qubits, can compute the scalar quantity  $|\bra{0^{\otimes n}}C   \ket{0^{\otimes n}}|^2$ to within $1/n^{\log(n)}$ additive error in time
	\begin{equation}\label{eq:quasi-poly-runtime-bound}
	T(n) =  2^{d^3 \polylog(n)}\mnote{ O(2^{\polylog(n)}n^{d^2})?}
	\end{equation}
	
	See Algorithm \ref{alg:quasi-poly-driver} in Section \ref{section:quasi-poly-time} for a precise definition of this classical algorithm.
\end{theorem}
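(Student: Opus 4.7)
The plan is to prove the theorem by a Divide-and-Conquer recursion that reduces the estimation of $|\bra{0^{\otimes n}}C\ket{0^{\otimes n}}|^2$ for a 3D geometrically-local, depth-$d$ circuit on $n$ qubits to a $\polylog(n)$ number of instances of the same computational problem on 3D geometrically-local circuits on roughly $n/2$ qubits each. Recursing $O(\log n)$ times halves the qubit count down to a ``thin'' 3D region (one spatial extent of order $\polylog(n)$), and at that point I would invoke the sub-exponential algorithm of \cite{BGM19} as the base case, which runs in time $2^{\polylog(n)}$ on such thin regions.

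For the inductive (dividing) step, I would first pick a planar cut perpendicular to one coordinate axis, splitting the qubits into roughly equal halves $A$ and $B$. Writing the Schmidt decomposition of $C\ket{0^{\otimes n}}$ across this cut as $\sum_k \sigma_k \ket{\phi_k}_A\otimes\ket{\psi_k}_B$, the ideal step would express $|\bra{0^{\otimes n}}C\ket{0^{\otimes n}}|^2$ as an inner-product computation on $A$ and $B$ separately. Because the Schmidt rank can be huge, my plan is to extract only the \emph{leading} Schmidt vector $\ket{\phi_1}_A$, by building a $\polylog(n)$-depth, 3D geometrically-local block-encoding circuit (using \cite{GSLW18,LC16,AG19}) that approximately implements the rank-$1$ projector $\ket{\phi_1}\bra{\phi_1}$. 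Substituting this block-encoded projector into the expression for the target quantity produces a smaller 3D-local circuit problem on $|A|\approx n/2$ qubits, to which the recursion can be applied. This construction is effective only when the leading Schmidt coefficient is ``heavy,'' a restriction that forces the next step.

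To handle general (non-heavy) cuts, I would select $K=\polylog(n)$ parallel candidate cuts within a slab of thickness $O(d)$ along one axis. For each cut I would attempt to peel off the leading Schmidt vector, and combine the $K$ leading-vector projections via an Inclusion-Exclusion expansion (the trick embodied in Lemma \ref{clm:expansiontrick}), expressing $|\bra{0^{\otimes n}}C\ket{0^{\otimes n}}|^2$ as an alternating sum of at most $2^K$ terms, each a smaller 3D-local circuit problem on $\approx n/2$ qubits. A pigeonhole / entanglement-budget argument on the slab (a shallow-circuit area-law-style bound) guarantees that at least one of the $K$ parallel cuts is heavy, which is what makes the residual Inclusion-Exclusion error polynomially small at each recursion level. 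The error per level is chosen as $1/\text{quasi-poly}(n)$; accumulated over $O(\log n)$ levels and $2^{\polylog(n)}$ branches per level, the total additive error stays below $1/n^{\log n}$, while the runtime satisfies $T(n)\le 2^{\polylog(n)}\cdot T(n/2) + 2^{\polylog(n)}$, solving to $T(n)=2^{d^3\polylog(n)}$ as in \eqref{eq:quasi-poly-runtime-bound}, with the $d^3$ reflecting the depth growth accumulated from composing with block-encoding circuits that must themselves respect 3D geometric locality.

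The main obstacle, and the step I expect to dominate the technical work, is the construction in the inductive step: producing a classical description of a $\polylog(n)$-depth, 3D geometrically-local block-encoding circuit that approximates the leading-Schmidt-vector projector across a planar cut, \emph{together with} a guarantee that the heaviness condition holds on at least one of the $K$ parallel cuts and is preserved under recursion. The block-encoding has to be built classically inside a simulation algorithm (a novel use of the primitive), the geometric-locality constraint must survive the reduction so that the recursive subproblems are again 3D-local depth-$O(d)$ circuits, and the heaviness must be certifiable in polynomial time so that the Inclusion-Exclusion error bound is valid. All remaining pieces --- the error accounting, the recurrence solution, and handing off thin regions to the \cite{BGM19} base case --- are then routine.
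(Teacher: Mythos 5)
Your overall architecture (block-encoded leading Schmidt projectors, Inclusion-Exclusion over several cuts, recursion down to thin slabs handled by \cite{BGM19}) matches the paper, but the step you lean on to make it all work is not available and is not how the paper proceeds. You assert that ``a pigeonhole / entanglement-budget argument on the slab (a shallow-circuit area-law-style bound) guarantees that at least one of the $K$ parallel cuts is heavy.'' No such unconditional guarantee holds: for the post-selected state $\ket{\psi}_{B_i\cup F_i}=\bra{0_{M_i}}C_{B_i\cup M_i\cup F_i}\ket{0}$, heaviness of the top Schmidt coefficient is controlled by $p_{total}(M_i=0)$ being close to $1$ (Lemma \ref{lem:highschmidtnew}), and a shallow circuit can easily make $p_{total}(M_i=0)$ tiny on every slice (e.g.\ drive every $M_i$ qubit toward $\ket{1}$), in which case no cut is heavy. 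The paper's actual argument is a win--win case analysis, not an area law: either $|\bra{0^{\otimes n}}C\ket{0^{\otimes n}}|$ is already below the target error, in which case the algorithm simply outputs $0$; or, because the events $\{M_i=0\}$ on lightcone-separated slices are \emph{independent} (Lemma \ref{clm:independence}) and their product lower-bounds the global probability, almost all of the $\Theta(n^{1/3}/d)$ slices must satisfy $p_{total}(M_i=0)\geq \delta^{1/h(n)}$ (Lemma \ref{clm:findheavy}). This conditional abundance of heavy slices, certified by running the base-case algorithm on each $O(d)$-thick slice, is what replaces your missing entanglement-budget step, and without it the division step has no heavy cut to hang on.

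A second, related gap is your account of why the Inclusion-Exclusion error is small. One heavy cut does not make the residual small: a single heavy slice only gives per-cut error on the order of $e(n)=1-\delta^{1/h(n)}$, which with the paper's parameters is roughly $1/\log^4 n$ --- far too large to use even once at the target accuracy $1/n^{\log n}$. The point of Lemma \ref{clm:expansiontrick} is that the expansion is taken over $\Delta=\log n$ \emph{simultaneously} heavy, lightcone-separated slices, and the error multiplies across them, giving $(2e(n)+2\projerror)^{\Delta}$, which is quasi-polynomially small only because \emph{all} $\Delta$ cuts are heavy. Your phrasing (``at least one of the $K$ parallel cuts is heavy, which is what makes the residual Inclusion-Exclusion error polynomially small'') inverts this mechanism, and the error accounting you sketch would not close. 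Relatedly, the recursion does not produce clean ``3D-local depth-$O(d)$ circuits'': the subproblems carry post-selected and traced-out ancilla registers and block-encoding factors of depth $O(dK^2)$ with $K=\log^3 n$ (the paper's ``synthesis'' formalism, Definitions \ref{def:synth} and \ref{def:subsynth}), the normalizations $\lambda_1^i$ must themselves be estimated via the base case (Definition \ref{def:kappa}, Lemma \ref{lem:lambdaapprox}), and only the two outer pieces per cut are recursed on while the middle pieces between chosen cuts stay inside the narrow region $Z$ and go straight to $\mathcal{B}$ --- details that are needed to make both the error recursion (Lemma \ref{lem:multicuttrick}) and the $2^{d^3\polylog(n)}$ runtime bound actually go through.
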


\mnote{above is the new style of theorem statement.  Replace everywhere with this?  Just put a comment in one location with the full $\delta$ dependence from the original theorem statement.}

Note that our Theorem statement gives an inverse quasi-polynomial additive error approximation.  This is, therefore, asymptotically better than an inverse polynomial additive approximation for any polynomial.  There is a more explicit trade-off between runtime and approximation error given in Theorem \ref{thm:mainthmwdelta}, and in fact, Theorem \ref{lem:quasi-poly-run-time} follows from Theorem \ref{thm:mainthmwdelta} with $\delta = 1/n^{\log(n)}$,  but we use the above statement here for simplicity.  Note also, when the depth, $d$, is polylogarithmic our algorithm runs in quasi-polynomial time.

\section{Dividing the Cube:  Some Notation}

Given a 3D-local, \depthname circuit $C$, we wish to estimate the quantity $|\bra{0^{\otimes n}}C   \ket{0^{\otimes n}}|^2$.  To begin our divide-and-conquer approach we will divide the circuit $C$ in half via a cut through the center as shown in Figure \ref{fig:cutcube2}.  The width of the cut is dependent on $d$, and we will discuss how to select this width below. To begin with, we make the width large enough to have the non-empty sets $B$, $M$, and $F$ defined below.

\begin{figure}[H]
	\begin{center}
		\includegraphics{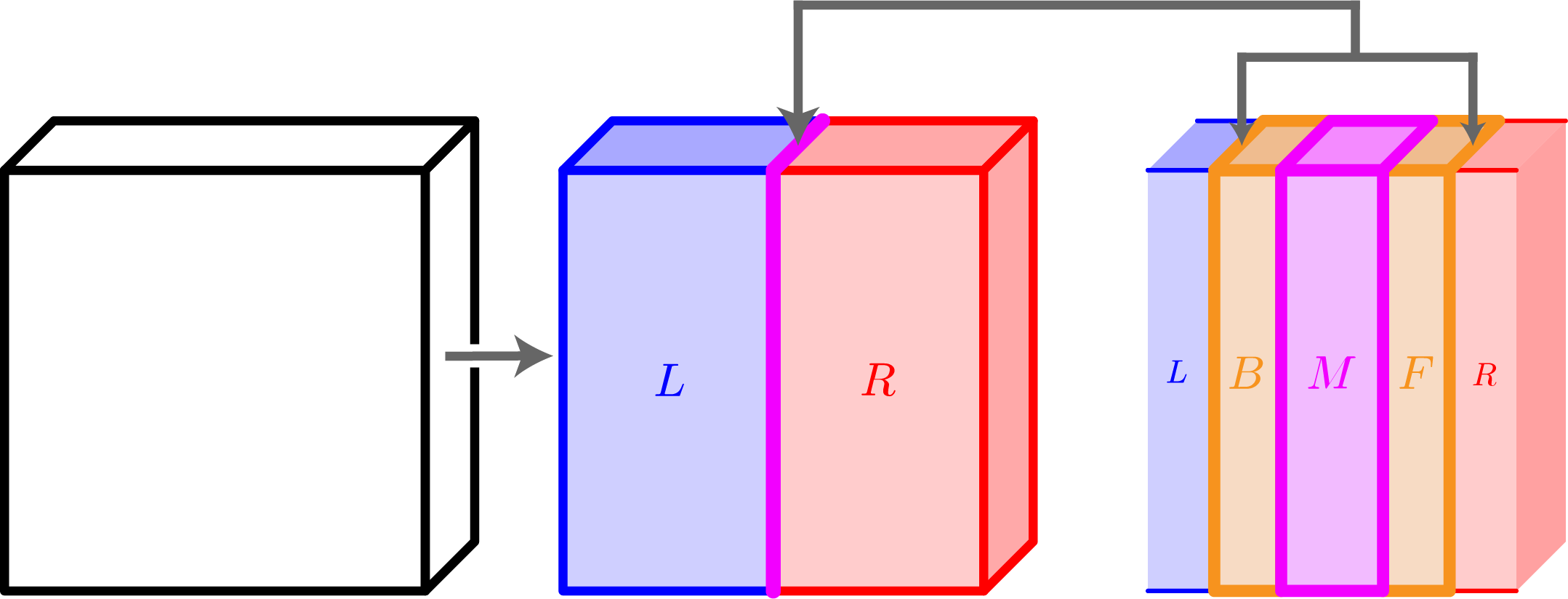}
		\caption{Cutting the Cube: (left) 3D cube of qubits. (center) Choose a location to cut the qubits. The qubits to the left and right of the cut are denoted by $L$ and $R$, respectively. (right) Within the cut there are three regions: a center region $M$ and regions to the left and right of $M$, denoted by $B$ and $F$, respectively.}
		\label{fig:cutcube2}
	\end{center}
\end{figure}

\begin{definition}[$M$, $B$, $F$, $R$, and $L$ (see Figure \ref{fig:cutcube2})] \label{def:bmf}
 Let $M$ be the set of all qubits in the ``Middle of the cut'' (the middle part of the cut which is not in the lightcone of qubits from outside the cut).  Let $B$ be the set of all qubits within the cut which are to the left of $M$. Let $F$ be the set of all qubits within the cut which are to the right of $M$.  We will choose the width of $M$ to be $O(d)$ such that the lightcones of $B$ and $F$ are disjoint.  We will choose the widths of $B$ and $F$ to be $O(d)$ such that the lightcone of $M$ is contained in $B \cup M \cup F$.  For concreteness we set the width of each of $B, M, F$ to be $10d$.  Since $C$ is geometrically-local, this is sufficiently large width to satisfy the above conditions on lightcones.

Let $L$ be all qubits outside the cut which are to the left of the cut (that is, to the left of $B$).  The set $L$ is colored blue.  Let $R$ be the set of all qubits outside to the right of the cut (that is, to the right of $F$).  The set $R$ is colored red.

\end{definition}

We will now define a 2D geometrically local, \depthname circuit $C_{B \cup M \cup F}$ which can be thought of as the sub-circuit of $C$ which lies within the light-cone of $M$.  Intuitively this circuit captures all of the local information that must be accounted for in the division step across this particular slice in our divide-and-conquer algorithm.  

\begin{definition}[$C_{B \cup M \cup F}$]\label{def:shortcircuit}
Now, let us begin with the all zeroes state on all the qubits $\ket{0}_{L \cup B \cup M \cup F \cup R} = \ket{0_{\all}}$, and apply the minimum number of gates from the circuit $C$ such that every gate on the qubits within $M$ has been applied.  We will call this unitary $C_{B \cup M \cup F}$.  Note that this unitary does not act on any qubits outside of $B \cup M \cup F$.  This is because the lightcone of $M$ is contained in $B \cup M \cup F$ by Definition \ref{def:bmf}.  Note that $C_{B \cup M \cup F}$ can be thought of as an approximately 2D (not 3D) geometrically-local, \depthname circuit, since the third dimension of the circuit is $O(d)$ which for $d=\polylog(n)$ grows asymptotically slower than $O(n^{1/3})$.

We define $C_{L\cup R}$ to be the unitary composed of the remainder of the gates of $C$ not yet applied in $C_{B \cup M \cup F}$, so that $C = C_{L\cup R} \circ C_{B \cup M \cup F}$.  We define $C_{L}$ (resp. $C_{R}$) to be the unitaries composed of the remainder of the gates of $C$ not yet applied in $C_{B \cup M \cup F}$ and which lie to the left (resp. right) of the $M$.  Note that $C_{L} \circ C_{R} = C_{L\cup R}$ since none of the circuits $C_{L}, C_{R}, C_{L\cup R}$ act non-trivially on $M$.  See Figure \ref{fig:cbmf} for an illustration of these unitaries with a 1D geometrically-local circuit, and Figure \ref{fig:3} for an illustration in a 2D circuit.
\end{definition}

\begin{figure}[H] 
	\begin{center}
		\includegraphics[scale=0.4]{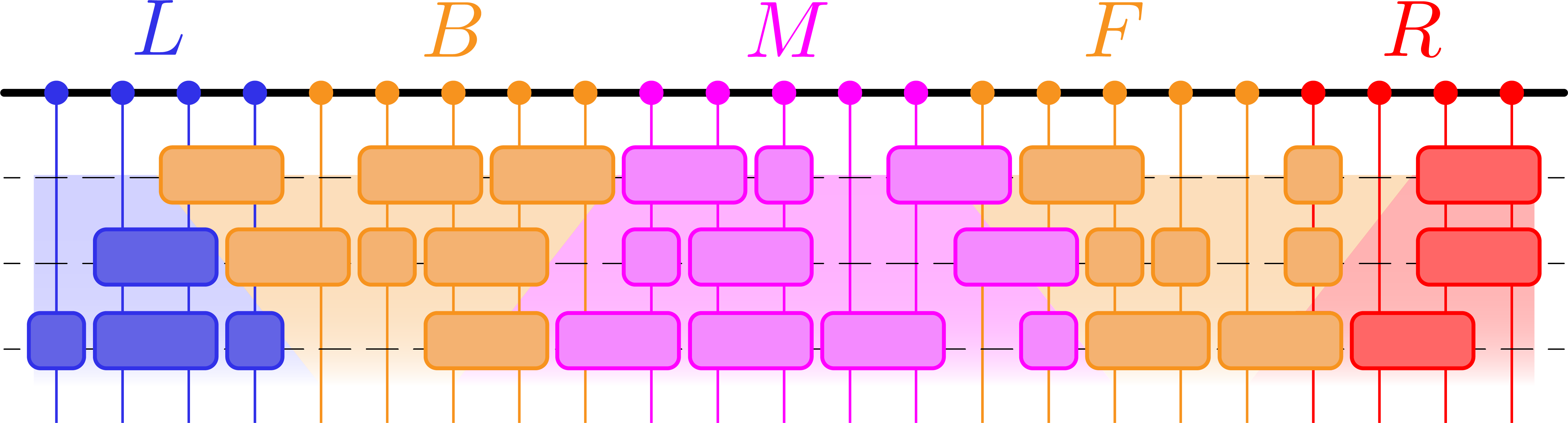}
		\caption{Block depiction of the unitaries defined in Definition \ref{def:shortcircuit} and Definition \ref{def:cutcircuit} for the case of a 1D geometrically-local constant-depth circuit $C$.   Here the vertical dimension represents the depth of the circuit, so the rectangle has the same dimensions as the circuit diagram would. $C_{B\cup M\cup F}$ is defined to be the unitary produced by the gates in the lightcone of $M$, which are colored {\color{magenta} magenta} in this diagram. The unitaries $C_{wrap}, C_{L-Wrap}, C_{R-Wrap}$, formally defined in Definition \ref{def:cutcircuit}, are also depicted here. $C_{wrap}$ is the unitary consisting of all the {\color{orange} orange} gates in the diagram, and $C_{L-Wrap}$  (resp.   $ C_{R-Wrap}$) is the unitary consisting of all the {\color{orange} orange} gates acting on the left (resp. right) of $M$.  Furthermore, $C_L'$ (resp. $C_R'$ ) denote the unitaries consisting of all the {\color{blue} blue} (resp. {\color{red} red}) gates to the left (resp. right) of $M$ in the diagram. We also illustrate the 2D case in Figure \ref{fig:3} below.  }  
		\label{fig:cbmf}
	\end{center}
\end{figure}

The sub-normalized quantum state produced by $C_{B \cup M \cup F}$, defined below, is the state whose Schmidt decomposition we consider in our division step.

\begin{definition}
	Let $\ket{\psi}_{B\cup F} \equiv \bra{0}_M C_{B \cup M \cup F}\ket{0}_{B \cup M \cup F}$. 
\end{definition}

Note that, $\bra{0}_{\text{ALL}} C_{L \cup R} \ket{0}_{L \cup R} \otimes \ket{\psi}_{B\cup F} =  \bra{0}_{\text{ALL}} C \ket{0}_{\text{ALL}}$.

(Throughout this document, the notation $\ket{0_{ALL}}$ will refer to the zero state on all unmeasured qubits for a given state.  It's meaning will be clear from context.)

\begin{definition}[$C_{Wrap}$ ] \label{def:cutcircuit}
 Define a new unitary $C_{Wrap}$ which consists of all the gates from $C$ which are in the reverse light-cone of $B \cup M \cup F$, but not in $C_{B \cup M \cup F}$ itself.  That is, let $C_{L-Wrap}$ (resp. $C_{R-Wrap}$) be the unitary consisting of all the of the gates in $C$ which are in the reverse light-cone of $B$ (resp. $F$), but not in $C_{B \cup M \cup F}$ itself, and let $C_{Wrap} \equiv C_{L-Wrap} \circ C_{R-Wrap}$.   Therefore, 
	\begin{align} \label{eq:cwrapexamp}
	C_{Wrap}^{\dagger} \circ C   = C'_{L} \circ  C_{B \cup M \cup F} \circ C'_{R}  
	\end{align}
	
	Where $C'_{L} \equiv C_{L-Wrap}^{\dagger}\circ C_{L} $ (see Definition  \ref{def:shortcircuit} for the definition of $C_{L}$) is a unitary acting only or $L$  (the remaining, untouched gates of $C$ within $L$), and $C'_{R} \equiv C_{R-Wrap}^{\dagger}\circ C_{R}$ (see Definition \ref{def:shortcircuit} for the definition of $C_{R}$) is a unitary acting only on $R$ (the remaining, untouched gates of $C$ within $R$).  Since $C$ is \depthname it is clear that every qubit in the non-trivial support of $C_{Wrap}$ lies within some $O(d)$ distance of $M$.  Let $R^{Wrap}$ (resp. $L^{Wrap}$) be the subset of qubits in $R$ (resp. $L$) that lie in the non-trivial support of $C_{Wrap}$.  In other words, $R^{Wrap}$ (resp. $L^{Wrap}$) is the non-trivial support of $C_{R-Wrap}$ (resp. $C_{L-Wrap}$). See Figure \ref{fig:cbmf} for an illustration of these unitaries with a 1D geometrically-local circuit, and Figure \ref{fig:3} for an illustration in a 2D circuit. 
\end{definition}

\begin{figure}[H] 
	\begin{center}
		\includegraphics[scale=0.23]{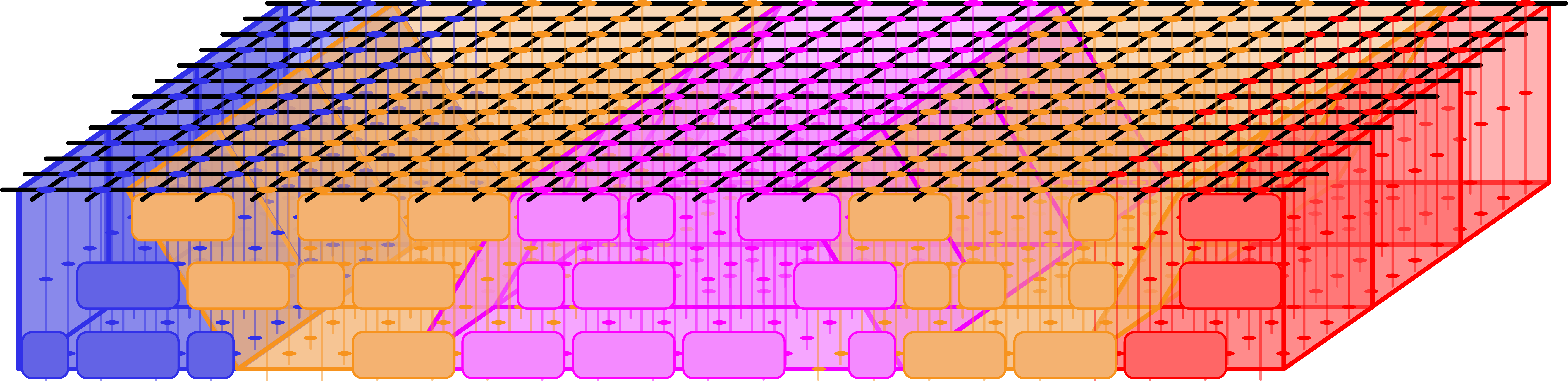}
		\caption{Geometric depiction of the unitaries defined in Definition \ref{def:shortcircuit} and Definition \ref{def:cutcircuit} for the case of a 2D grid of qubits.  Here the vertical dimension represents the depth of the circuit, so the rectangular prism has the same dimensions as the circuit diagram would.   We do not have an analogous figure for 3D circuits, which are the main focus of this work, because it would require 4 dimensions to illustrate. However, we believe the reader will gain sufficient intuition for the definitions from the 1D and 2D diagrams. }
		\label{fig:3}
	\end{center}
\end{figure}

\section{Divide and Conquer: Schmidt Vectors and Block Encodings} \label{sec:simpleexamp}

In this section we will show how to construct a geometrically-local, shallow quantum circuit for the largest Schmidt vector of the unnormalized state $\bra{0_M}C \ket{0_{\all}}$ across the cut $M$, in the case that the largest Schmidt coefficient is very large.  Let us begin, however, by outlining the intuition behind our divide-and-conquer approach, which explains why we are interested in approximating Schmidt vectors via shallow quantum circuits in the first place.  Consider expanding the quantity $\bra{0_{\all}} C \ket{0_{\all}}  = \bra{0}_{\text{ALL}} C_{L \cup R} \ket{0}_{L \cup R} \otimes \ket{\psi}_{B\cup F}$ as a sum over the Schmidt decomposition of $\ket{\psi}_{B\cup F}$ across the cut $M$.  Suppose, that $\ket{\psi}_{B\cup F}$ has almost all of its weight on the top polynomially many Schmidt vectors  (In Section \ref{sec:splitheavy} we will show that, in fact, we can restrict this part of the analysis WLOG to cases where $\ket{\psi}_{B\cup F}$ has a large fraction of its weight on $\lambda_1$).  Then $\ket{\psi}_{B\cup F} \approx \sum_{i=1}^{p(n)}\lambda_i \ket{v_i}_B\otimes \ket{w_i}_F$, and we have: 

\begin{align}
&\bra{0}_{\text{ALL}} C_{L \cup R} \ket{0}_{L \cup R} \otimes \ket{\psi}_{B\cup F} \approx \sum_{i=1}^{p(n)}\lambda_i  \bra{0}_{\text{ALL}} C_{L \cup R} \ket{0}_{L \cup R} \otimes \ket{v_i}_B\otimes \ket{w_i}_F  
\end{align}

\begin{align}
&=  \sum_{i=1}^{p(n)}\lambda_i  \bra{0}_{L \cup B} C_{L }  \ket{0}_{L } \otimes \ket{v_i}_B \cdot\bra{0}_{F \cup R} C_{ R} \ket{0}_{ R} \otimes \ket{w_i}_F, \label{eq:5}
\end{align}

where $\text{ALL} \equiv L \cup B \cup F \cup R$.

Suppose we could produce approximations for the Schmidt vectors $\ket{v_i}_B$ and $\ket{w_i}_F $ via 2D geometrically-local, shallow quantum circuits. Then, the quantity in Equation \ref{eq:5} would be a sum of polynomially many scalar quantities, each of which is the product of output probabilities of two new 3D geometrically-local circuit problems ($C_L$ and $C_R$).  Furthermore, these new 3D circuit problems involve about half the number of qubits as the original problem we were trying to solve.  This leads to a divide-and-conquer recursion which can yield a more efficient runtime for the original problem. The base case in this divide-and-conquer algorithm consists of estimating output probabilities of 3D-local, \depthname quantum circuits which have small width (width at most $w = \polylog(n)$) in one of their dimensions.  This base case can be solved efficiently using the algorithm from Theorem 5 of \cite{BGM19}, as discussed in Remark \ref{rm:3Dbasecase} below.

Note that this divide-and-conquer approach  only works if we can produce explicit approximations for the Schmidt vectors $\ket{v_i}_B$ and $\ket{w_i}_F $ via 2D geometrically-local, shallow quantum circuits. In the case when $\lambda_1$ is sufficiently large, it turns out that we can at least produce the top Schmidt vectors $\ket{v_1}_B$ and $\ket{w_1}_F$ in this way. (Note, we will also need to compute $\lambda_1$ efficiently, and this can also be done using Theorem 5 of \cite{BGM19}, as described in Definition \ref{def:kappa} and Remark \ref{rm:3Dbasecase}.)  However, we do not know how to construct 2D geometrically-local, shallow quantum circuits that approximate $\ket{v_i}_B$ and $\ket{w_i}_F$ for $i > 1$, and so we cannot pursue the divide-and-conquer approach described in Equation \ref{eq:5} verbatim.  Nonetheless, we will see in Section \ref{sec:splitheavy} that just approximating the top Schmidt vectors $\ket{v_1}_B$ and $\ket{w_1}_F$ is already sufficient to produce a (more involved) divide-and-conquer algorithm for the whole estimation problem.  The complete algorithm is explicitly written out in Section \ref{section:quasi-poly-time} (see Algorithms \ref{alg:quasi-poly-driver} and \ref{alg:quasi-poly-subroutine}). The key additional insight is to combine the intuition from Equation \ref{eq:5} above, with an additional expansion trick, expressed in Lemma \ref{clm:expansiontrick}.

\begin{remark}\label{rm:3Dbasecase}
	Theorem 5 of \cite{BGM19} shows that the output probabilities of 2D constant-depth circuits can be computed to inverse polynomial additive error in polynomial time.  Technically, this does not exactly cover the base case of our divide-and-conquer approach because our base case will consist of circuits which are 3D, but have a small width in the third dimension.  One might say that the base case circuits have a 2D structure with small ``thickness" in the third dimension. Fortunately, this extended case is also covered by additional analysis from the \cite{BGM19} paper, in which the authors show, on pages 25 and 26 (of the arXiv version), that a slightly modified version of their algorithm can, in fact, compute output probabilities of 3D-local, \depthname circuits to additive error $\epsilon$ in time $n\epsilon^{-2}2^{O(d^2\cdot w)}$, where $w$ is the width of the third dimension of the circuit. For convenience, throughout the remainder of this paper every reference to Theorem 5 of \cite{BGM19} will refer instead to this modified algorithm which can handle these ``small-width" 3D-local, \depthname circuits. Additionally, when we refer to 2D-local circuits we are including, within that definition, 3D-local circuits where the width in the third dimension is $w= \polylog(n)$. The reason that this is a reasonable use of terminology in the context of this paper is that Theorem 5 of \cite{BGM19}, and the subsequent discussion, can handle these small-width 3D-local circuits in time exponential in the size of $w$ (which, for $w=\polylog(n)$, is quasi-polynomial).\mnote{While Theorem 5 of \cite{BGM19} only explicitly shows that algorithm $\mathcal{B}$ can compute such quantities for circuits that are exactly 2D local (with no ``thickness" in the third dimension), it is straightforward to adapt their techniques to handle the cases where the circuit has constant thickness in the third dimension, while only increasing the runtime by a constant factor in the exponent.  This is done by simply increasing the bond dimension in the Matrix Product States used in Theorem 5 of \cite{BGM19} by a constant factor. }
\end{remark}

Our approach for explicitly constructing $\ket{w_1}_F$  is based on a tool called a ``block-encoding", which aims to generate a unitary whose top left corner contains the Hermitian matrix $\rho_F \equiv \tr_B(\ket{\psi}\bra{\psi}_{B\cup F})$, or the integer powers $\rho_F^K$ for $K = \polylog(n)$.   In fact, under an assumption that $\lambda_1$ is sufficiently large, $\frac{1}{\lambda_1^K}\rho_F^K$ is already very close to a projector onto $\ket{w_1}_F$ (see Lemma \ref{clm:schmidtproj} for the explicit scaling).

\begin{lemma}[Lemma 45 of \cite{GSLW18}]\label{prop:blockencoding}
	The following is a 2D-local (see Remark \ref{rm:3Dbasecase}), \depthname circuit which gives a block encoding for $\rho_F \equiv \tr_B(\ket{\psi}\bra{\psi}_{B\cup F})$:
	
	\[(C^{\dagger}_{B \cup M \cup F} \otimes I_{F'})(I_{B \cup M }\otimes \text{SWAP}_{F F'}) (C_{B \cup M \cup F} \otimes I_{F'})\]
\end{lemma}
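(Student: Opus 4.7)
The plan is to verify the block-encoding identity by direct computation, essentially specializing Lemma 45 of \cite{GSLW18} to our setting. Taking the ancilla register to be $B \cup M \cup F$ (initialized and post-selected on $\ket{0}$) and the system register to be the auxiliary copy $F'$, the task reduces to establishing
\[
\bigl(\bra{0}_{BMF} \otimes I_{F'}\bigr)\,U\,\bigl(\ket{0}_{BMF} \otimes I_{F'}\bigr) \;=\; \rho_F,
\]
after identifying $F' \cong F$. Granted this identity (together with $\|\rho_F\| \le 1$), $U$ is a $(1, |B|+|M|+|F|, 0)$-block-encoding of $\rho_F$ in the standard sense.

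The first simplification is to absorb the outer $C_{BMF}$ and $C^\dagger_{BMF}$ factors into the ancilla projections. Setting $\ket{\chi}_{BMF} \equiv C_{BMF}\ket{0}_{BMF}$, so that $(C_{BMF}\otimes I_{F'})(\ket{0}_{BMF}\otimes I_{F'}) = \ket{\chi}_{BMF}\otimes I_{F'}$ and dually for the bra, the entire block collapses to
\[
\bigl(\bra{\chi}_{BMF} \otimes I_{F'}\bigr)\,\bigl(I_{BM} \otimes \text{SWAP}_{FF'}\bigr)\,\bigl(\ket{\chi}_{BMF} \otimes I_{F'}\bigr).
\]
Apply this to an arbitrary test vector $\ket{\phi}_{F'}$. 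Expanding $\ket{\chi} = \sum_{m,b,f}\alpha_{m,b,f}\ket{m,b,f}$ and using the SWAP identity $\text{SWAP}_{FF'}\bigl(\ket{f}_F\ket{\phi}_{F'}\bigr) = \ket{\phi}_F\ket{f}_{F'}$ produces the intermediate vector $\sum_{m,b,f}\alpha_{m,b,f}\ket{m,b}_{BM}\ket{\phi}_F\ket{f}_{F'}$. Contracting against $\bra{\chi}_{BMF}$ via $\bra{\chi}\bigl(\ket{m,b}_{BM}\ket{\phi}_F\bigr) = \sum_{f'}\alpha_{m,b,f'}^*\,\phi_{f'}$ collapses the $BMF$ indices and yields $\sum_f \bigl(\sum_{m,b,f'}\alpha_{m,b,f}\,\alpha_{m,b,f'}^*\,\phi_{f'}\bigr)\ket{f}_{F'} = \rho_F\ket{\phi}_{F'}$, after the identification $F \cong F'$.

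The conceptual content is that the SWAP against a free register implements the partial trace over the complementary subsystem: by swapping the $F$-part of the purification with the input $\ket{\phi}_{F'}$ and then re-contracting with $\bra{\chi}_{BMF}$, the summation over traced-out indices arises automatically, exactly as in the familiar Stinespring/purification picture. The main (mild) subtlety is bookkeeping — tracking which copy of $F$ versus $F'$ is touched by the SWAP — but there is no deeper analytic obstacle. For the locality and depth claims, $C_{BMF}$ and $C^\dagger_{BMF}$ are by construction depth-$O(d)$ sub-circuits of $C$ supported on the slab $B \cup M \cup F$, which has thickness $O(d)$ in the third dimension and hence qualifies as ``2D-local'' in the sense of Remark \ref{rm:3Dbasecase}; and $\text{SWAP}_{FF'}$ is a single layer of two-qubit gates, realizable with depth $O(1)$ once $F'$ is laid down adjacent to $F$ in the geometric embedding.
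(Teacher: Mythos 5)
Your computation is correct as a computation, but the operator it produces is not $\rho_F$ as the paper defines it, and the final identification is exactly where the argument breaks. Writing $\ket{\chi}=C_{B\cup M\cup F}\ket{0}_{B\cup M\cup F}=\sum_{m,b,f}\alpha_{m,b,f}\ket{m,b,f}$, your contraction yields the operator on $F'$ with matrix entries $\sum_{m,b}\alpha_{m,b,f}\,\alpha^{*}_{m,b,f'}$, i.e.\ $\tr_{B\cup M}\bigl(\ket{\chi}\bra{\chi}\bigr)$: the reduced state of $F$ with the $M$ register \emph{traced out}. The paper, however, defines $\ket{\psi}_{B\cup F}\equiv\bra{0}_{M}C_{B\cup M\cup F}\ket{0}_{B\cup M\cup F}$, so $\rho_F=\tr_B\bigl(\ket{\psi}\bra{\psi}_{B\cup F}\bigr)$ has entries $\sum_{b}\alpha_{0,b,f}\,\alpha^{*}_{0,b,f'}$: the $M$ index is \emph{projected onto} $0$, not summed over. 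These operators differ in general; for instance, with $B$ empty and $C$ preparing $\tfrac{1}{\sqrt{2}}\bigl(\ket{00}+\ket{11}\bigr)_{MF}$, your block equals $\tfrac12 I$ while $\rho_F=\tfrac12\ket{0}\bra{0}_F$. So the step ``$\dots=\rho_F\ket{\phi}_{F'}$'' is unjustified, and nothing in your derivation can produce the restriction to $m=0$, since in your expansion $M$ is a free summation index.

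This is precisely the subtlety the paper's own proof flags and resolves: the circuit, read via Lemma 45 of \cite{GSLW18} with ancilla $B\cup M\cup F$ and system $F'$, block-encodes $\tr_{B\cup M}\bigl(\ket{\chi}\bra{\chi}\bigr)$, and to obtain $\rho_F$ one must additionally post-select $M$ on $\ket{0}$, i.e.\ absorb the projector $\ket{0}\bra{0}_{M}$ (applied after $C_{B\cup M\cup F}$, exactly as the explicit $\bra{0_{M_i}}$ appears in the sandwich of Definition \ref{def:schmidtproj}) into the notion of block-encoding being used. To repair your proof you should either state and verify the lemma for this projected encoding (carrying $\ket{0}\bra{0}_M$ in the post-selection, which changes your contraction to keep only the $m=0$ terms), or else prove the weaker statement that the circuit block-encodes $\tr_{B\cup M}(\ket{\chi}\bra{\chi})$ and separately argue when this may be substituted for $\rho_F$ — note the lemma as stated assumes nothing about the weight of the slice, so the two are not interchangeable here. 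Your locality and depth discussion (interleaving $F'$ with $F$ so that $\text{SWAP}_{FF'}$ is depth $O(1)$, and the slab having $O(d)$ thickness) is fine and matches the paper.
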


\begin{proof}
	From Lemma 45 of \cite{GSLW18} it follows that the circuit $(C^{\dagger}_{B \cup M \cup F} \otimes I_{F'})(I_{B \cup M }\otimes \text{SWAP}_{F F'}) (C_{B \cup M \cup F} \otimes I_{F'})$ is a block-encoding of $\rho_F$.  Here $F'$ is a fresh register which is identical in size to $F$.  Note that $\text{SWAP}$ is not geometrically local a priori, but if we interleave the qubits of $F$ and $F'$ in the geometrically appropriate way, which we are free to do, then the $\text{SWAP}_{F F'}$ can be implemented in a geometrically local, depth-1 manner.  Thus the entire block-encoding is still given by a \depthname 2D-local circuit.  
	
	One additional subtlety:  We are neglecting to measure the $M$ register in the $\ket{0}$ basis here, but this is still a block-encoding for $\rho_F$ nonetheless.  The reason is that that measurement can be absorbed into the definition of block-encoding.  
\end{proof}

\smallskip

Following Lemma 53 of \cite{GSLW18}, we can now create a block encoding for the $K^{th}$ power of $\rho_F$ by creating $K$ distinct $F$ registers $F_1, ...., F_K$ (interwoven in the geometrically appropriate way just as in the proof of Lemma \ref{prop:blockencoding}), and multiplying $K$ different block encodings for $\rho_F$, each using a different one of the registers $F_i$, as so:

\begin{align} 
&\prod_{i=1}^K(C^{\dagger}_{B \cup M \cup F} \otimes I_{F_1,...,F_K})(I_{B \cup M }\otimes \text{SWAP}_{F F_i}) (C_{B \cup M \cup F} \otimes I_{F_1, ..., F_K}) 
\end{align}

We therefore have the following Lemma.

\begin{lemma}[Lemma 53 of \cite{GSLW18}]\label{lem:powerblockencoding}
	For any constant integer $K > 0$, the following is a 2D-local (see Remark \ref{rm:3Dbasecase}) quantum circuit which gives a block encoding for $\rho_F^K$, and has depth $O(dK^2)$:\mnote{changed lemma statement from O(K) to O($K^2$).  Should update in the proofs!}
	
\begin{align} \label{eq:powerblock}
&\prod_{i=1}^K(C^{\dagger}_{B \cup M \cup F} \otimes I_{F_1,...,F_K})(I_{B \cup M }\otimes \text{SWAP}_{F F_i}) (C_{B \cup M \cup F} \otimes I_{F_1, ..., F_K}) 
\end{align}
\end{lemma}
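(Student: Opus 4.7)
The plan is to obtain the correctness claim as a direct application of Lemma 53 of \cite{GSLW18}. That lemma says that if, for $i = 1, \ldots, K$, $U_i$ is a block-encoding of an operator $A_i$ on a fresh ancilla register, then the product $U_K U_{K-1} \cdots U_1$ is a block-encoding of $A_K A_{K-1} \cdots A_1$. By Lemma \ref{prop:blockencoding}, each factor
\[
(C^{\dagger}_{B \cup M \cup F} \otimes I_{F_i})\,(I_{B \cup M }\otimes \text{SWAP}_{F F_i})\,(C_{B \cup M \cup F} \otimes I_{F_i})
\]
is a block-encoding of $\rho_F$ using the fresh ancilla $F_i$. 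Since the ancilla registers $F_1, \ldots, F_K$ are disjoint, the product in Equation \ref{eq:powerblock} is therefore a block-encoding of $\rho_F \cdot \rho_F \cdots \rho_F = \rho_F^K$. (As in the proof of Lemma \ref{prop:blockencoding}, the implicit measurement of the $M$ register in the $\ket{0}$ basis can be absorbed into the definition of block-encoding and need not appear explicitly.)

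The new content is the depth bound $O(dK^2)$. I would split this into two contributions. The $2K$ applications of $C_{B \cup M \cup F}$ and $C^\dagger_{B \cup M \cup F}$ each have depth $O(d)$, and since they act as identity on the $F_j$ registers they can be composed without any extra routing overhead, giving $O(dK)$ depth from this part. The $K$ swap blocks $\text{SWAP}_{F F_i}$ require more care: to keep the circuit 2D-local (in the sense of Remark \ref{rm:3Dbasecase}) all $K+1$ registers $F, F_1, \ldots, F_K$ must be placed in a single geometric layout. Interleaving them in the natural way places $F$ and $F_i$ at distance at most $O(K)$, so each $\text{SWAP}_{F F_i}$ can be implemented by a local nearest-neighbor swap network of depth $O(K)$. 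Summing over $i = 1, \ldots, K$ gives $O(K^2)$ additional depth, and the total is $O(dK + K^2) = O(dK^2)$.

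The main obstacle will be making the geometric bookkeeping fully rigorous: one must verify simultaneously that (i) each $C_{B \cup M \cup F}$ remains a \depthname, geometrically-local circuit after inserting $K$ additional ancilla qubits at each physical site of $F$ (this is straightforward because no gate of $C_{B \cup M \cup F}$ touches any $F_i$, so the added qubits only inflate the third-dimensional ``thickness'' by a factor $O(K)$, which is absorbed in Remark \ref{rm:3Dbasecase}), and (ii) the routing of each $\text{SWAP}_{F F_i}$ through the interleaved layout indeed has depth $O(K)$ rather than something worse. Both follow from a concrete choice of layout, which I would spell out by placing the $K{+}1$ registers along a short chain in the inflated third dimension and using a standard sorting network for each SWAP. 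Once this layout is fixed, the block-encoding correctness from Lemma 53 of \cite{GSLW18} and the depth count above combine to give the stated lemma.
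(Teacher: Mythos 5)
Your proposal matches the paper's own proof: correctness of the block-encoding is obtained by repeated application of Lemma 53 of \cite{GSLW18} exactly as in the paper, and the depth bound comes from the same interleaving-plus-nearest-neighbor-SWAP argument. Your accounting is in fact slightly tighter ($O(dK+K^2)$ versus the paper's coarser ``extra factor of $K$'' giving $O(dK^2)$), but both land within the stated $O(dK^2)$, and the paper explicitly notes that any polynomial dependence on $K$ suffices for the final result.
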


\begin{proof}
	The fact that Equation \ref{eq:powerblock} gives a block encoding of $\rho_F^K$ follows by repeated application of Lemma 53 of \cite{GSLW18}. The circuit in Equation \ref{eq:powerblock}, disregarding geometric locality, has depth $O(dK)$ because it is a composition of $3K$ circuits each having depth $O(d)$.  The circuit can be made 2D-local if we choose, WLOG, for the $F_j$ registers to be interleaved with the other qubits in a manner that matches the 2D geometry.  Since there are now $K$ different $F_j$ registers, this can increase the depth of our circuit by another factor of $K$ (adding nearest-neighbor SWAP gates to ensure that every gate is exactly 2D-local at each step).  So the depth of the geometrically-local version of the circuit is O($dK^2$).  
	
	For brevity we do not include a more explicit description of this process for interleaving registers in order to make the circuit geometrically local because we believe that the reader will understand this process from the above description.  We note that, in any case, the scaling of the depth of the circuit in Equation \ref{eq:powerblock} could be any polynomial in $K$ and our final result in Theorem \ref{lem:quasi-poly-run-time} would still hold.
\end{proof}

   Stated concretely, the fact that the circuit in Equation \eqref{eq:powerblock} is a block encoding for $\rho_F^K$ simply means that, if we define $\ket{0_{ancilla}} = \ket{0_{F_1, ...F_k, M, B}}$, then:

\begin{align}
\rho_F^K = \bra{0_{ancilla}} \prod_{i=1}^K(C^{\dagger}_{B \cup M \cup F} \otimes I_{F_1,...,F_K})(I_{B \cup M }\otimes \text{SWAP}_{F F_i}) (C_{B \cup M \cup F} \otimes I_{F_1, ..., F_K}) \ket{0_{ancilla}}. 
\end{align}

\mnote{  Reiterating the earlier discussion:  Under our assumption that $\lambda_1 \geq (1- O(\log(n)/n))$, $\rho_F^K$ is already very close to an (unnormalized) projector onto $\ket{w_1}_F$, even when $K = \log(n)$. If we wish to know the norm of $\rho_F^K$ so that we can normalize it, we can efficiently approximate the norm as $\lambda_1^K$.  Note that we can compute this norm to polynomial additive error using Theorem 5 of \cite{BGM19}.  This is because the circuit in Equation \eqref{eq:powerblock} is given by a 2D local (not 3D local) depth $O(K)$ quantum circuit  (When $K = \log(n)$ this might technically require quasi-polynomial time.  However, ultimately, we will be able to set $K=d$ where $d$ is such that we want a $\tilde{O}(1/n^d)$ additive approximation to $\bra{0_{\all}} C \ket{0_{\all}}$.  Therefore,  I will ignore this issue for the moment.  If necessary we will be willing to accept a quasi-polynomial time algorithm since that is still progress.  It also seems like it should be polynomial time though.) Since the pure state $\ket{w_1}$ is only well-defined up to a choice of phase, it makes sense that we are only able to produce the outer product $\ket{w_1}\bra{w_1}$ this way rather than the pure state $\ket{w_1}$ itself (the choice of phase on the pure state is arbitrary).In this note, the above argument will be used as the proof of Lemma \ref{clm:schmidtproj} in Section \ref{sec:splitheavy}.}

\section{Divide and Conquer: Splitting Over Heavy Slices} \label{sec:splitheavy}

In this section we will prove a set of results which will allow us to precisely define and analyze the division step in our divide-and-conquer algorithm.  The process begins by identifying slices of the \depthname circuit $C$ which are appropriate division points.  Those are the slices which have ``heavy weight" as defined below.

Consider a set of $O(d)$-width 2D slices $K = \{K_i\}$ of the qubits of $C$, where each slice $K_i$ is parallel to the cut $B\cup M \cup F$ shown in Figure \ref{fig:cutcube2}, and is made up of three analogous sections $B_i, M_i, F_i$ (see Figure \ref{fig:2D-slices}). Let the the slices in $K$ be evenly spaced at an $O(d)$ distance apart, where this value is chosen to be large enough that the light cones of $K_i$ and $K_j$ are disjoint when $i \neq j$.  For concreteness we will say that the distance between slices $K_i$ is equal to $10d$.  We will also set the width of each of the sections $B_i, M_i, F_i$ to be $10d$, just as discussed in Definition \ref{def:bmf}.  This ensures that the properties stipulated by Definition \ref{def:bmf} are satisfied by  $B_i, M_i, F_i$.  

\begin{figure}[H] \label{fig:2D-slices}
	\begin{center}
		\includegraphics[scale=0.75]{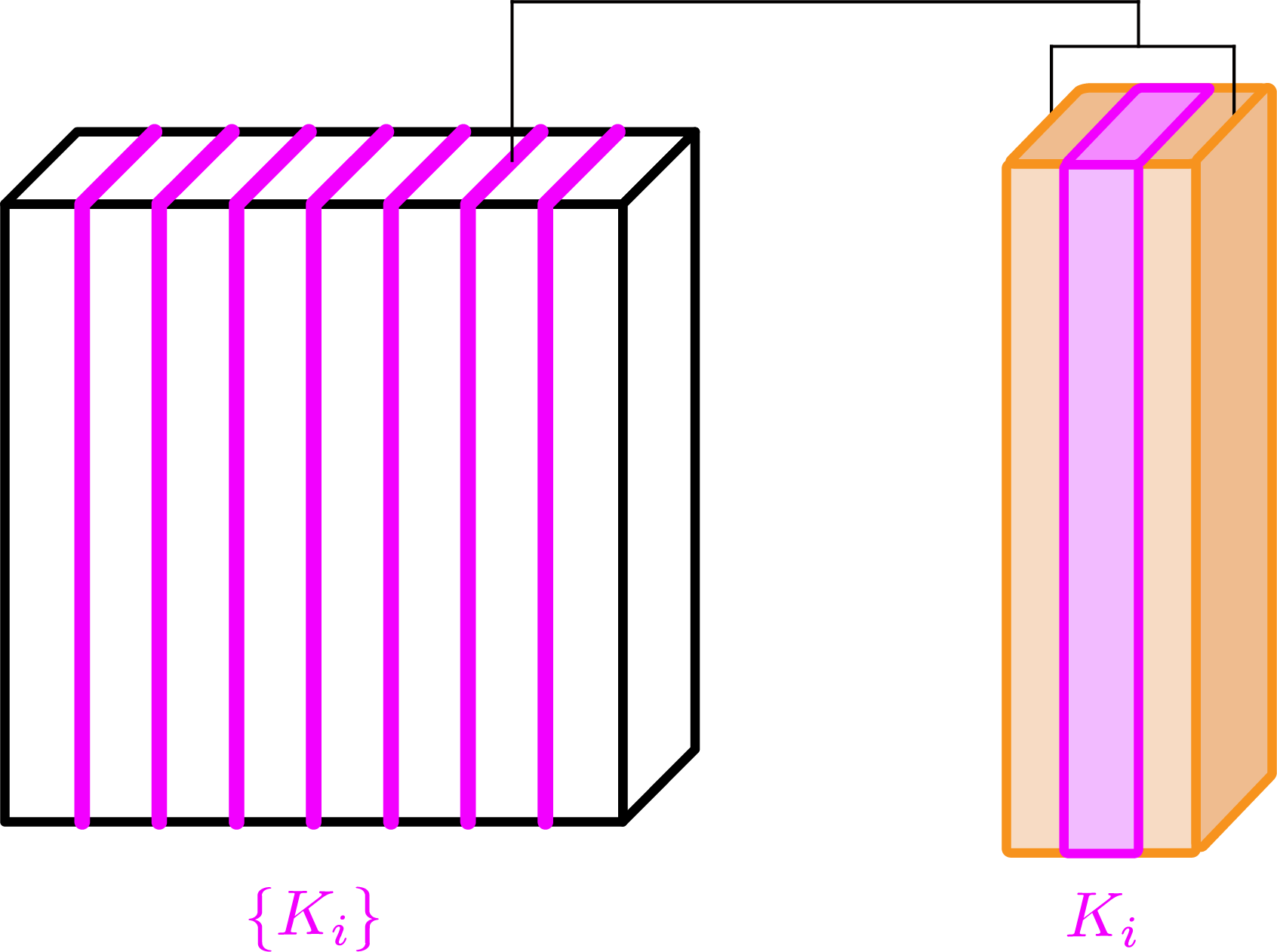}
		\caption{Set of slices $\{K_i\}$ \label{}}
	\end{center}
\end{figure}

\begin{definition}
	Let $I[M_i = 0]$ be the indicator random variable for the event that all of the qubits in $M_i$ collapse to $0$ when measured in the computational basis. Here joint probabilities are defined according the probability distribution $p_{total}$ produced by measuring $C  \ket{0^{\otimes n}}$ in the computational basis.  Let $p_{M_i = 0} := \mathbb{E}_{p_{total}}[I[M_i = 0]]$ be the probability that all of the bits in $K_i$ evaluate to $0$ according to the distribution $p_{total}$.
\end{definition}

\begin{lemma}\label{clm:independence}
	The $I[M_i = 0]$ are independent random variables.   Therefore, 
	\[p_{total}(M_i = 0 \text{ } \forall i) = \prod_i p_{total}(M_i = 0).  \]
\end{lemma}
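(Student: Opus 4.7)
The plan is to establish the factorization via a light-cone argument that exploits the way the slices $K_i$ were chosen. I will show that the reduced density matrix of $C\ket{0_{\all}}$ on $\bigcup_i M_i$ factors as a tensor product $\bigotimes_i \rho_{M_i}$; this immediately implies full independence of the indicators $I[M_i = 0]$, and the stated product formula is the all-zeros special case.

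First I would invoke the width choices from Definition \ref{def:bmf} (applied slice-by-slice) to argue that the reverse light-cone of each $M_i$ under $C$ is contained in $K_i = B_i \cup M_i \cup F_i$. Concretely, since $C$ has depth $d$ and is geometrically local, while $B_i$ and $F_i$ each have width $10d$, the Heisenberg-evolved projector $C^\dagger \Pi_{M_i} C$, where $\Pi_{M_i} := \ket{0_{M_i}}\bra{0_{M_i}} \otimes I$, has support contained in $K_i$. Thus I may write $C^\dagger \Pi_{M_i} C = O_i \otimes I_{\overline{K_i}}$ for some Hermitian $O_i$ supported on $K_i$, and in particular $p_{M_i = 0} = \bra{0_{K_i}} O_i \ket{0_{K_i}}$.

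Next I would use the inter-slice spacing: by construction, the light-cones of the $K_i$'s are pairwise disjoint for $i \neq j$, so in particular the $K_i$'s themselves are disjoint. Since the projectors $\Pi_{M_i}$ act on disjoint qubits they commute, giving
\begin{equation*}
C^\dagger \Big(\prod_i \Pi_{M_i}\Big) C \;=\; \prod_i \big(C^\dagger \Pi_{M_i} C\big) \;=\; \prod_i \big(O_i \otimes I_{\overline{K_i}}\big) \;=\; \Big(\bigotimes_i O_i\Big) \otimes I_{\overline{\bigcup_i K_i}},
\end{equation*}
where the last equality uses disjointness of the $K_i$'s so that the $O_i$'s act on mutually disjoint registers. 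Taking the matrix element against the product state $\ket{0_{\all}}$ yields $p_{total}(M_i = 0\ \forall i) = \prod_i \bra{0_{K_i}} O_i \ket{0_{K_i}} = \prod_i p_{total}(M_i = 0)$. Running the same argument with each $\Pi_{M_i}$ replaced by any computational-basis projector on $M_i$ shows that the joint distribution on $\bigcup_i M_i$ is a product distribution, which is full independence of the $I[M_i = 0]$.

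I do not expect a substantive obstacle here: the lemma is essentially a consequence of the slice geometry, and the widths and spacings were fixed precisely so that the light-cone factorization goes through. The one point needing mild care is to distinguish between the reverse light-cone of $M_i$ (which must sit inside the single slice $K_i$) and the light-cones of entire $K_i$'s (which must be disjoint across different $i$); both are guaranteed by the $10d$ width of $B_i, M_i, F_i$ and the $10d$ inter-slice spacing, together with the depth-$d$ geometric locality of $C$.
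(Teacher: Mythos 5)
Your proof is correct and is essentially the paper's argument: the paper's own proof is a one-line appeal to the light-cone separation of the slices $K_i$, and your write-up simply makes explicit the mechanics behind it (the Heisenberg-evolved projectors $C^{\dagger}\Pi_{M_i}C$ are supported on the disjoint regions $K_i$, so their expectation in the product state $\ket{0_{\all}}$ factorizes). Nothing to fix.
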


\begin{proof}
The $I[M_i = 0]$ are independent random variables because the cuts $K_i$ are light-cone separated by definition.  The desired results follows. 
\end{proof}

Note that the variables $I[M_i = 0]$ may well be \emph{conditionally} dependent when conditioned on the outcomes of measuring the qubits in between the $K_i$ slices.  Indeed, that's what makes the global problem non-trivial in the first place.  But, when measuring the $K_i$ slices alone we see that the $I[M_i = 0]$ are independent as stated in Lemma \ref{clm:independence}.  

\
\begin{lemma}  \label{clm:findheavy}
If $|\bra{0^{\otimes n}}C   \ket{0^{\otimes n}}| > |1/q(n)|$, then, for any $0 \leq h \leq 1$, $h|K|$ of the slices $K_i$ in $K$ have the property that:

\begin{equation}
p_{total}(M_i = 0) \geq (|1/q(n)|)^{\frac{1}{(1-h)|K|}} \label{eq:heavyslice}
\end{equation}	

  We will let $K_{heavy}$ be the subset of $K$ consisting of those $K_i$ satisfying Equation \eqref{eq:heavyslice}.
\end{lemma}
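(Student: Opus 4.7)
The plan is to prove the lemma by a straightforward pigeonhole-style argument on the logarithms of the marginal probabilities, using Lemma~\ref{clm:independence} to decouple the slices.

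First, I would invoke Lemma~\ref{clm:independence} to convert the joint marginal into a product, $p_{total}(M_i = 0 \text{ } \forall i) = \prod_i p_{total}(M_i = 0)$. Next, since the joint-zero event contains the specific computational-basis outcome in which every qubit collapses to $0$, we have $p_{total}(M_i = 0 \text{ } \forall i) \geq |\bra{0^{\otimes n}}C\ket{0^{\otimes n}}|^2$, which under the hypothesis exceeds $1/q(n)$ (up to the standard relabeling of the polynomial $q$). Taking logs then gives
\[
\sum_i \log p_{total}(M_i = 0) \;\geq\; \log(1/q(n)).
\]

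Then I would close by contradiction. Suppose strictly fewer than $h|K|$ slices satisfy \eqref{eq:heavyslice}. Then strictly more than $(1-h)|K|$ indices $i$ have
\[
\log p_{total}(M_i = 0) \;<\; \frac{1}{(1-h)|K|}\log(1/q(n)).
\]
Because every term $\log p_{total}(M_j = 0)$ is nonpositive (each marginal lies in $[0,1]$), discarding the ``good'' slices only decreases the sum, so restricting to the ``bad'' slices yields
\[
\sum_i \log p_{total}(M_i = 0) \;<\; (1-h)|K|\cdot \frac{\log(1/q(n))}{(1-h)|K|} \;=\; \log(1/q(n)),
\]
contradicting the lower bound above. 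Hence at least $h|K|$ slices must satisfy \eqref{eq:heavyslice}.

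The only subtlety to track is the sign bookkeeping: $\log(1/q(n))$ is negative for $q(n) > 1$, so multiplying it by a larger positive count produces a \emph{more} negative sum, which is exactly what powers the contradiction. Beyond this, the proof is a one-line quantitative pigeonhole on a product of factors in $[0,1]$ combined with the light-cone independence of Lemma~\ref{clm:independence}, so I do not anticipate any genuine obstacle.
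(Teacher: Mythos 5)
Your proof is correct and takes essentially the same route as the paper's: factor the joint event via Lemma~\ref{clm:independence}, pass to logarithms, and pigeonhole on the nonpositive terms (the paper states the counting step directly, you phrase it as a contradiction, which is the same argument). You are in fact slightly more careful than the paper on the amplitude-versus-probability point, since $p_{total}(M_i = 0 \ \forall i) \geq |\bra{0^{\otimes n}}C\ket{0^{\otimes n}}|^2$ (absorbed into the polynomial $q$), whereas the paper bounds it directly by $|\bra{0^{\otimes n}}C\ket{0^{\otimes n}}|$.
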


\begin{proof}
	The proof of Lemma \ref{clm:findheavy} is given in Appendix \ref{appendix:lemma-proofs}.  
\end{proof}

%
\begin{definition}\label{def:slice}
	We define any particular slice $K_i$ as $K_i=B_i\cup M_i\cup F_i$., where $B_i, M_i, F_i$ are the analogous regions to $B,M,F$ (respectively) in Figure \ref{fig:cutcube2}. These slices are depicted in Figure \ref{fig:2D-slices}. Let $\ket{\psi}_{B_i \cup F_i}$   be analogous to $\ket{\psi}_{B\cup F} \equiv \bra{0}_M C_{B \cup M \cup F}\ket{0}_{B \cup M \cup F}$. Let $L_i$ and $R_i$ be sets of qubits analogous to the sets $L$ and $R$.  We define the unitaries $C_{B_i\cup M_i\cup F_i}$, $C_{wrap_i}$, $C'_{L_i} \equiv C_{{L-Wrap}_i}^{\dagger}\circ C_{L_i}$, and $C'_{R_i} \equiv C_{{R-Wrap}_i}^{\dagger}\circ C_{R_i}$ exactly as given in Definition \ref{def:cutcircuit} for the case of a single cut.
\end{definition}

	\begin{lemma} \label{lem:highschmidtnew}
		For any slice $K_i \in K_{heavy}$ satisfying:
		
				\begin{equation}\label{eq:exactweight}
				p_{total}(M_i = 0) \geq 1 - e(n),
				\end{equation}	
		 
		the top Schmidt coefficient of $\ket{\psi}_{B_i \cup F_i}$ satisfies $\lambda_1^i \geq 1 - O(e(n))$.  (Where the Schmidt decomposition is taken across the partition $B_i, F_i$.)  \mnote{Consider replacing e(n) with just some number x that doesn't need to scale with n.  Find explicit constant.  This might be less confusing.}
	\end{lemma}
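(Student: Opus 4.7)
The plan is to prove that $\ket{\psi}_{B_i \cup F_i}$ is close to a product state across the bipartition $B_i \mid F_i$, from which the Schmidt coefficient bound follows by standard fidelity and eigenvalue-perturbation estimates. The main technical step is a lightcone-based factorization of $C_{B_i \cup M_i \cup F_i}$; once that is established the rest is routine.

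First, I would use the geometric structure built into Definition \ref{def:bmf} to argue that $C_{B_i \cup M_i \cup F_i}\ket{0}_{B_i \cup M_i \cup F_i}$ factorizes as a product across three sub-regions of $M_i$. Since $M_i$ has width $10d$ while the circuit is geometrically local with depth at most $d$, no single gate in $C_{B_i \cup M_i \cup F_i}$ has a forward lightcone touching both $B_i$ and $F_i$; the gates therefore split, by the region of outputs they affect, into three groups acting on pairwise disjoint qubit sets. Accordingly, one may partition $M_i = M_i^L \sqcup M_i^C \sqcup M_i^R$ and write $C_{B_i \cup M_i \cup F_i} = U_L \otimes V_C \otimes U_R$ with $U_L$ acting on $B_i \cup M_i^L$, $V_C$ on $M_i^C$, and $U_R$ on $F_i \cup M_i^R$. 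Consequently $C_{B_i \cup M_i \cup F_i}\ket{0}$ is a tensor product of three pieces, and tracing out $M_i$ produces a product density matrix $\sigma_{B_i} \otimes \sigma_{F_i}$ on $B_i \cup F_i$.

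Next, I would relate this traced-out state to $\ket{\psi}\bra{\psi}_{B_i \cup F_i}$. Expanding the partial trace in the computational basis on $M_i$ yields
\begin{equation}
\tr_{M_i}\!\left(C_{B_i \cup M_i \cup F_i}\ket{0}\bra{0}C^\dagger_{B_i \cup M_i \cup F_i}\right) = \ket{\psi}\bra{\psi}_{B_i \cup F_i} + \Delta,
\end{equation}
where $\Delta \geq 0$ has trace $1 - \|\ket{\psi}_{B_i \cup F_i}\|^2$. Because $C_{L_i \cup R_i}$ does not act on $M_i$, a short calculation shows $\|\ket{\psi}_{B_i \cup F_i}\|^2 = p_{total}(M_i = 0) \geq 1 - e(n)$, so $\|\Delta\|_1 \leq e(n)$ and $\ket{\psi}\bra{\psi}_{B_i \cup F_i}$ lies within $e(n)$ in trace norm of the product state $\sigma_{B_i} \otimes \sigma_{F_i}$.

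Finally, I would convert closeness to a product state into the Schmidt coefficient bound. Passing to the normalized $\ket{\widetilde{\psi}} = \ket{\psi}_{B_i \cup F_i}/\|\ket{\psi}_{B_i\cup F_i}\|$ costs only $O(e(n))$ additional trace-norm error, so Fuchs--van de Graaf yields $\bra{\widetilde{\psi}}(\sigma_{B_i} \otimes \sigma_{F_i})\ket{\widetilde{\psi}} \geq 1 - O(e(n))$. Expanding in the spectral bases of $\sigma_{B_i}$ and $\sigma_{F_i}$ bounds this above by $\lambda_{\max}(\sigma_{B_i})\cdot \lambda_{\max}(\sigma_{F_i})$, and since each factor is at most $1$, both must be at least $1 - O(e(n))$. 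Because $\tr_{F_i}(\ket{\widetilde{\psi}}\bra{\widetilde{\psi}})$ sits within $O(e(n))$ of $\sigma_{B_i}$ in trace norm, a standard Weyl/Mirsky eigenvalue perturbation bound transfers this to $\lambda_{\max}(\tr_{F_i}(\ket{\widetilde{\psi}}\bra{\widetilde{\psi}})) \geq 1 - O(e(n))$, which (together with $\|\ket{\psi}_{B_i\cup F_i}\| \geq 1 - e(n)$ to absorb the unnormalization factor) gives $\lambda_1^i \geq 1 - O(e(n))$, as desired.
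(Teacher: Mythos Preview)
Your overall strategy is right and your steps from the trace-norm estimate onward are fine, but there is a genuine gap in your first step: the tensor-product factorization $C_{B_i \cup M_i \cup F_i} = U_L \otimes V_C \otimes U_R$ that you assert does not hold. It is true that no single gate has a forward lightcone reaching both $B_i$ and $F_i$, but it does \emph{not} follow that the three resulting gate-groups act on pairwise disjoint qubit sets. In a standard brickwork layout, a layer-$1$ gate near the left edge of $M_i$ (whose forward lightcone reaches $B_i$, hence in your group $L$) shares a qubit with the adjacent layer-$2$ gate to its right (whose forward lightcone stays inside $M_i$, hence in your group $C$); chaining such overlaps shows that the gates of $C_{B_i \cup M_i \cup F_i}$ form one connected block, and no spatial partition $M_i = M_i^L \sqcup M_i^C \sqcup M_i^R$ yields a tensor factorization of the unitary.

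What \emph{does} survive --- and is exactly what the paper uses --- is the weaker statement that $\tr_{M_i}\!\big(C_{B_i \cup M_i \cup F_i}\ket{0}\bra{0}C_{B_i \cup M_i \cup F_i}^\dagger\big)$ is an exact product $\sigma_{B_i}\otimes\sigma_{F_i}$. The correct justification is a Heisenberg-picture lightcone argument: the \emph{backward} lightcones of $B_i$ and of $F_i$ inside $C_{B_i \cup M_i \cup F_i}$ are supported on disjoint sets of \emph{input} qubits, so for any observables $O_{B_i}, O_{F_i}$ the evolved operators $C^\dagger O_{B_i} C$ and $C^\dagger O_{F_i} C$ act on disjoint tensor factors of the product input $\ket{0}\bra{0}$, forcing all $B_i$--$F_i$ correlators to factorize. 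Once you substitute this argument for your circuit factorization, the remainder of your proof goes through unchanged. For comparison, after the same product-state step the paper finishes differently from you, via H\"older's inequality: it bounds $(\lambda_1^i)^2 \ge \tr(\rho_{B_i}^2)\,\tr(\rho_{F_i}^2)=\tr\!\big((\rho_{B_i}\otimes\rho_{F_i})^2\big)\approx\tr\!\big((\ket{\psi}\bra{\psi})^2\big)\ge 1-O(e(n))$; your Fuchs--van de Graaf plus Weyl-perturbation route is an equally valid alternative.
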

	
	\begin{proof}
		The proof of Lemma \ref{lem:highschmidtnew} is given in Appendix \ref{appendix:lemma-proofs}.
	\end{proof}

The proof of Lemma \ref{lem:highschmidtnew} (See Appendix \ref{appendix:lemma-proofs}) suggests a way to perform a division step, dividing the original computational problem into the product of two new problems, but at the cost of an additive error that scales like $\Theta(e(n))$.  But, note that, if we want, say, $1/n^d$ additive error for $d \geq 2$, then this additive error term is way too large (in some cases $e(n)$ scales like $1/\log(n)$).  This means that, a priori, we cannot even afford to make use of Lemma \ref{lem:highschmidtnew} one single time!  However, Lemma \ref{clm:breaktoproduct} below shows how we can use this type of division step to divide the circuit at $\Delta$ different, light-cone separated cuts, $K_i$, simultaneously, and thereby achieve additive error that scales like $e(n)^{\Delta}$.

\begin{definition}\label{def:schmidtproj}
For any $K_i$ define the following two operators inspired by the block-encoding approach in Section \ref{sec:simpleexamp}:
	
	\begin{align} 
	&\Proj_{F_i}^K \equiv \nonumber  \frac{1}{\lambda_1^K}\bra{0^{B_i, M_i, F^1_i,...F^k_i}}\prod_{j=1}^K(C^{\dagger}_{B_i \cup M_i \cup F_i} \otimes I_{F^1_i,...,F^K_i})(I_{B_i \cup M_i }\otimes \text{SWAP}_{F_i F^j_i}) (C_{B_i \cup M_i \cup F_i} \otimes I_{F^1_i, ..., F^K_i})\ket{0^{B_i, M_i, F^1_i,...F^k_i}} \\
	&\text{ and } \\
	&	\Proj_{B_i}^K \equiv \nonumber  \frac{1}{\lambda_1^K}\bra{0^{F_i, M_i, B_i^1,...B_i^K}}\prod_{j=1}^K(C^{\dagger}_{B_i \cup M_i \cup F_i} \otimes I_{B_i^1,...,B_i^K})(I_{F_i \cup M_i }\otimes \text{SWAP}_{B_i B_i^j}) (C_{B_i \cup M_i \cup F_i} \otimes I_{B_i^1, ..., B_i^K})\ket{0^{F_i, M_i, B_i^1,...B_i^K}} 
	\end{align}  
	
	Here the first equation gives a linear operator on $F_i$, and the second equation gives a linear operator on $B_i$.  The registers $F_i^j$ (resp. $B_i^j$) are dummy registers that are used to create $K$ block encodings of the density matrix of the $F_i$ (resp. $B_i$) \mnote{is that the correct order, or flipped?} register of the state $C_{B_i \cup M_i \cup F_i}\ket{0^{F_i, M_i, B_i}}$.  These $K$ block encodings are then composed (multiplied) with each other in such a manner that they produce the block encoding of the $K^{th}$ power of the density matrix, as described in Section \ref{sec:simpleexamp}.
	
\end{definition}

\begin{lemma} \label{clm:schmidtproj}
	For any $K_i \in K_{heavy}$,
	
	\begin{align}
		&\|  \Proj_{F_i}^K -\ket{w_1}\bra{w_1}_{F_i}  \|_1 \leq \projerror \label{eq:Fprojclose1} \\
		& \text{and}  \nonumber\\
		&\|  \Proj_{B_i}^K -\ket{v_1}\bra{v_1}_{B_i}  \|_1 \leq \projerror \label{eq:Bprojclose1}
	\end{align}
	
	where $\projerror \equiv \left (\frac{1 - \lambda_1^i}{\lambda_1^i} \right )^K$, and $\ket{w_1}\bra{w_1}_{F_i}$, $\ket{v_1}\bra{v_1}_{B_i} $ are the projectors onto the top Schmidt vectors of $\ket{\psi}_{B_i \cup F_i}$ in $F_i$ and $B_i$ respectively.
\end{lemma}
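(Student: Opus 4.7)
The plan is to identify $\Proj_{F_i}^K$ with a normalized integer power of the reduced density matrix $\rho_{F_i}\equiv \tr_{B_i}(\ket{\psi}\bra{\psi}_{B_i\cup F_i})$ and then read off the error directly from the spectrum. Applying Lemma~\ref{lem:powerblockencoding} at the slice $K_i$, the product of circuits sandwiched between the ancilla bra and ket in Definition~\ref{def:schmidtproj} is precisely a block encoding of $\rho_{F_i}^K$ on the main register $F_i$, with ancilla registers $B_i,M_i,F_i^1,\ldots,F_i^K$. Consequently $\Proj_{F_i}^K=\lambda_1^{-K}\rho_{F_i}^K$, and by the same argument $\Proj_{B_i}^K=\lambda_1^{-K}\rho_{B_i}^K$ with $\rho_{B_i}\equiv\tr_{F_i}(\ket{\psi}\bra{\psi}_{B_i\cup F_i})$.

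Next I would invoke the Schmidt decomposition of the sub-normalized vector $\ket{\psi}_{B_i\cup F_i}=\sum_j \sqrt{\lambda_j}\,\ket{v_j}_{B_i}\otimes\ket{w_j}_{F_i}$ with $\lambda_1\geq\lambda_2\geq\cdots\geq 0$. This simultaneously diagonalizes $\rho_{F_i}$ and $\rho_{B_i}$ in their respective Schmidt bases with the common eigenvalue sequence $\{\lambda_j\}$. Substituting and subtracting off the leading term gives
\[
\Proj_{F_i}^K - \ket{w_1}\bra{w_1}_{F_i} \;=\; \sum_{j\geq 2}\left(\frac{\lambda_j}{\lambda_1}\right)^{K}\ket{w_j}\bra{w_j}_{F_i}.
\]
Since the $\ket{w_j}$ are orthonormal, the right-hand side is itself a spectral decomposition of a positive semidefinite operator, so its trace norm is simply the sum $\sum_{j\geq 2}(\lambda_j/\lambda_1)^K$.

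The final step is to bound this tail by $((1-\lambda_1)/\lambda_1)^K$. For this I would use two observations, both immediate from $\sum_j\lambda_j=\tr\rho_{F_i}=\|\ket{\psi}_{B_i\cup F_i}\|^2\leq 1$: first, $\lambda_j\leq 1-\lambda_1$ for every $j\geq 2$; and second, $\sum_{j\geq 2}\lambda_j\leq 1-\lambda_1$. Combining them,
\[
\sum_{j\geq 2}\lambda_j^K \;\leq\; (1-\lambda_1)^{K-1}\sum_{j\geq 2}\lambda_j \;\leq\; (1-\lambda_1)^K,
\]
which, after dividing by $\lambda_1^K$, is exactly~\eqref{eq:Fprojclose1}. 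The bound~\eqref{eq:Bprojclose1} follows by running the same argument in the $B_i$-Schmidt basis.

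The steps that require real care, rather than being genuine obstacles, are: keeping the ancilla/main-register assignment correct when transplanting Lemma~\ref{lem:powerblockencoding} to the slice $K_i$ (the SWAPs must re-use the single register $F_i$ while cycling through the fresh copies $F_i^j$); and remembering that $\ket{\psi}_{B_i\cup F_i}$ is sub-normalized, so $\tr\rho_{F_i}\leq 1$ rather than $=1$. The latter is what licenses the two tail bounds above and explains why the final answer has a clean $1-\lambda_1$ factor in the numerator of $g(n)$, instead of something involving the full norm of $\ket{\psi}$.
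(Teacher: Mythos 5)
Your proof is correct and takes essentially the same route as the paper's: both identify $\Proj_{F_i}^K$ with $\lambda_1^{-K}\rho_{F_i}^K$ via the block-encoding lemma and then bound the spectral tail of $\left(\rho_{F_i}/\lambda_1\right)^K$ beyond the leading Schmidt projector. Your explicit eigenvalue computation, using sub-normalization to get $\sum_{j\ge 2}\lambda_j^K \le (1-\lambda_1)^{K}$, is just a slightly more careful rendering of the paper's step bounding $E^K$ where $E \equiv \rho_{F_i}/\lambda_1 - \ket{w_1}\bra{w_1}_{F_i}$.
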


\begin{proof}
	The proof of Lemma \ref{clm:schmidtproj} is given in Appendix \ref{appendix:lemma-proofs}.  
\end{proof}
\begin{definition} \label{def:wrappedpi}
	Define $\Pi^K_{F_i} \equiv C_{Wrap_i} \Proj^K_{F_i} C_{Wrap_i}^{\dagger}$.  
\end{definition}

Note that the operator $\Pi^K_{F_i}$ is in tensor product with $\ket{0_{M_i}}$ (it acts as the identity on the $M_i$ register since $C_{Wrap_i}$ and  $\Proj^K_{F_i}$ act trivially on that register).

\begin{definition}\label{def:psi-states}
	Let $\sigma\in\mathcal{P}[\Delta]\setminus\emptyset$ where $[\Delta]=\{1,\dots,\Delta\}$. Define the unormalized states $$\ket{\Psi_\sigma}=\otimes_{j \in \sigma }\Pi^K_{F_j} \otimes_{i\in [\Delta]} \bra{0_{M_i}}C\ket{0_{\all}}$$
And, $$\ket{\Psi_\emptyset}=\otimes_{i\in [\Delta]} \bra{0_{M_i}}C\ket{0_{\all}}$$
\end{definition}

\begin{lemma} \label{clm:expansiontrick}
	 Consider a set $K_{heavy}$ of slices such that, for every $K_i \in K_{heavy}$, $\ket{\psi}_{B_i \cup F_i}$ satisfies $\lambda_1^i \geq 1 - e(n)$, and such that for any $K_i, K_j \in K_{heavy}$, the operators $\Pi^K_{F_i}$ and $\Pi^K_{F_j}$ are light-cone separated whenever $i \neq j$.  Then, for any set of $\Delta$ slices, $\{K_i\}_{i \in [\Delta]} \subseteq K_{heavy}$, we have that:
	
	\begin{align}
	&\left \| \sum_{\sigma\in\mathcal{P}([\Delta])} (-1)^{\abs{\sigma}} \ket{\Psi_\sigma}\bra{\Psi_\sigma}\right \| = \left \|	\ket{\Psi_\emptyset}\bra{\Psi_\emptyset} -  \sum_{\sigma\in\mathcal{P}([\Delta])\setminus\emptyset} (-1)^{\abs{\sigma}+1} \ket{\Psi_\sigma}\bra{\Psi_\sigma}\right \| \leq (2e(n)+2\projerror)^{\Delta}, 
	\end{align}
	where $\projerror \equiv \left (\frac{1 - \lambda_1^i}{\lambda_1^i} \right )^K$.

\end{lemma}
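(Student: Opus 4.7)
The plan is an inclusion--exclusion / telescoping argument in which I first replace the approximate projectors $\Pi^K_{F_j}$ by the idealized projectors $P_j:=C_{Wrap_j}\ket{w_1}\bra{w_1}_{F_j}C_{Wrap_j}^{\dagger}$ (whose operator-norm distance from $\Pi^K_{F_j}$ is at most $g(n)$ by Lemma~\ref{clm:schmidtproj}), and then exploit the commutativity guaranteed by light-cone separation. After this substitution, an $|\sigma|$-step expansion of each $\ket{\Psi_\sigma}\bra{\Psi_\sigma}=\bigl(\prod_{j\in\sigma}\Pi^K_{F_j}\bigr)\ket{\Psi_\emptyset}\bra{\Psi_\emptyset}\bigl(\prod_{j\in\sigma}\Pi^K_{F_j}\bigr)$ produces an ``ideal'' alternating sum (built only from the $P_j$) plus cross terms that each carry at least one factor of $g(n)$; the binomial expansion of $(2e(n)+2g(n))^{\Delta}$ tracks exactly the combined weights from these $g(n)$ factors and the $e(n)$ factors bounded below.

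The ideal projectors $P_j$ pairwise commute because their supports lie in the pairwise disjoint light cones of distinct heavy slices $K_j$. I therefore resolve the identity as $I=\sum_{\tau\subseteq[\Delta]}R_\tau$, where $R_\tau:=\prod_{j\in\tau}P_j\prod_{j\notin\tau}Q_j$ is an orthogonal projector and $Q_j:=I-P_j$. Setting $\ket{v_\tau}:=R_\tau\ket{\Psi_\emptyset}$ gives mutually orthogonal vectors with $\prod_{j\in\sigma}P_j\ket{\Psi_\emptyset}=\sum_{\tau\supseteq\sigma}\ket{v_\tau}$. Substituting into the alternating sum and applying the elementary identity $\sum_{\sigma\subseteq T}(-1)^{|\sigma|}=\mathbf{1}[T=\emptyset]$ collapses all the cancellations and yields
\[
\sum_{\sigma\subseteq[\Delta]}(-1)^{|\sigma|}\bigl(\prod_{j\in\sigma}P_j\bigr)\ket{\Psi_\emptyset}\bra{\Psi_\emptyset}\bigl(\prod_{j\in\sigma}P_j\bigr)=\sum_{\tau_1\cap\tau_2=\emptyset}\ket{v_{\tau_1}}\bra{v_{\tau_2}}.
\]
The disjointness constraint $\tau_1\cap\tau_2=\emptyset$ forces each index $j\in[\Delta]$ to be missing from at least one of $\tau_1,\tau_2$, so every term in the sum carries a ``full complement'' of $Q_j$ factors distributed across its two sides.

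To bound the norm of each $\ket{v_\tau}$, I would exploit the local tensor-product structure of $\ket{\Psi_\emptyset}$: the circuits $C_{B_j\cup M_j\cup F_j}$ act on disjoint qubit sets by light-cone separation, so after unwrapping by $\prod_j C_{Wrap_j}^{\dagger}$ the state becomes $C_{\text{outer}}\bigl(\bigotimes_j\ket{\psi}_{B_j\cup F_j}\otimes\ket{0_{outside}}\bigr)$, where $C_{\text{outer}}$ is a unitary that does not touch any $F_j$ and hence commutes with every $\ket{w_1}\bra{w_1}_{F_j}$. In this unwrapped frame each $Q_j$ acts as $I-\ket{w_1}\bra{w_1}_{F_j}$, which annihilates the top Schmidt component of $\ket{\psi}_{B_j\cup F_j}$ and leaves tail mass $\sum_{k\ge 2}(\lambda_k^{(j)})^{2}\le 2e(n)$ by Lemma~\ref{lem:highschmidtnew}; since distinct slices contribute independent tensor factors, I obtain $\|\prod_{j\in S}Q_j\ket{\Psi_\emptyset}\|^{2}\le(2e(n))^{|S|}\|\Psi_\emptyset\|^{2}$, and hence $\|\ket{v_\tau}\|^{2}\le(2e(n))^{|\tau^c|}\|\Psi_\emptyset\|^{2}$.

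The main obstacle I anticipate is assembling these per-term bounds into the claimed aggregate exponent $\Delta$: a naive triangle inequality over the $3^{\Delta}$ disjoint pairs $(\tau_1,\tau_2)$ only delivers a $(2e(n))^{\Delta/2}$-type estimate, so obtaining the full exponent $\Delta$ requires exploiting the orthogonality of the $\ket{v_\tau}$'s. The cleanest route I see is to compute $SS^{\dagger}=\sum_{\tau_2}\|v_{\tau_2}\|^{2}\,\ket{V_{\tau_2^c}}\bra{V_{\tau_2^c}}$ where $\ket{V_{\tau_2^c}}:=\prod_{j\in\tau_2}Q_j\ket{\Psi_\emptyset}$, and then bound $\|SS^{\dagger}\|$ by pairing each ``missing-from-$\tau_1$'' $Q$-factor with its ``missing-from-$\tau_2$'' partner so that every $j\in[\Delta]$ contributes one full factor of $2e(n)$ rather than merely $\sqrt{2e(n)}$; combining this with the $g(n)$ error accumulated in the idealization step then yields the stated $(2e(n)+2g(n))^{\Delta}$.
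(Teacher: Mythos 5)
Your inclusion--exclusion identity $\sum_{\sigma\subseteq[\Delta]}(-1)^{|\sigma|}\bigl(\prod_{j\in\sigma}P_j\bigr)\ket{\Psi_\emptyset}\bra{\Psi_\emptyset}\bigl(\prod_{j\in\sigma}P_j\bigr)=\sum_{\tau_1\cap\tau_2=\emptyset}\ket{v_{\tau_1}}\bra{v_{\tau_2}}$ and the per-vector bound $\|v_\tau\|^2\le(2e(n))^{|\tau^c|}$ are sound, but the final assembly step is a genuine gap, and the fix you sketch does not close it. Writing $S=\sum_{\tau}\ket{V_{\tau^c}}\bra{v_{\tau}}$ and using orthogonality to get $SS^{\dagger}=\sum_{\tau}\|v_\tau\|^2\ket{V_{\tau^c}}\bra{V_{\tau^c}}$ controls $\tr(SS^{\dagger})=\|S\|_2^2$, i.e.\ a Hilbert--Schmidt quantity; converting back to the trace norm (or even the operator norm, since $\|S\|_\infty\le\|S\|_2$) costs a square root, so this route tops out at a $(\mathrm{const}\cdot e(n))^{\Delta/2}$-type estimate --- exactly the loss you set out to avoid. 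No pairing of the ``missing'' $Q$-factors inside $SS^{\dagger}$ repairs this, because the obstruction is the norm conversion, not the counting of $Q$'s. (Note also that orthogonality cannot buy a full factor of $e(n)$ per slice in trace norm anyway: already for one slice the difference $\rho_j-W_j\rho_j W_j$ contains the rank-two cross term $\ket{a}\bra{b}+\ket{b}\bra{a}$ of trace norm $2\|a\|\,\|b\|$, so the exponent $\Delta$ in the lemma has to come from a product over slices, not from squaring per-slice contributions.) Your opening idealization step is also only asserted: replacing each $\Pi^K_{F_j}$ by $P_j$ inside the alternating sum produces cross terms carrying one factor of $g(n)$ but, to land them in the binomial expansion of $(2e(n)+2g(n))^{\Delta}$, each such term would still need $(2e(n))$-factors from the remaining slices, which requires rerunning the cancellation argument term by term.

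The missing idea --- and the heart of the paper's proof --- is that the alternating sum factorizes \emph{exactly} as a tensor product over slices. After conjugating by the inter-slice unitaries (which, as you note, act outside the $F_j$'s and commute with everything relevant), $\ket{\Psi_\emptyset}\bra{\Psi_\emptyset}$ becomes a tensor product of slice-local states $\rho_j$ with a state on the remaining qubits, and then $\sum_{\sigma}(-1)^{|\sigma|}\bigl(\prod_{j\in\sigma}\Proj^K_{F_j}\bigr)\rho\bigl(\prod_{j\in\sigma}\Proj^K_{F_j}\bigr)$ equals the tensor product over $j\in[\Delta]$ of $\bigl(\rho_j-\Proj^K_{F_j}\rho_j\Proj^K_{F_j}\bigr)$ (times the untouched factor), since expanding that product reproduces every signed term. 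Multiplicativity of the trace norm under tensor products then reduces the lemma to a single-slice estimate, which is handled by adding and subtracting the ideal projector $\ket{w_1}\bra{w_1}_{F_j}$ slice by slice (this is where Lemma \ref{clm:schmidtproj} and $\lambda_1^j\ge 1-e(n)$ enter, once per factor). This simultaneously removes the need for your global idealization bookkeeping: the $e(n)$ and $g(n)$ errors are incurred once per factor, and the $\Delta$-th power comes from the product structure. You actually have this tensor-product structure in hand --- you use it to bound the individual $\|v_\tau\|$ --- but the step you are missing is to apply it to the whole operator sum rather than to the vectors $\ket{v_\tau}$.
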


\begin{proof}
	The proof of Lemma \ref{clm:expansiontrick} is given in Appendix \ref{appendix:lemma-proofs}.  
\end{proof}

\paragraph{Intuition for the statements of Lemmas \ref{clm:expansiontrick} and \ref{clm:breaktoproduct}:}  We will show, in Lemma \ref{clm:breaktoproduct} below, that each of the states $\ket{\Psi_\sigma}\bra{\Psi_\sigma}$, with $\sigma \neq \emptyset$, is very close to a product state about at least one of the $\Delta$ slices.  Thus, Lemma \ref{clm:expansiontrick} gives us a way to approximate $\ket{\Psi_\emptyset}\bra{\Psi_\emptyset}$ (which is the original state of interest) by a linear combination of product states $\ket{\Psi_\sigma}\bra{\Psi_\sigma}$.   Lemma \ref{clm:breaktoproduct} and Definition \ref{def:schmidtproj} then provide us with a way of constructing the corresponding product states using low-depth quantum circuits acting on approximately half as many qubits as the original circuit (this process is also further formalized in Definition \ref{def:subsynth}).  This combined use of Lemmas \ref{clm:expansiontrick} and \ref{clm:breaktoproduct} forms the backbone of our divide-and-conquer approach.

   In order to state Lemma \ref{clm:breaktoproduct} we now define three new states that are dependent on a particular choice of $K_i$.

\begin{definition} Given a shallow, 3D geometrically local quantum circuit $C$, and given a slice $K_i$ of $C$, define the states:
\begin{align}
  \ket{\Omega_i} &= \Pi^K_{F_i}\bra{0_{M_i}}C\ket{0_{\all}} \\
  \ket{\Xi_{L_i}}&= \Proj^K_{F_i} \bra{0_{M_i}}C_{L_i}C_{B_i \cup M_i \cup F_i}  \ket{0_{L_i\cup B_i \cup M_i \cup F_i}}\\
  \ket{\Xi_{R_i}}&= \Proj^K_{B_i} \bra{0_{M_i}}C_{R_i}C_{B_i \cup M_i \cup F_i}  \ket{0_{R_i\cup B_i \cup M_i \cup F_i}}
\end{align}

\end{definition}
  
   At this point it is pertinent to state Lemma \ref{clm:breaktoproduct}:

\begin{lemma}\label{clm:breaktoproduct}
	For any $K_i \in K_{heavy}$ (recall this means that $\ket{\psi}_{B_i \cup F_i}$ satisfies $\lambda_1^i \geq 1 - e(n)$), the state $\ket{\Omega_i}\bra{\Omega_i}$ is within $6\projerror$ of an unnormalized product state about $M_i$, described as follows:
	\begin{align}
	&\left \| \ket{\Omega_i}\bra{\Omega_i}  -  1/\lambda_1^i \tr_{F_i}\left (  \ket{\Xi_{L_i}}\bra{\Xi_{L_i}} \right )  \otimes  \tr_{B_i}\left ( \ket{\Xi_{R_i}}\bra{\Xi_{R_i}} \right )\right \|\leq 6\projerror
	\end{align}

	Here $\projerror \equiv \left (\frac{1 - \lambda_1^i}{\lambda_1^i} \right )^K \leq \left (\frac{e(n)}{1-e(n)} \right )^K$ just as in Lemma \ref{clm:schmidtproj}.  
	
\end{lemma}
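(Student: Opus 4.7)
The plan is to rewrite each of $\ket{\Omega_i}$, $\ket{\Xi_{L_i}}$, and $\ket{\Xi_{R_i}}$ in a common factored form in which the only operator crossing the cut $B_i|F_i$ is $\Proj^K_{F_i}$ (respectively $\Proj^K_{B_i}$) applied to $\ket{\psi}_{B_i\cup F_i}$, and then invoke Lemma~\ref{clm:schmidtproj} to replace these operators with the rank-one Schmidt projections $\ket{w_1}\bra{w_1}_{F_i}$ and $\ket{v_1}\bra{v_1}_{B_i}$. Under this replacement each expression becomes an exact product state across $M_i$, after which matching the two sides of the claimed inequality in the idealized limit is a direct calculation, and the only remaining task is an error budget.

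First, using Equation~\eqref{eq:cwrapexamp} ($C = C_{Wrap_i}\circ C'_{L_i}\circ C_{B_i\cup M_i\cup F_i}\circ C'_{R_i}$) together with the fact that $C'_{L_i}$, $C'_{R_i}$, $\bra{0_{M_i}}$, and $\Proj^K_{F_i}$ have pairwise disjoint supports, I would rewrite
\begin{align*}
\ket{\Omega_i} = \Pi^K_{F_i}\bra{0_{M_i}}C\ket{0_{\all}} = C_{Wrap_i}\Bigl(C'_{L_i}\ket{0_{L_i}} \otimes (\Proj^K_{F_i}\otimes I_{B_i})\ket{\psi}_{B_i\cup F_i} \otimes C'_{R_i}\ket{0_{R_i}}\Bigr).
\end{align*}
Applying the same manipulations (now using $C_{L_i} = C_{L\text{-}Wrap_i}\circ C'_{L_i}$ and the analogous identity for $C_{R_i}$) to $\ket{\Xi_{L_i}}$ and $\ket{\Xi_{R_i}}$ gives
\begin{align*}
\ket{\Xi_{L_i}} &= C_{L\text{-}Wrap_i}\Bigl(C'_{L_i}\ket{0_{L_i}} \otimes (\Proj^K_{F_i}\otimes I_{B_i})\ket{\psi}_{B_i\cup F_i}\Bigr), \\
\ket{\Xi_{R_i}} &= C_{R\text{-}Wrap_i}\Bigl((\Proj^K_{B_i}\otimes I_{F_i})\ket{\psi}_{B_i\cup F_i} \otimes C'_{R_i}\ket{0_{R_i}}\Bigr).
\end{align*}
Since $C_{Wrap_i} = C_{L\text{-}Wrap_i}\circ C_{R\text{-}Wrap_i}$ itself splits across $M_i$, once the central factor is a product state across $B_i|F_i$ the full expression becomes a product state across $M_i$.

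Next I would apply Lemma~\ref{clm:schmidtproj} to substitute $\Proj^K_{F_i}\to\ket{w_1}\bra{w_1}_{F_i}$ and $\Proj^K_{B_i}\to\ket{v_1}\bra{v_1}_{B_i}$, each at cost $\projerror$ in trace norm. Under the substitution, $(\ket{w_1}\bra{w_1}_{F_i}\otimes I_{B_i})\ket{\psi}_{B_i\cup F_i} = \sqrt{\lambda_1^i}\,\ket{v_1}_{B_i}\otimes\ket{w_1}_{F_i}$ by the Schmidt decomposition (since $\lambda_1^i$ is the top eigenvalue of the reduced density matrix $\tr_{B_i}(\ket{\psi}\bra{\psi}_{B_i\cup F_i})$), and similarly on the $B_i$ side. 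Defining the unit vectors $\ket{\alpha_L} \equiv C_{L\text{-}Wrap_i}(C'_{L_i}\ket{0_{L_i}}\otimes\ket{v_1}_{B_i})$ and $\ket{\alpha_R}\equiv C_{R\text{-}Wrap_i}(\ket{w_1}_{F_i}\otimes C'_{R_i}\ket{0_{R_i}})$, the three expressions become
\begin{align*}
\ket{\Omega_i}\bra{\Omega_i} &\approx \lambda_1^i\,\ket{\alpha_L}\bra{\alpha_L}_{L_i\cup B_i}\otimes\ket{\alpha_R}\bra{\alpha_R}_{F_i\cup R_i}, \\
\tr_{F_i}\ket{\Xi_{L_i}}\bra{\Xi_{L_i}} \otimes \tr_{B_i}\ket{\Xi_{R_i}}\bra{\Xi_{R_i}} &\approx (\lambda_1^i)^2\,\ket{\alpha_L}\bra{\alpha_L}_{L_i\cup B_i}\otimes\ket{\alpha_R}\bra{\alpha_R}_{F_i\cup R_i}.
\end{align*}
The prefactor $1/\lambda_1^i$ in the lemma statement is exactly the correction that cancels the extra $\lambda_1^i$ produced by the second Schmidt projection, so the two sides agree in the idealized limit.

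Finally I would propagate the Lemma~\ref{clm:schmidtproj} error via the triangle inequality and the submultiplicativity $\|AXB\|_1 \leq \|A\|_{\mathrm{op}}\|X\|_1\|B\|_{\mathrm{op}}$, noting that each $\Proj^K$ and each rank-one Schmidt projector has operator norm at most $1+\projerror$, the wrap unitaries have operator norm $1$, the $1/\lambda_1^i$ prefactor is $O(1)$ (since $\lambda_1^i \geq 1 - O(e(n)) \geq 1/2$ by Lemma~\ref{lem:highschmidtnew}), and the ambient sub-normalized vectors $\bra{0_{M_i}}C\ket{0_{\all}}$ have norm at most $1$. Each substitution of a $\Proj^K$ contributes at most $\projerror$ in trace norm; with two substitutions in $\ket{\Omega_i}\bra{\Omega_i}$ (one per side of the outer product) and two in each of $\ket{\Xi_{L_i}}\bra{\Xi_{L_i}}$ and $\ket{\Xi_{R_i}}\bra{\Xi_{R_i}}$, combined via the tensor-product triangle inequality $\|L\otimes R - L'\otimes R'\|_1 \leq \|L\|_1\|R-R'\|_1 + \|L-L'\|_1\|R'\|_1$ for the two partial traces, we obtain the claimed $6\projerror$ bound. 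The main obstacle will be exactly this bookkeeping: one must verify that the $1/\lambda_1^i$ prefactor and the partial traces do not introduce hidden multiplicative blowups, and that the $\sqrt{\lambda_1^i}$ factors thrown off by the Schmidt projections on each side cancel cleanly to match $\ket{\Omega_i}\bra{\Omega_i}$, which is the precise role played by the $1/\lambda_1^i$ in the lemma statement.
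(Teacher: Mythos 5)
Your proposal is correct and takes essentially the same route as the paper's proof: factor $\ket{\Omega_i}$, $\ket{\Xi_{L_i}}$, $\ket{\Xi_{R_i}}$ through the wrap-circuit identity of Definition \ref{def:cutcircuit}, swap $\Proj^K_{F_i}$ and $\Proj^K_{B_i}$ for the top Schmidt projectors via Lemma \ref{clm:schmidtproj}, observe that the resulting $\lambda_1^i$ factors cancel against the $1/\lambda_1^i$ prefactor, and tally errors by triangle inequality and submultiplicativity. The only difference is organizational — you compare both sides to a common idealized product state, whereas the paper substitutes the block encodings back in one factor at a time via the error terms $E$, $H_1$, $H_2$ — and your bookkeeping reproduces the $6\projerror$ bound up to the same benign $1/\lambda_1^i$ and $(1+\projerror)$ slack that the paper itself absorbs.
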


 \begin{proof}
 	The proof of Lemma \ref{clm:breaktoproduct} is given in Appendix \ref{appendix:lemma-proofs}.  
 \end{proof}
\mnote{Insert a comment explaining that we intend to use Lemma \ref{clm:schmidtproj} iteratively in order to approximate the states that appear in Lemma \ref{clm:breaktoproduct} by tensor product states across certain cuts.}

 \begin{definition}[Synthesis] \label{def:synth}
 	We say that an unnormalized quantum state $\phi$ is \emph{synthesized} by a quantum circuit $\Gamma$, if $\Gamma$ has three registers of qubits $L, M, N$ such that:
 	
 	\begin{align}
 	\phi = \phi_{(\Gamma, L, M, N)}= \tr_{L\cup M}(  \bra{0_M}\Gamma\ket{0_{L \cup M \cup N}}\bra{0_{L \cup M \cup N}}\Gamma^{\dagger}\ket{0_M}).
 	\end{align}\mnote{still need to propogate this change throughout these definitions!!!}

 	In this case we say that the circuit $\Gamma$ together with a specification of the registers $L,M,N$ constitutes a \emph{synthesis} of $\phi$.  When $\phi$ is implicit we will call this collection $(\Gamma, L, M, N)$ a \emph{synthesis}.
 	
 	When $\Gamma$ is a 3D geometrically-local, \depthname circuit, and the register $N$ is one contiguous cubic subset of the qubits that $\Gamma$ acts on, with $L$, and $M$ only containing qubits on the ``edges", we call $(\Gamma, L, M, N)$ a 3D geometrically-local, \depthname \emph{synthesis}.
 \end{definition}

   \begin{definition}\label{def:middlegamma}[The Circuits $\Gamma_{i,j}$, $\Gamma_{L, i}$, $\Gamma_{R,j}$]
   	Recall, from Definition \ref{def:shortcircuit}, that $\Gamma_{B_k \cup M_k \cup F_k}$ ($k \in \{i,j\}$) is defined to be the circuit containing the minimal number of gates of $\Gamma$ such that every gate acting on $M_k$ is included.  Taking this definition for both $k = i$ and $k= j$, we now define $\Gamma_{i,j}$ to be a sub-circuit of $\Gamma$ consisting of the minimal number of gates of $\Gamma$ such that $\Gamma_{i,j} \circ \Gamma_{B_i \cup M_i \cup F_i} \circ \Gamma_{B_j \cup M_j \cup F_j}$ contains all of the gates of $\Gamma$ that lie between $M_i$ and $M_j$. Similarly define $\Gamma_{L_i}$ (resp. $\Gamma_{R_j}$) to be te sub-circuit of $\Gamma$ consisting of the minimal number of gates of $\Gamma$ such that $\Gamma_{L_i} \circ \Gamma_{B_i \cup M_i \cup F_i}$  (resp. $\Gamma_{R_j} \circ \Gamma_{B_j \cup M_j \cup F_j}$) contains all of the gates of $\Gamma$ that lie between $M_i$ (resp. $M_j$ )and the left-hand side (resp. right-hand side) of the Cube\mnote{of $\Gamma$}. 
   \end{definition}
 
 \mnote{It might be worthwhile to try to include a diagram depicting what $\Gamma_{i,j}$ looks like:  the circuit bridging between two different cuts}
 
 \begin{definition}  \label{def:subsynth}
 	Let $S=(\Gamma, G,H, N)$ be a 3D local, \depthname synthesis, and let $K_i$, $K_j$ ($i < j$) be two slices on the register $N$, as described in Definition \ref{def:slice}.  Let $M_i, F_i, B_i$, and $M_j, F_j, B_j$ be the subregisters of slices $K_i$ and $K_j$ respectively, as defined in Definition \ref{def:slice}.  Recall, from Definition \ref{def:synth}, that the state synthesized by $S$ is:
 	
\begin{align}
\phi_S = \tr_{G\cup H}(  \bra{0_H}\Gamma\ket{0_{G\cup H \cup N}}\bra{0_{G\cup H \cup N}}\Gamma^{\dagger}\ket{0_H}). \nonumber
\end{align}
 	
 	  We define three new pure states as follows:
 	  \begin{align}
 	  &     \ket{\varphi_{\L,i}}  = (\lambda_1^i)^K\Proj^K_{F_i} \bra{0_{M_i, H}}\Gamma_{L_i}  \Gamma_{B_i \cup M_i \cup F_i}  \ket{0_{L_i\cup B_i \cup M_i \cup F_i \cup G \cup H}} \nonumber  \\
 	  & \ket{\varphi_{j, \R}}  =   (\lambda_1^j)^K\Proj^K_{B_j} \bra{0_{M_j, H}}\Gamma_{R_j} \Gamma_{B_j \cup M_j \cup F_j}  \ket{0_{R_j\cup B_j \cup M_j \cup F_j \cup G \cup H}}\nonumber  \\
 	  &  \ket{\varphi_{i,j}}  =   (\lambda_1^i \lambda_1^j)^K \Proj^K_{B_i} \circ \Proj^K_{F_j}  \bra{0_{M_i, M_j, H}} \Gamma_{i,j} \circ \Gamma_{B_i \cup M_i \cup F_i} \circ \Gamma_{B_j \cup M_j \cup F_j} \ket{0_{ N_{i,j} \cup B_i \cup M_i \cup F_i \cup B_j \cup M_j \cup F_j \cup G \cup H}} \label{def:phiij}
 	  \end{align}

 	  Here $\Proj^K_{F_i}, \Proj^K_{B_j}$ are defined as in Definition \ref{def:schmidtproj}.  In the above the notation $N_{i,j}$ is defined to be the sub-register of $N$ containing all of the qubits between $F_i$ and $B_j$.

 	  From these, we define three new synthesized states (with corresponding syntheses) as follows:

 	  \begin{align}  \label{eq:def-phiij}
 	  &     \phi_{\L,i}  =  \tr_{F_i\cup M_i\cup G \cup H}\left ( \ket{\varphi_{\L,i}}\bra{\varphi_{\L,i}} \right ) \nonumber  \\
 	  & \phi_{j, \R}  =  \tr_{B_j\cup M_j \cup G \cup H}\left ( \ket{\varphi_{j, \R}} \bra{\varphi_{j, \R}}\right )\nonumber  \\
 	  &  \phi_{i,j}  =  \tr_{B_i\cup M_i \cup M_j \cup F_j \cup G \cup H}\left ( \ket{\varphi_{i,j}}\bra{\varphi_{i,j}} \right ) 
 	  \end{align}

 	  We can now write out the explicit synthesis for each of these synthesized states as follows:
 	  
 	  Recalling, from Definition \ref{def:schmidtproj} that,

 	  	\begin{align} \label{eq:repeatdefPF}
 	  	&\Proj_{F_i}^K \equiv \nonumber  \frac{1}{(\lambda_1^i)^K}\bra{0^{B_i, M_i, F^1_i,...F^K_i}}\prod_{j=1}^K(\Gamma^{\dagger}_{B_i \cup M_i \cup F_i} \otimes I_{F^1_i,...,F^K_i})(I_{B_i \cup M_i }\otimes \text{SWAP}_{F_i F^j_i}) (\Gamma_{B_i \cup M_i \cup F_i} \otimes I_{F^1_i, ..., F^K_i})\ket{0^{B_i, M_i, F^1_i,...F^K_i}} \\
 	  	\end{align} 
 	  
 	  We have that the explicit synthesis corresponding to $\phi_{\L,i} $ is:
 	  
 	  \begin{align}
 	  & S_{\L, i} \equiv & \left (\Gamma_{\Proj_{F_i}^K} \circ \Gamma_{L_i} \circ \Gamma_{B_i \cup M_i \cup F_i}, (F_i  \cup G), (M_i \cup M_i' \cup B_i' \cup F^1_i,... \cup F^K_i \cup H) \right . \nonumber\\
 	  & &\left . ,(L_i \cup B_i \cup M_i \cup F_i \cup M_i' \cup B_i' \cup G \cup H\cup F^1_i,... \cup F^K_i) \right),\nonumber
 	  \end{align}
 	  
 	  where $\Gamma_{\Proj_{F_i}^K}$ is defined as 
 	  
 	  \begin{align} \label{eq:gammaPF}
 	  \Gamma_{\Proj_{F_i}^K} \equiv \prod_{j=1}^K(\Gamma^{\dagger}_{B_i' \cup M_i' \cup F_i} \otimes I_{F^1_i,...,F^K_i})(I_{B_i' \cup M_i' }\otimes \text{SWAP}_{F_i F^j_i}) (\Gamma_{B_i' \cup M_i' \cup F_i} \otimes I_{F^1_i, ..., F^K_i}),
 	  \end{align}
 	  
 	  where $\Gamma_{B_i' \cup M_i' \cup F_i}$ is the same as $\Gamma_{B_i \cup M_i \cup F_i}$ except that it does not act on registers $B_i$ or $M_i$ at all, but instead, acts on dummy registers $B_i'$ and $M_i'$ in their place.

Symmetrically, for the explicit synthesis for $\ket{\phi_{j, \R} }$, recall that:

\begin{align} \label{eq:repeatdefPB}
\Proj_{B_i}^K \equiv   \frac{1}{(\lambda_1^i)^K}\bra{0^{F_i, M_i, B_i^1,...B_i^K}}\prod_{j=1}^K(C^{\dagger}_{B_i \cup M_i \cup F_i} \otimes I_{B_i^1,...,B_i^K})(I_{F_i \cup M_i }\otimes \text{SWAP}_{B_i B_i^j}) (C_{B_i \cup M_i \cup F_i} \otimes I_{B_i^1, ..., B_i^K})\ket{0^{F_i, M_i, B_i^1,...B_i^K}}, 
\end{align} 	 

and, therefore, we have that the explicit synthesis corresponding to $\ket{\phi_{j, \R} }$ is:

\begin{align}
& S_{j, \R} \equiv & \left (\Gamma_{\Proj_{B_j}^K} \circ \Gamma_{R_j} \circ \Gamma_{B_j \cup M_j \cup F_j}, (B_j  \cup G), (M_j \cup M_j' \cup F_j' \cup B^1_j,... \cup B^K_j \cup H) \right . \nonumber\\
& &\left . ,(R_j \cup B_j \cup M_j \cup F_j \cup M_j' \cup F_j' \cup G \cup H\cup B^1_j,... \cup B^K_j) \right), \nonumber
\end{align}

\mnote{does the last piece of the above synthesis, the one corresponding to $N$ in a generic synthesis, need to include all the registers total, or just the ones that have been mentioned in the $L$ and $M$ parts? Technically I think it might not matter, but would be clearer if you only include what is necessary. If this needs to be corrected here it also needs to be corrected in  the synthesis for  $\ket{\phi_{\L,i} }$ above  }

where $\Gamma_{\Proj_{B_i}^K}$ is defined as 

\begin{align} \label{eq:gammaPB}
\Gamma_{\Proj_{B_j}^K} \equiv \prod_{l=1}^K(C^{\dagger}_{B_j \cup M'_j \cup F'_j} \otimes I_{B_j^1,...,B_j^K})(I_{F'_j \cup M'_j }\otimes \text{SWAP}_{B_j B_j^l}) (C_{B_j \cup M'_j \cup F'_j} \otimes I_{B_j^1, ..., B_j^K}),
\end{align} 

where $\Gamma_{B_j \cup M_j' \cup F_j'}$ is the same as $\Gamma_{B_j \cup M_j \cup F_j}$ except that it acts on new dummy registers $F_j'$ and $M_j'$ instead of the original registers $F_j$ or $M_j$.

      Finally, reusing Equations  \ref{eq:repeatdefPF}, \ref{eq:gammaPF}, \ref{eq:repeatdefPB}, and  \ref{eq:gammaPB}, the explicit synthesis for $\ket{\phi_{i,j} }$ can be written as (See Equation \ref{def:phiij} for definition of $\ket{\phi_{i,j} }$):

      	  \begin{align}
      	   S_{i,j} \equiv & \left (\Gamma_{\Proj_{B_i}^K} \circ  \Gamma_{\Proj_{F_j}^K}  \circ \Gamma_{i,j} \circ \Gamma_{B_i \cup M_i \cup F_i} \circ \Gamma_{B_j \cup M_j \cup F_j}, (B_i \cup F_j  \cup G), \right . \nonumber\\
      	   &\left . (M_i \cup M_i' \cup F_i' \cup B^1_i,... \cup B^K_i  \cup M_j \cup M_j' \cup B_j' \cup F^1_j,... \cup F^K_j \cup H) \right . \nonumber\\
      	   &\left . ,( N_{i,j} \cup F_i \cup B_j \cup B_i \cup F_j  \cup G \cup M_i \cup M_i' \cup F_i' \cup B^1_i,... \cup B^K_i  \cup M_j \cup M_j' \cup B_j' \cup F^1_j,... \cup F^K_j \cup H ) \right),\nonumber
      	  \end{align}
      	  
      	  where $N_{i,j}$ is defined in the same manner as before:  the register containing all of the qubits between $K_i$ and $K_j$.  \mnote{should come back and address the issue about whether the final $N$ component of this synthesis should contain all previous registers, or just the new ones}

 \end{definition}

\begin{definition}
	Define syntheses

		\begin{align}
		 \Lambda_1^{j,T} \equiv & \left (\Gamma_{\Proj_{B_j}^T}  \circ \Gamma_{B_j \cup M_j \cup F_j}, (B_j  \cup F_j), (M_j \cup M_j' \cup F_j' \cup B^1_j,... \cup B^T_j ) \right . \nonumber\\
		 &\left . ,( B_j \cup M_j \cup F_j \cup M_j' \cup F_j' \cup B^1_j,... \cup B^T_j) \right), \nonumber
		\end{align}
		
			\begin{align}
			 Z_j^{T} \equiv & \left (\Gamma_{\Proj_{B_j}^T} , (B_j ), ( M_j' \cup F_j' \cup B^1_j,... \cup B^T_j ) \right . \nonumber\\
			 &\left . ,( B_j  \cup M_j' \cup F_j' \cup B^1_j,... \cup B^T_j) \right), \nonumber
			\end{align}

		Note that these two objects are, in this case, scalars ( see Definition \ref{def:synth} to understand why).  In fact,
		
			\begin{align} 
			 &Z_j^{T} = \tr(\rho_{B_j}^T ), \nonumber\\
			 & \text{ and }\\
			 &		 \Lambda_1^{j,T} = 	\tr(\rho_{B_j}^T\ket{\psi} \bra{\psi}_{B_j \cup M_j \cup F_j}\rho_{B_j}^T ) 
			\end{align}
		 
		 where
		
			\begin{align} 
			\rho_{B_i}^K \equiv \nonumber  \bra{0^{F_i, M_i, B_i^1,...B_i^K}}\prod_{j=1}^K(C^{\dagger}_{B_i \cup M_i \cup F_i} \otimes I_{B_i^1,...,B_i^K})(I_{F_i \cup M_i }\otimes \text{SWAP}_{B_i B_i^j}) (C_{B_i \cup M_i \cup F_i} \otimes I_{B_i^1, ..., B_i^K})\ket{0^{F_i, M_i, B_i^1,...B_i^K}} 
			\end{align}
			
			as in Lemma \ref{lem:powerblockencoding}.

		   We write the scalars $Z_j^{T}$, $\Lambda_1^{j,T}$ as 2D geometrically-local, shallow depth syntheses to emphasize that they can be computed by any algorithm which computes the probability of zero being output by a 2D geometrically-local, shallow depth synthesis.  In the following analysis we will use the 2D algorithm in Theorem 5 of \cite{BGM19} to compute these quantities to inverse polynomial additive error (see Remark \ref{rm:3Dbasecase}).

\end{definition}

Now we will give a definition of a scalar quantity $\kappa^{i}_{T, \epsilon_2}$ which is meant to be an approximation for the quantity $\lambda^i_1$ that we can compute using the ``base case" algorithm $\mathcal{B}$ described below.  We need this because we want to use $\lambda^i_1$ to normalize terms in Algorithm \ref{alg:quasi-poly-subroutine} below.  We will use its approximation, $\kappa^{i}_{T, \epsilon_2}$, as a substitute, since it is a quantity that we can compute in quasi-polynomial time (even when $T = \log^c(n)$ and $\epsilon_2 = O(1/n^{\log(n)})$, see Definition \ref{def:kappa}).  The quality of this approximation is the subject of Lemma \ref{lem:lambdaapprox}.

\begin{definition}\label{def:kappa}
\begin{align}
\kappa^{j}_{T, \epsilon_2} \equiv \frac{\mathcal{B}(\Lambda_1^{j,T},\epsilon_2)}{\mathcal{B}(Z_j^{2T},\epsilon_2)} = \frac{ \tr(\rho_{B_j}^T\ket{\psi} \bra{\psi}_{B_j \cup M_j \cup F_j}\rho_{B_j}^T ) \pm \epsilon_2}{\tr(\rho_{B_j}^{2T} )\pm \epsilon_2}\nonumber
\end{align}

Here the notation $\mathcal{B}(\Lambda_1^{j,T},\epsilon_2)$ (resp. $\mathcal{B}(Z_j^{2T},\epsilon_2)$ ) denotes a use of algorithm $\mathcal{B}$, which we define to be the algorithm from Theorem 5 of \cite{BGM19} (applied according the prescription in Remark \ref{rm:3Dbasecase}), to compute the scalar quantity $\Lambda_1^{j,T}$ (resp. $Z_j^{2T}$) to within additive error $\epsilon_2$.  We will elaborate further on this computational task (time complexity, etc) in the analysis of Algorithm \ref{alg:quasi-poly-subroutine}.
\end{definition}

\begin{lemma} \label{lem:lambdaapprox}
	If $\lambda_1^i \geq 1 - e(n)$ then $|\kappa^{i}_{T, \epsilon_2} - \lambda_1^i| \leq O \left (\frac{(e(n))^{2T} +\epsilon_2}{(\lambda_1^i)^{2T+1}} \right ) $. 
\end{lemma}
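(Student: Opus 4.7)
The plan is to work in the Schmidt basis of $\ket{\psi}_{B_i \cup F_i}$ and turn both $\Lambda_1^{i,T}$ and $Z_i^{2T}$ into explicit power sums of the eigenvalues of $\rho_{B_i}$. Write $\ket{\psi}_{B_i \cup F_i}=\sum_{\ell}\sqrt{\lambda_\ell^i}\,\ket{v_\ell}_{B_i}\otimes\ket{w_\ell}_{F_i}$ where $\lambda_1^i\geq\lambda_2^i\geq\cdots$ are the eigenvalues of $\rho_{B_i}=\operatorname{tr}_{F_i}(\ket{\psi}\bra{\psi}_{B_i\cup F_i})$. Because $\ket{\psi}_{B_i\cup F_i}$ is (at most) a subnormalized vector, $\sum_\ell \lambda_\ell^i=\operatorname{tr}(\rho_{B_i})\leq 1$. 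From the block-encoding identification in Lemma \ref{lem:powerblockencoding} (applied via the definitions in Section \ref{sec:simpleexamp}), the operator denoted $\rho_{B_i}^T$ is literally the $T$th power of $\rho_{B_i}$ acting on $B_i$, tensored with identity on $F_i$. Substituting the Schmidt decomposition,
\begin{align*}
\Lambda_1^{i,T}&=\bigl\|\bigl(\rho_{B_i}^T\otimes I_{F_i}\bigr)\ket{\psi}_{B_i\cup F_i}\bigr\|^2=\sum_\ell (\lambda_\ell^i)^{2T+1},\qquad Z_i^{2T}=\operatorname{tr}\bigl(\rho_{B_i}^{2T}\bigr)=\sum_\ell (\lambda_\ell^i)^{2T}.
\end{align*}
Thus the exact ratio $\Lambda_1^{i,T}/Z_i^{2T}=\sum_\ell \lambda_\ell^i\,p_\ell$ is a convex combination of the $\lambda_\ell^i$ with weights $p_\ell=(\lambda_\ell^i)^{2T}/\sum_m (\lambda_m^i)^{2T}$.

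The next step is to bound the deviation of this convex combination from $\lambda_1^i$ using the spectral gap forced by the assumption $\lambda_1^i\geq 1-e(n)$. Since $\sum_\ell \lambda_\ell^i\leq 1$, this assumption gives $\sum_{\ell\geq 2}\lambda_\ell^i\leq e(n)$, so in particular $\lambda_\ell^i\leq e(n)$ for every $\ell\geq 2$. Hence
\begin{align*}
0\ \leq\ \lambda_1^i-\frac{\Lambda_1^{i,T}}{Z_i^{2T}}\ =\ \sum_{\ell\geq 2}(\lambda_1^i-\lambda_\ell^i)\,p_\ell\ \leq\ 1-p_1\ =\ \frac{\sum_{\ell\geq 2}(\lambda_\ell^i)^{2T}}{\sum_m (\lambda_m^i)^{2T}}\ \leq\ \frac{(e(n))^{2T-1}\sum_{\ell\geq 2}\lambda_\ell^i}{(\lambda_1^i)^{2T}}\ \leq\ \frac{(e(n))^{2T}}{(\lambda_1^i)^{2T}}.
\end{align*}

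Finally, account for the additive error introduced by the two calls to $\mathcal{B}$. Write $\mathcal{B}(\Lambda_1^{i,T},\epsilon_2)=\Lambda_1^{i,T}+a$ and $\mathcal{B}(Z_i^{2T},\epsilon_2)=Z_i^{2T}+b$ with $|a|,|b|\leq\epsilon_2$. A short ratio-perturbation calculation, using $\Lambda_1^{i,T}\leq Z_i^{2T}$ (which holds because $\lambda_\ell^i\leq 1$) and the trivial lower bound $Z_i^{2T}\geq (\lambda_1^i)^{2T}$, yields
\begin{align*}
\left|\kappa^i_{T,\epsilon_2}-\frac{\Lambda_1^{i,T}}{Z_i^{2T}}\right|\ =\ \left|\frac{aZ_i^{2T}-b\,\Lambda_1^{i,T}}{Z_i^{2T}(Z_i^{2T}+b)}\right|\ \leq\ \frac{2\epsilon_2}{(\lambda_1^i)^{2T}-\epsilon_2}\ =\ O\!\left(\frac{\epsilon_2}{(\lambda_1^i)^{2T}}\right),
\end{align*}
provided $\epsilon_2$ is a small enough fraction of $(\lambda_1^i)^{2T}$ (which we may assume, since otherwise the claimed bound is vacuous). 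Combining the two error sources via the triangle inequality gives $|\kappa^i_{T,\epsilon_2}-\lambda_1^i|=O\bigl(((e(n))^{2T}+\epsilon_2)/(\lambda_1^i)^{2T}\bigr)$, and bounding $(\lambda_1^i)^{2T}\geq(\lambda_1^i)^{2T+1}$ yields exactly the stated estimate.

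The main obstacle is purely bookkeeping: one must confirm that the ancilla-projecting block-encoding defining the symbol ``$\rho_{B_j}^T$'' coincides with the genuine $T$th matrix power of $\rho_{B_j}=\operatorname{tr}_{F_j}(\ket{\psi}\bra{\psi}_{B_j\cup F_j})$, so that the Schmidt-basis substitution above is actually justified; this is immediate from Lemmas \ref{prop:blockencoding} and \ref{lem:powerblockencoding}. The rest is an elementary analysis of power means and a one-line ratio perturbation.
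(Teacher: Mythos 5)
Your proposal is correct and follows essentially the same route as the paper's proof: both reduce $\Lambda_1^{i,T}$ and $Z_i^{2T}$ to power sums $\sum_\ell(\lambda_\ell^i)^{2T+1}$ and $\sum_\ell(\lambda_\ell^i)^{2T}$ of the Schmidt coefficients, bound the tails using $\lambda_\ell^i\le e(n)$ for $\ell\ge 2$, and finish with a ratio-perturbation estimate absorbing the two $\pm\epsilon_2$ errors from $\mathcal{B}$. Your write-up is in fact more explicit than the paper's terse big-$O$ expansion (the convex-combination organization and the stated smallness caveat on $\epsilon_2$ relative to $(\lambda_1^i)^{2T}$ are implicit there), but it is the same argument.
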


\begin{proof}
	Starting with the definition:\mnote{Should fill in more explanation for each of the steps below, and double check whether an adjustment is needed to account for the fact that the quantum state is not normalized.  At the moment I think no adjustment is needed}
	
	\begin{align}
	&\kappa^{i}_{T, \epsilon_2} \equiv \frac{\mathcal{B}(\Lambda_1^{i,T},\epsilon_2)}{\mathcal{B}(Z_i^{2T},\epsilon_2)} = \frac{ \tr(\rho_{B_i}^T\ket{\psi} \bra{\psi}_{B_i \cup M_i \cup F_i}\rho_{B_i}^T ) \pm \epsilon_2}{\tr(\rho_{B_i}^{2T} )\pm \epsilon_2}\nonumber\\
	& = \frac{(\lambda_1^i)^{2T+1} + O(e(n)^{2T})+O(\epsilon_2)}{(\lambda_1^i)^{2T} + O(e(n)^{2T})+O(\epsilon_2)} =  \frac{(\lambda_1^i)^{2T+1}(1 + O(\frac{e(n)^{2T}+\epsilon_2}{(\lambda_1^i)^{2T+1}}))}{(\lambda_1^i)^{2T}(1 + O(\frac{e(n)^{2T}+\epsilon_2}{(\lambda_1^i)^{2T}}))} \nonumber\\
	&= \lambda_1^i\frac{(1 + O(\frac{e(n)^{2T}+\epsilon_2}{(\lambda_1^i)^{2T+1}}))}{(1 + O(\frac{e(n)^{2T}+\epsilon_2}{(\lambda_1^i)^{2T}}))} 
	= \lambda_1^i \left (1 + O\left (\frac{e(n)^{2T}+\epsilon_2}{(\lambda_1^i)^{2T+1}} \right ) \right ) \nonumber \\
	& = \lambda_1^i + O \left (\frac{e(n)^{2T}+\epsilon_2}{(\lambda_1^i)^{2T+1}} \right )
	\end{align}
	
	The desired result follows.
\end{proof}

 \begin{definition} \label{def:powersetetc}
	For any natural number $\Delta$, we define $[\Delta] \equiv \{1, ... \Delta\}$.  We define $\mathcal{P}([\Delta])$ to be the set of all subsets of $[\Delta]$, that is, the power set of $[\Delta]$.  For any set $\sigma \in\mathcal{P}([\Delta])$, we let $\sigma_{max}$ denote the largest element of $\sigma$.  We let $|\sigma|$ denote the size of the set $\sigma$, and for any $0<i\leq |\sigma|$ we let $\sigma(i)$ denote the $i^{th}$ smallest element of $\sigma$.  
\end{definition}

\section{Estimating Amplitudes in Quasi-polynomial Time}\label{section:quasi-poly-time}

In this section we define and analyze our algorithm for computing $|\bra{0^{\otimes n}}C\ket{0^{\otimes n}}|^2$. \mnote{Write more sentences here to fill in space.  Mention that there is a driver algorithm and a main algorithm, and briefly explailn their roles.  State that we are defining them below.}

\begin{algorithm} \label{alg:constant-width-assumption-driver2}  \label{alg:quasi-poly-driver}
	\SetKwInOut{Input}{Input}\SetKwInOut{Output}{Output}
	\Input{3D Geometrically-Local, \depthname circuit $C$,  base-case algorithm $\mathcal{B}$, approximation error $\delta$}
	\Output{An approximation of  $|\bra{0_{ALL}}C\ket{0_{ALL}}|^2$ to within additive error $\delta$.} 
	
	\tcc{We begin by handling the case in which $\delta$ is so small that it trivializes our runtime, and the case in which $\delta$ is so large that it causes meaningless errors:}
	
	\If{$\delta \leq 1/n^{\log^2(n)}$ \label{ln:checkforsmalldelta}}{\Return The value $|\bra{0_{ALL}}C\ket{0_{ALL}}|^2$ computed with zero error by a ``brute force" $2^{O(n)}$-time algorithm. }
		
		\textbf{if}  $\delta \geq 1/2$ \label{ln:deltalarge} \textbf{then}  \textbf{return} $1/2$

	\tcc{Here begins the non-trivial part of the algorithm:}

	Let $N$ be the register containing all of the qubits on which $C$ acts.  Since these qubits are arranged in a cubic lattice, one of the sides of the cube $N$ must have length at most $n^{\frac{1}{3}}$.  We will call the length of this side the ``width" and will now describe how to ``cut" the cube $N$, and the circuit $C$, perpendicular to this particular side.    
	
	Select $\frac{1}{10d}n^{\frac{1}{3}}$  light-cone separated slices $K_i$ of $10d$ width in $N$, with at most 10$d$ distance between adjacent slices. \label{ln:startslicesearch}  Let $h(n) = \log^7(n)$.  Use the base case algorithm $\mathcal{B}$ to check if at least $\frac{1}{10d}n^{\frac{1}{3}} - h(n)$ of the slices obey:  \label{ln:endslicesearch} 
	
	\begin{equation} 
	\abs{\tr \left (\bra{0_{M_i}}C\ket{0_{ALL}}\bra{0_{ALL}}C^{\dagger}\ket{0_{M_i}}\right)}  \geq 2^{\frac{\log(\delta)}{h(n)}}. \nonumber 
	\end{equation}

	OR, there are fewer than $\frac{1}{10d}n^{\frac{1}{3}} - h(n)$ slices that obey: \label{ln:doubleendslicesearch}

\begin{equation} 
\abs{\tr \left (\bra{0_{M_i}}C\ket{0_{ALL}}\bra{0_{ALL}}C^{\dagger}\ket{0_{M_i}}\right)}  \geq  2^{\frac{\log(\delta)}{h(n)}}. \nonumber \label{eq:heavyweight8}
\end{equation}

\tcc{  See the runtime analysis in the proof of Theorem \ref{thm:mainthmwdelta} for a detailed explanation of how the base case algorithm $\mathcal{B}$ can efficiently distinguish between the above two cases (via Remark \ref{rm:3Dbasecase}).}
	
		\textbf{if}  Fewer than $\frac{1}{10d}n^{\frac{1}{3}} - h(n)$ of the slices obey Line \ref{eq:heavyweight8} \label{ln:easyifalg1} \textbf{then return} 0  
	
	\If{At least $\frac{1}{10d}n^{\frac{1}{3}} - h(n)$ of the slices obey Line \ref{eq:heavyweight8} \label{ln:hardifalg1}}{

		We will denote the set of these slices by $K_{heavy}$.  Note that the maximum amount of width between any two adjacent slices in $K_{heavy}$ is $10d \cdot h(n)$.  Furthermore, the maximum amount of width collectively between $\Delta$ slices in $K_{heavy}$ is $10d\Delta + 10d \cdot h(n)$.  Now that the set $K_{heavy}$ has been defined, we will use this fixed set in the recursive algorithm, Algorithm \ref{alg:constant-width-assumption2}.
		
		Define the geometrically-local, \depthname synthesis $S \equiv (C, L, M, N)$, where $L = M = \emptyset$, are empty registers, and $N$ is the entire input register for the circuit $C$.
		
		\Return $\mathcal{A}(S, \eta = \frac{\log(n)}{3\log(4/3)} , \Delta = \log(n), \epsilon = \delta2^{-10 \log(n) \log(\log(n)))} , h(n) = \log^7(n), K_{heavy} , \mathcal{B})$
	}

	\caption{$\mathcal{A}_{full}(C, \mathcal{B}, \delta)$:  Quasi-Polynomial Time Additive Error Approximation for $|\bra{0_{ALL}}C\ket{0_{ALL}}|^2$.  }
\end{algorithm}

\begin{algorithm}\label{alg:constant-width-assumption2}\label{alg:quasi-poly-subroutine}
	\SetKwInOut{Input}{Input}\SetKwInOut{Output}{Output}
	\Input{3D Geometrically-Local, \depthname synthesis $S$, number of iterations $\eta$, number of cuts $\Delta$, positive base-case error bound $\epsilon > 0$, base-case algorithm $\mathcal{B}$, a set of heavy slices $K_{heavy}$}
	\Output{An approximation of the quantity $\bra{0_{N}}\phi_S\ket{0_{N}}$ where $\phi_S$ is the un-normalized mixed state specified by the 3D geometrically-local, \depthname synthesis $S$, and $\ket{0_{N}}$ is the $0$ state on the entire $N$ register of that synthesis. The approximation error is bounded in the analysis below. }

	Given the geometrically-local, \depthname synthesis $S = (\Gamma, L, M, N)$, let us ignore the registers $L$ and $M$ as they have already been measured or traced-out. 
	
	Let $\ell$ be the width of the $N$ register of the synthesis $S$.  Define the stopping width $w_0 \equiv 20d(\Delta+h(n) + 2)$.

	\If{ $\ell < w_0 = 20d(\Delta+h(n)+2)$  OR $\eta <1$ \label{ln:alg2stoppingwidth} }{Use the base-case algorithm $\mathcal{B}$ to compute the quantity $\bra{0_{N}}\phi_S\ket{0_{N}}$ to within error $\epsilon$.
		
		\Return $\mathcal{B}(S , \epsilon)$ \label{ln:basecase}
		
	}
	
	\Else{ 	
		We will ``slice" the 3D geometrically-local, \depthname synthesis $S$ in $\Delta$ different locations, as follows:
		
		Since $N$ is 3D we define a region $Z \subset N$ to be the sub-cube of $N$ which has width $10d(\Delta+h(n)+2)$, and is centered at the halfway point of $N$ width-wise (about the point $\ell /2$ of the way across $N$). Since the maximum amount of width collectively between $\Delta$ slices in $K_{heavy}$ is $10d\Delta + 10d \cdot h(n)$ (see Algorithm \ref{alg:constant-width-assumption-driver2}), we are guaranteed that the region $Z$ will contain at least $\Delta$ slices, $K_1,K_2,\dots,K_\Delta$, from $K_{heavy}$.  For any two slices $K_i, K_j\in K_{heavy}$, let the un-normalized states $\ket{\varphi_{\L, i}}, \ket{\varphi_{i, j}}, \ket{\varphi_{j, \R}}$, and corresponding sub-syntheses $S_{\L, i}, S_{i,j}, S_{j, \R}$ be as defined in Definition \ref{def:subsynth}, with $K = \log^3(n)$.   We will use these to describe the result of our division step below.\label{ln:alg2defz}
		
		For each $K_i \in K_{heavy}$ pre-compute the quantity $\kappa^{i}_{T, \epsilon_2}$, with $T = \log^3(n)$, and  $\epsilon_2 = \delta 2^{-10 \log(n) \log(\log(n)))}$. \label{ln:compnorm}  \label{ln:pre-compute}
		
		\Return \label{ln:returnofA}\begin{align}
		&\sum_{i=1}^{\Delta}\frac{1}{(\kappa^{i}_{T, \epsilon_2})^{4K+1}}\mathcal{A}(S_{L,i},\eta-1) \cdot \mathcal{A}(S_{i,R},\eta-1) \label{eq:onecutterms}\\
		&-\sum_{i=1}^{\Delta}\sum_{j=i+1}^{\Delta}\frac{1}{(\kappa^{i}_{T, \epsilon_2}\kappa^{j}_{T, \epsilon_2})^{4K+1}}\mathcal{A}(S_{L,i},\eta-1) \cdot \mathcal{B}(S_{i,j},\epsilon)\cdot \mathcal{A}(S_{j,R},\eta-1) \label{eq:twocutterms} \\
		&+\sum_{i=1}^{\Delta}\sum_{j=i+2}^{\Delta}\frac{1}{(\kappa^{i}_{T, \epsilon_2}\kappa^{j}_{T, \epsilon_2})^{4K+1}}\mathcal{A}(S_{L,i},\eta-1) \cdot \mathcal{A}(S_{j,R},\eta-1) \nonumber \\
		&\cdot\Bigg[\sum_{\sigma\in\mathcal{P}(\{i+1, \cdots, j-1\})\setminus\emptyset}(-1)^{\abs{\sigma}+1} \mathcal{B}\left ( \left (\otimes_{k \in \sigma} \Pi^K_{F_k} \bra{0_{M_k}} \right )\phi_{i,j}\left (\otimes_{k \in \sigma} \ket{0_{M_k}}\Pi^K_{F_k}\right ),\frac{\epsilon}{2^\Delta} \right) \Bigg]  \label{eq:basecaseinsum}
		\end{align}

		\tcc{  In the above $\mathcal{B}\left ( \left (\otimes_{k \in \sigma} \Pi^K_{F_k} \bra{0_{M_k}} \right )\phi_{i,j}\left (\otimes_{k \in \sigma} \ket{0_{M_k}}\Pi^K_{F_k}\right ),\frac{\epsilon}{2^\Delta} \right ) $ denotes an $\frac{\epsilon}{2^\Delta}$ approximation of the quantity $\left (\bra{0_{ALL}}\otimes_{k \in \sigma} \Pi^K_{F_k}  \right )\phi_{i,j}\left (\otimes_{k \in \sigma} \Pi^K_{F_k}\ket{0_{ALL}} \right )$ obtained via base case Algorithm $\mathcal{B}$.   Note that for brevity it is implied that $\mathcal{A}(S, \eta)=\mathcal{A}(S, \eta, \Delta, \epsilon, h(n),  K_{heavy},  \mathcal{B})$. }

	}
	\caption{$\mathcal{A}(S, \eta, \Delta, \epsilon, h(n),  K_{heavy},  \mathcal{B})$: Recursive Divide-and-Conquer Subroutine for Algorithm \ref{alg:constant-width-assumption-driver2}.}

\end{algorithm}

\begin{table}\label{table:parameter-list}
	\centering
	\begin{tabular}{| c | c | p{0.6\linewidth} |} 
		\hline
 		Parameter & Value  & Description \\ [1ex] 
 		\hline\hline
 		$C$ & —  & 3D geometrically-local quantum circuit on $n$ qubits. Recall that approximating $|\bra{0^{\otimes n}}C\ket{0^{\otimes n}}|^2$ is the goal of Algorithm \ref{alg:constant-width-assumption-driver2}. \\
 		\hline
 		$\mathcal{B}$ & —  & Algorithm for 2D geometrically-local circuits to be used in the base case of Algorithm \ref{alg:constant-width-assumption2}. See Remark \ref{rm:3Dbasecase} and Theorem 5 of \cite{BGM19} for the base case algorithm used in our analysis. \\ 
 		\hline
 		$\delta$ & $1/n^{-\log(n)}$ & Desired additive error for the approximation output by Algorithm \ref{alg:constant-width-assumption-driver2}.  Note that this is better than inverse polynomial error, for any polynomial. \\
 		\hline
 		$h(n)$ & $\log^7(n)$ & Helps control the overall width of the central region $Z$ (see Description of $Z$ below). \nnote{Should we include the lemma that uses this width?} \\
 		\hline
 		$K_{heavy}$ & — & Set of slices $\{K_i\}$ satisfying Line \ref{eq:heavyweight8}  of Algorithm \ref{alg:constant-width-assumption-driver2}. Existence of these slices follows from an application of Lemma \ref{clm:findheavy}.\\
 		\hline
 		$S$ & —  & Synthesis for a circuit $\Gamma$ as described in Definition \ref{def:subsynth}. During the first run of Algorithm \ref{alg:constant-width-assumption2} this will correspond to a synthesis for the circuit $C$. \\
 		\hline
 		$\eta$ & $\frac{\log(n)}{3\log(4/3)}$  & Maximum depth for the recursive calls to Algorithm \ref{alg:constant-width-assumption2}. $\eta=0$ is one stopping condition for using the base case algorithm in Line \ref{ln:basecase} of Algorithm \ref{alg:constant-width-assumption2}.\\ 
 		\hline
 		$\Delta$ & $\log(n)$  & Number of slices from $K_{heavy}$ that will be used in the division step for Algorithm \ref{alg:constant-width-assumption2}. \\
 		\hline
 		$\epsilon$ & $\delta2^{-10 \log(n) \log(\log(n)))}$  & Desired error for applications of the base-case algorithm in the return statement of Algorithm \ref{alg:constant-width-assumption2}.\\
 		\hline
 		$d$ & —  & Depth of the circuit $C$ \nnote{or is this $\Gamma$?} \\
 		\hline
 		$\ell$ & — & Width of the $N$ register of the synthesis $S$. \\
 		\hline
 		$w_0$ & $20d(\Delta+h(n)+2)$ & Stopping width for Algorithm \ref{alg:constant-width-assumption2}. $\ell<w_0$ is one stopping condition for using the base case algorithm in Line \ref{ln:basecase} of Algorithm \ref{alg:constant-width-assumption2}.\\
 		\hline
 		$Z$ & — & Subset of width $10d(\Delta+h(n)+2)$ in the center of the $N$ register, specified in Algorithm \ref{alg:quasi-poly-subroutine},  from which $\Delta$ slices in $K_{heavy}$ will be chosen. Note that any subproblems contained within $Z$ will, by definition, satisfy the stopping condition $\ell<w_0$, and will consequently be handled by the base-case algorithm $\mathcal{B}$ (see also Remark \ref{rm:3Dbasecase}).\\
 		\hline
 		$K$ & $\log^3(n)$ & The number of repeated compositions of the block encoding for $\rho_{F_i}$ used to produce the approximation $\rho^K_{F_i}$ for the top Schmidt vector $\ket{w_1}\bra{w_1}_{F_i}$.  \\
 		\hline
 		$T$ & $\log^3(n)$ & The number of repeated compositions of the block encoding for $\rho_{B_i}$ used to produce the approximation $\kappa^{i}_{T, \epsilon_2}$ for the top Schmidt coefficient $\lambda_1^i$, as prescribed in Definition \ref{def:kappa}.  \\
 		\hline
 		$\epsilon_2$ & $\delta2^{-10 \log(n) \log(\log(n)))}$ & Desired error for applications of the base-case algorithm when computing the quantities $\kappa_{T,\epsilon_2}^i$ in Line \ref{ln:pre-compute} of Algorithm \ref{alg:constant-width-assumption2}. \\
 		\hline
	\end{tabular}
	\caption{Parameters used within Algorithms \ref{alg:constant-width-assumption-driver2} and \ref{alg:constant-width-assumption2}}
	
\end{table}

\subsection{Run-Time and Error Analysis for Algorithm \ref{alg:quasi-poly-driver}}

\begin{theorem}\label{thm:mainthmwdelta}
Let $C$ be any depth-$d$, 3D geometrically local quantum circuit on $n$ qubits. Algorithm \ref{alg:quasi-poly-driver}, $\mathcal{A}_{full}(C, \mathcal{B}, \delta)$, where $\mathcal{B}$ is the base case algorithm specified in Theorem 5 of \cite{BGM19}, will produce the scalar quantity  $|\bra{0^{\otimes n}}C   \ket{0^{\otimes n}}|^2$ to within $\delta$ error in time
\begin{equation}\label{eq:quasi-poly-runtime-bound}
T(n) = \delta^{-2}2^{d^3\polylog(n)(1/\delta)^{1/\log^2(n)}}
\end{equation}
\end{theorem}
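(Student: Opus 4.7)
The plan is to handle the three branches of Algorithm~\ref{alg:quasi-poly-driver} separately and then establish correctness of the recursive subroutine $\mathcal{A}$ by induction on the remaining recursion budget $\eta$. First I would dispatch the two trivial returns: the branch $\delta \geq 1/2$ is correct because $|\bra{0}C\ket{0}|^2 \in [0,1]$, so outputting $1/2$ has error at most $1/2 \leq \delta$; and the brute-force branch when $\delta \leq 1/n^{\log^2 n}$ runs in time $2^{O(n)}$, which fits within the claimed bound because in this regime $n \leq \log^2(1/\delta)$, and the factor $(1/\delta)^{1/\log^2 n}$ absorbs the exponent. The ``no heavy slice'' branch (Line~\ref{ln:easyifalg1}) is correct by combining Lemma~\ref{clm:independence} and Lemma~\ref{clm:findheavy}: if fewer than $\tfrac{1}{10d}n^{1/3}-h(n)$ slices meet the heaviness threshold, then independence of the indicators $I[M_i=0]$ forces the product $\prod_i p_{total}(M_i=0)$, which dominates $|\bra{0}C\ket{0}|^2$, to be below $\delta$, so returning $0$ lies within error $\delta$. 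The heaviness check itself uses $\mathcal{B}$ via Remark~\ref{rm:3Dbasecase}, since each per-slice quantity is a small-width 3D-local synthesis.

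For the inductive step of $\mathcal{A}$, the key algebraic identity underlying the return on Line~\ref{ln:returnofA} is a \emph{double} application of the inclusion-exclusion of Lemma~\ref{clm:expansiontrick}. First apply it to the $\Delta$ heavy slices $K_1,\ldots,K_\Delta$ that lie inside the central region $Z$: this writes $\bra{0_N}\phi_S\ket{0_N}$ as an alternating sum over subsets $\sigma\subseteq[\Delta]$ of traces of $\ket{\Psi_\sigma}\bra{\Psi_\sigma}$, up to truncation error $(2e(n)+2\projerror)^\Delta$. Each such term factors, via Lemma~\ref{clm:breaktoproduct} applied at its extremal indices $i=\sigma_{\min}$ and $j=\sigma_{\max}$, into a left piece on roughly half the qubits (synthesized by $S_{L,i}$), a right piece on the other half (synthesized by $S_{j,R}$), and a middle piece supported entirely within $Z$. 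The prefactor $\lambda_1^i$ from each application of Lemma~\ref{clm:breaktoproduct} is replaced in the algorithm by $\kappa^i_{T,\epsilon_2}$, using Definition~\ref{def:kappa} and Lemma~\ref{lem:lambdaapprox}, producing the normalizations $1/(\kappa^i_{T,\epsilon_2})^{4K+1}$. Grouping summands by the pair $(i,j)$ and applying inclusion-exclusion a second time to the interior slices between $i$ and $j$ yields exactly the three sums of Equations~(\ref{eq:onecutterms})--(\ref{eq:basecaseinsum}); crucially, the middle pieces all live inside the bounded-width region $Z$ and thus satisfy the stopping-width criterion, so each is a valid base-case call to $\mathcal{B}$.

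The main obstacle is the cumulative error analysis, which must combine three sources at every recursion level. First, the inclusion-exclusion truncation contributes $(2e(n)+2\projerror)^\Delta$; with $e(n)=O(|\log\delta|/h(n))$, $\Delta=\log n$, $h(n)=\log^7 n$, and $K=\log^3 n$, this is super-polynomially small in $1/n$. Second, by Lemma~\ref{lem:lambdaapprox} with $T=\log^3 n$ and $\epsilon_2=\delta\, 2^{-10\log n\log\log n}$, the normalization $\kappa^i_{T,\epsilon_2}$ approximates $\lambda_1^i$ to additive error $O((e(n)^{2T}+\epsilon_2)/(\lambda_1^i)^{2T+1})$, which remains tiny after being raised to the $4K+1$ power and multiplied across at most two slices per summand. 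Third, the $O(\Delta^2\cdot 2^\Delta)$ base-case calls per recursion node each contribute error $\epsilon/2^\Delta$; summing over the $2^{O(\log^3 n)}$ leaves of the recursion tree (depth $\eta=O(\log n)$, branching factor $O(\log^2 n)$), while using $\epsilon=\delta\, 2^{-10\log n\log\log n}$, keeps the accumulated error at most $\delta$ by a geometric-series bound.

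For the runtime, the recursion has depth $\eta=O(\log n)$, and each node spawns $O(\log^2 n)$ recursive calls plus $O(2^{\Delta}\Delta^2)=2^{O(\log n)}$ base-case calls, so the total number of base-case invocations is $2^{O(\log^3 n)}=2^{\polylog n}$. Each call to $\mathcal{B}$ handles a synthesis of width $O(d\polylog n)$ and effective depth $O(dK^2)=O(d\polylog n)$ (accounting for the block-encoding amplification) to error $\epsilon/2^\Delta$; by Remark~\ref{rm:3Dbasecase} this takes time $n(2^\Delta/\epsilon)^2\cdot 2^{O(d^2 w)}=\delta^{-2}\, 2^{d^3\polylog(n)}$. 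Multiplying by the number of leaves and absorbing the residual $2^{\polylog n}$ factor into $(1/\delta)^{1/\log^2 n}$ yields the claimed bound $T(n)=\delta^{-2}\, 2^{d^3\polylog(n)(1/\delta)^{1/\log^2(n)}}$.
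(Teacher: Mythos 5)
Your proposal follows essentially the same route as the paper: dispatch the two trivial branches of Algorithm~\ref{alg:quasi-poly-driver}, use light-cone independence (Lemmas~\ref{clm:independence} and \ref{clm:findheavy}) for the ``no heavy slice'' branch, and for the main branch analyze the recursive subroutine by combining the inclusion-exclusion expansion (Lemma~\ref{clm:expansiontrick}), the product-splitting at the extremal cuts $i=\sigma_{\min}$, $j=\sigma_{\max}$ (Lemma~\ref{clm:breaktoproduct}), the $\kappa$-vs-$\lambda_1$ normalization (Lemma~\ref{lem:lambdaapprox}), and base-case calls to $\mathcal{B}$ inside the bounded-width region $Z$ via Remark~\ref{rm:3Dbasecase}, with the runtime following from a $(2\Delta)$-ary recursion of depth $\eta=O(\log n)$. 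Two quantitative points in your write-up do not close as stated, though both are fixable with the numbers you already have. First, in the error accumulation you invoke ``$2^{O(\log^3 n)}$ leaves'' each contributing $\approx\epsilon$; with $\epsilon=\delta\,2^{-10\log n\log\log n}$ that product can exceed $\delta$, so the budget does not close at that granularity. What actually makes it work (and what the paper's Lemma~\ref{lem:error-bound-even-spaced} records) is that the per-level amplification is $O(\Delta^2)$ over $\eta=O(\log n)$ levels, i.e.\ a total factor $\eta\,20^{\eta}\Delta^{2\eta}=2^{O(\log n\log\log n)}$, which is dominated by the $2^{-10\log n\log\log n}$ savings built into $\epsilon$ and $\epsilon_2$; you should also keep the $2^{\Delta}\projerror$ and $2^{\Delta}K(e(n)^{2T}+\epsilon_2)$ terms from the multi-cut bound (the paper's $E_3$) explicit, since these, not just the truncation term $(2e(n)+2\projerror)^{\Delta}$, control the final comparison with $\delta$. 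Second, in the brute-force branch your justification ``$n\le\log^2(1/\delta)$'' is false in general; the correct observation is that $\delta\le n^{-\log^2 n}$ gives $(1/\delta)^{1/\log^2 n}\ge n$, so the exponent $d^3\polylog(n)(1/\delta)^{1/\log^2 n}$ already dominates the $O(n)$ of the exhaustive computation.
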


\begin{proof}[Proof of Theorem \ref{thm:mainthmwdelta}:]
 	  The proof proceeds in two parts, the first bounding the approximation error obtained by the algorithm, and the second bounding the runtime. See Table \ref{table:parameter-list} for a brief summary of the parameters used throughout Algorithms \ref{alg:constant-width-assumption-driver2} and \ref{alg:constant-width-assumption2}.
	
	\paragraph{Approximation Error:}
	
	 The analysis of the approximation error obtained by $\mathcal{A}_{full}(C, \mathcal{B}, \delta)$ can be broken into four cases according to the IF statements on Lines \ref{ln:checkforsmalldelta}, \ref{ln:deltalarge}, \ref{ln:easyifalg1}, and \ref{ln:hardifalg1} of Algorithm \ref{alg:quasi-poly-driver}.  The first three cases are easy.  If the condition in Line \ref{ln:checkforsmalldelta} is satisfied, then the specified additive error $\delta$ is so small that we can compute the desired quantity, $|\bra{0_{ALL}}C\ket{0_{ALL}}|^2$, exactly, by brute force, in $2^{O(n)}$ time, and this will still take less time than the guaranteed runtime:  
	
	\[T(n) = \delta^{-2}2^{d^3\polylog(n)(1/\delta)^{1/\log^2(n)}}.\]  
	
	So, if the condition  in Line \ref{ln:checkforsmalldelta} is satisfied, then we are done.  If not, we proceed.
	
	 Next, if the condition in Line \ref{ln:deltalarge} is satisfied, then $\delta \geq 1/2$, in which case, outputting $0$ is clearly a $\delta$ additive approximation of $|\bra{0_{ALL}}C\ket{0_{ALL}}|^2$, since $0 \leq |\bra{0_{ALL}}C\ket{0_{ALL}}|^2 \leq 1$.  So, if this is the case, we are done, otherwise we proceed.
	
	 Next, if the condition in Line \ref{ln:easyifalg1} is satisfied, then, either $\delta \geq 1/2$  
	 (in which case, outputting $1/2$ is clearly a $\delta$ additive approximation of $|\bra{0_{ALL}}C\ket{0_{ALL}}|^2$, since $0 \leq |\bra{0_{ALL}}C\ket{0_{ALL}}|^2 \leq 1$), OR there is a set of slices $K_{lightweight}$ of size at least $h(n)$, such that every slice $K_i \in K_{lightweight}$ satisfies:

	\begin{equation} 
	\| \bra{0_{M_i}}C\ket{0_{ALL}} \|^2 = \tr \left (\bra{0_{M_i}}C\ket{0_{ALL}}\bra{0_{ALL}}C^{\dagger}\ket{0_{M_i}}\right)  < 2^{\frac{\log(\delta)}{h(n)}}. \nonumber \label{eq:lightweight9}  
	\end{equation}
	\mnote{need to correct this for the fact that the base case algorithm can only estimate this quantity approximately}
	
	In this case, since, for all $K_i, K_j \in K_{lightweight}$ with $K_i \neq K_j$ we know that $K_i$ is lightcone separated from $K_j$.  It follows that:
	
	\begin{align} 
	|\bra{0_{ALL}}C\ket{0_{ALL}}| \leq \Pi_{K_i \in K} \| \bra{0_{M_i}}C\ket{0_{ALL}} \|\leq \left (2^{\frac{\log(\delta)}{2h(n)}}\right)^{h(n)}  = 2^{\log(\delta)/2} = \sqrt{\delta}. \label{eq:easycaseerror}
	\end{align}
	
	So,
	\begin{align} 
	|\bra{0_{ALL}}C\ket{0_{ALL}}|^2 \leq  \delta. \label{eq:easycaseerror2}
	\end{align}

	\mnote{maybe explain the partial trace in the above equation more?}
	
	Therefore, in this case, Algorithm \ref{alg:quasi-poly-driver} returns the quantity $0$ as an answer \mnote{cite the line number for this?}, which is trivially a $\delta$-additive error approximation of $|\bra{0_{ALL}}C\ket{0_{ALL}}|^2$ by Equation \ref{eq:easycaseerror}.

	On the other hand, if the IF statement on Line \ref{ln:easyifalg1} of Algorithm \ref{alg:quasi-poly-driver} is not satisfied, then that means that the IF statement of Line \ref{ln:hardifalg1} must be satisfied, by definition.  In that case Algorithm \ref{alg:quasi-poly-driver} returns the quantity:
	
	\[\mathcal{A}(S, \eta = \frac{\log(n)}{3\log(4/3)}, \Delta = \log(n), \epsilon = \delta 2^{-10 \log(n) \log(\log(n))} , h(n) = log^7(n), K_{heavy} , \mathcal{B})\]
	
	which we know is an $f(S,\eta,\Delta,\epsilon)$-additive error approximation of $|\bra{0_{ALL}}C\ket{0_{ALL}}|^2$. With $\eta =\frac{\log(n)}{3\log(4/3)}$, by Lemma \ref{lem:error-bound-even-spaced} we know that:
	
	\mnote{the following equation needs to be updated to account for $E_3$}
	
	\mnote{Add comment discussing the tradeoff with $\Delta$, and the need for $K > \Delta$}
	
	\begin{align*}
	f(S,\eta,\Delta,\epsilon) &\leq \eta 20^{\eta}\Delta^{2\eta}\left(E_3(n, K, T, \epsilon_2, \epsilon,\Delta) + (2e(n)+2\projerror)^{\Delta} \right) \\
	&= \eta 20^{\eta}\Delta^{3\eta}	O \left (2^{\Delta} (2e(n))^K + 2^\Delta K \left (e(n)^{2T}+\epsilon_2 \right ) + \epsilon \right)  \\
		&= 1/ 3\log(4/3) \cdot  \log(n) 20^{\frac{\log(n)}{3\log(4/3)}}(\log(n))^{3\frac{\log(n)}{3\log(4/3)}} O	\left (2^{\log(n)} (2(1 - 2^{\frac{\log(\delta)}{\log^7(n)}}))^{\log^3(n)} \right . \\
		& \left .+ 2^{\log(n)} \log^3(n)\left ((1 - 2^{\frac{\log(\delta)}{\log^7(n)}})^{2\log^3(n)}+\epsilon_2 \right ) +  \delta 2^{-10 \log(n) \log(\log(n))} \right )\\
		& \leq  (\log(n))^{2\log(n)} \cdot \poly(n) \cdot 	\left ( (2(1 - 2^{\frac{\log(\delta)}{\log^7(n)}}))^{\log^3(n)} +\epsilon_2  + \delta 2^{-10 \log(n) \log(\log(n))} \right )\\
		&  \leq  (\log(n))^{2\log(n)} \cdot \poly(n) \cdot \left (	\left (  O\left (\frac{1}{\log^4(n)}\right )\right )^{\log^3(n)} +2 \cdot \delta 2^{-10 \log(n) \log(\log(n))} \right )\\
		&\leq  2^{2\log(n)\log(\log(n))} \cdot \poly(n) \cdot 	\left (  O\left (\frac{1}{\log^4(n)}\right )\right )^{\log^3(n)} + \delta 2^{-8 \log(n) \log(\log(n))} \\
	&  \leq  o(1) \cdot \delta + o(1)  \cdot \delta =  o(1)  \cdot \delta \numberthis \label{eq:upper-bound-even-spaced-larger2}
	\end{align*}

	\mnote{need to update the above equation to exactly match the quantities in the new error analysis.}
	where the first inequality follows from Lemma \ref{lem:error-bound-even-spaced} and the rest follows by calculation, noting that $E_3(n, K, T, \epsilon_2, \epsilon,\Delta) \geq  (2e(n)+2\projerror)^{\Delta}$ for our specific choice of parameters (in particular $\Delta = \log(n)$), recalling that $e(n) \leq (1 - 2^{\frac{\log(\delta)}{\log^7(n)}}) = O(1/\log^4(n))$   (since $\delta \geq n^{-\log^2(n)} = 2^{-\log(n)^3}$ as verified in Algorithm \ref{alg:quasi-poly-driver}), $K = \log^3(n)$, $T = \log^3(n)$, and $\epsilon_2 = \delta 2^{-10 \log(n) \log(\log(n))}$.  The final inequality, which claims $ 2^{2\log(n)\log(\log(n))} \cdot \poly(n) \cdot 	\left (  O\left (\frac{1}{\log^2(n)}\right )\right )^{\log^2(n)} = o(1) \cdot \delta$, again follows because $\delta \geq n^{-\log^2(n)}$ as verified in the driver algorithm, Algorithm \ref{alg:quasi-poly-driver}.

	\paragraph{Runtime:}
	
	The runtime analysis of Algorithm \ref{alg:quasi-poly-driver}, $\mathcal{A}_{full}(C, \mathcal{B}, \delta)$, proceeds by considering the same four cases in the IF statements on Lines \ref{ln:checkforsmalldelta}, \ref{ln:deltalarge}, \ref{ln:easyifalg1}, and \ref{ln:hardifalg1}, just as in the error analysis above.  Just as before, the first three cases are easy.  If the IF statement Line \ref{ln:checkforsmalldelta} is satisfied, then the specified additive error $\delta$ is so small that we can compute the desired quantity, $|\bra{0_{ALL}}C\ket{0_{ALL}}|^2$, exactly, by brute force, in $2^{O(n)}$ time, and this will still take less time than the guaranteed runtime:  
	
	\[T(n) = \delta^{-2}2^{d^3\polylog(n)(1/\delta)^{1/\log^2(n)}}.\]  
	
	So, if the IF statement on Line \ref{ln:checkforsmalldelta} is satisfied, then we are done.
	
	If the the IF statement in Line \ref{ln:deltalarge} is satisfied then the algorithm outputs $1/2$, which is a constant time operation, and we are done.

	On the other hand, in the case that these first two IF statements are not satisfied, we must bound the running time of Line \ref{ln:startslicesearch}.  Line \ref{ln:startslicesearch} can clearly be done in polynomial time, which is an additive cost that is significantly less than our ultimate quasi-polynomial running time upper bound, so we can absorb it into the $O(\cdot)$ notation, and continue without explicitly tracking it.  Line \ref{ln:endslicesearch} calls for the use of the base case algorithm $\mathcal{B}$ (which we have specified to be the algorithm from Theorem 5 of \cite{BGM19}) to estimate, for every slice $K_i$, the quantity: 
		
		\begin{equation} 
		\bra{0_{ALL}}C^{\dagger}\ket{0_{M_i}}\bra{0_{M_i}}C\ket{0_{ALL}} = \tr \left (\bra{0_{M_i}}C\ket{0_{ALL}}\bra{0_{ALL}}C^{\dagger}\ket{0_{M_i}}\right),   \nonumber  
		\end{equation}
		
		In particular we use $\mathcal{B}$ to estimate this quantity to within additive error $\tilde{\epsilon} \equiv 2^{\frac{\log(\delta)}{2h(n)} - 1} - 2^{\frac{\log(\delta)}{h(n)}-1}$, and we count only those slices for which the approximation output by $B$ is at least $2^{\frac{\log(\delta)}{h(n)} } + \tilde{\epsilon}$.  So, all the slices accepted by this count will necessarily have weight at least $2^{\frac{\log(\delta)}{h(n)}}+ \tilde{\epsilon} - \tilde{\epsilon} = 2^{\frac{\log(\delta)}{h(n)} }$.  Furthermore, any slice with weight at least $2^{\frac{\log(\delta)}{h(n)}} + 2 \tilde{\epsilon} = 2^{\frac{\log(\delta)}{2h(n)} }$ will certainly be counted by this process.  Therefore, this procedure is able to determine which of Line \ref{ln:endslicesearch} or Line \ref{ln:doubleendslicesearch} is true. (As noted in a comment in the Algorithm, one of these two must be the case.)  It remains to bound the running time cost of these uses of algorithm $\mathcal{B}$.  The key observation here is that the quantity:
		
			\begin{equation} 
			 \tr \left (\bra{0_{M_i}}C\ket{0_{ALL}}\bra{0_{ALL}}C^{\dagger}\ket{0_{M_i}}\right),   \nonumber  
			\end{equation}
			
		only depends on the part of the circuit $C$ that lies in the lightcone of slice $M_i$.  By definition the lightcone of $M_i$ is contained in $K_i$, and $K_i$ is a 2D slice with thickness $30d = O(d)$ in third dimension.  By the discussion in Remark \ref{rm:3Dbasecase} it follows that the base case algorithm   $\mathcal{B}$ can compute the quantity: 
		\begin{equation} 
		\tr \left (\bra{0_{M_i}}C\ket{0_{ALL}}\bra{0_{ALL}}C^{\dagger}\ket{0_{M_i}}\right),   \nonumber  
		\end{equation}
		
		to within additive error $\tilde{\epsilon} \equiv 2^{\frac{\log(\delta)}{2h(n)} - 1} - 2^{\frac{\log(\delta)}{h(n)}-1}$ in time $O(\poly(n)/\tilde{\epsilon}^2)
 = O(\poly(n)/(1 - \delta^{1/2h(n)})^2) = O(2^{\polylog(n)})$, where the final equality follows by straightforward calculation whenever $\delta \leq 1/2$ (recall that we previously established that $\delta \leq 1/2$  by checking that the IF statement in Line \ref{ln:deltalarge} was not satisfied).

	Next, if the IF statement in Line \ref{ln:easyifalg1} is satisfied, then our Algorithm \ref{alg:quasi-poly-driver} returns the quantity $0$ as an answer, which is a constant time operation, and we are done.

	On the other hand, if the IF statement on Line \ref{ln:easyifalg1} of Algorithm \ref{alg:quasi-poly-driver} is not satisfied, then that means that the IF statement of Line \ref{ln:hardifalg1} must be satisfied, by definition.  In that case Algorithm \ref{alg:quasi-poly-driver} returns the quantity:
	
	\[\mathcal{A}(S, \eta = 1/3 \log(n), \Delta = \log(n), \epsilon = \delta 2^{-10 \log(n) \log(\log(n)))} , h(n) = log^3(n), K_{heavy} , \mathcal{B})\]
	
	which we know is an $f(S,\eta,\Delta,\epsilon)$-additive error approximation of $|\bra{0_{ALL}}C\ket{0_{ALL}}|^2$, and takes $
	T(n)<\delta^{-2} 2^{d^3 \polylog(n)}$  time to compute.  This time bound follows directly from the runtime bound on Algorithm \ref{alg:quasi-poly-subroutine}, which is given in Theorem \ref{thm:quasi-poly-runtime-bound} of Subsection \ref{subsection:quasi-run-time}. All together, regardless of which IF statements are true, no step of Algorithm \ref{alg:quasi-poly-driver} exceeds a running time of $\delta^{-2}2^{d^3\polylog(n)(1/\delta)^{1/\log^2(n)}}$, and so we are done.
	
	\mnote{could point out here that, for the non-trivial part of the algorithm the $\delta$ dependence appears substantially better than in the edge cases.  The edge cases may be regarded as a technicality.}
	

\end{proof}

\subsection{Error Analysis for Algorithm \ref{alg:constant-width-assumption2}}\label{subsection:error-constant-width2}

In this subsection we will derive, by induction, an error bound on the estimate produced by Algorithm \ref{alg:constant-width-assumption2}, $\mathcal{A}(S, \eta, \Delta, \epsilon, \mathcal{B}, K_{heavy}, r)$.  We will only pursue an error analysis of  $\mathcal{A}$ under the assumption that the driver algorithm, Algorithm \ref{alg:quasi-poly-driver}, has actually called Algorithm \ref{alg:quasi-poly-subroutine},  and has thus constructed the set $K_{heavy}$ according to specification.  This is because, if Algorithm \ref{alg:quasi-poly-driver} does not call Algorithm \ref{alg:quasi-poly-subroutine}, that means that it has already found an easier approximation to the answer, and the output of $\mathcal{A}$ (Algorithm \ref{alg:quasi-poly-subroutine}) is not relevant.   Recall that, given a synthesis $S$, the goal of Algorithm \ref{alg:constant-width-assumption2} is to compute the quantity $|\bra{0_N}\ket{\phi}\bra{\phi}\ket{0_N}|^2$ where $\ket{\phi}$ is the state synthesized by synthesis $S$, and $N$ is the active register, as defined in Definition \ref{def:synth}.  The Algorithm $\mathcal{A}(S, \eta, \Delta, \epsilon, \mathcal{B},  K_{heavy}, r)$ is a recursive algorithm and, since the variables $\Delta, \epsilon, \mathcal{B}, K_{heavy},$ and $r$ remain unchanged throughout, we will use a simplification $\mathcal{A}(S, \eta, \Delta, \epsilon, \mathcal{B}, K_{heavy}, r) = \mathcal{A}(S, \eta)$ throughout this analysis. The output of Algorithm \ref{alg:constant-width-assumption2} is the scalar quantity:

\begin{equation}
\begin{split}
{}&\mathcal{A}(S, \eta) \equiv \sum_{i=1}^{\Delta}\frac{1}{(\kappa^{i}_{T, \epsilon_2})^{4K+1}}\mathcal{A}(S_{L,i},\eta-1) \cdot \mathcal{A}(S_{i,R},\eta-1) \\
&-\sum_{i=1}^{\Delta}\sum_{j=i+1}^{\Delta}\frac{1}{(\kappa^{i}_{T, \epsilon_2}\kappa^{j}_{T, \epsilon_2})^{4K+1}}\mathcal{A}(S_{L,i},\eta-1) \cdot \mathcal{B}(S_{i,j},\epsilon)\cdot \mathcal{A}(S_{j,R},\eta-1) \\
&+\sum_{i=1}^{\Delta}\sum_{j=i+2}^{\Delta}\frac{1}{(\kappa^{i}_{T, \epsilon_2}\kappa^{j}_{T, \epsilon_2})^{4K+1}}\mathcal{A}(S_{L,i},\eta-1) \cdot \mathcal{A}(S_{j,R},\eta-1) \\
&\cdot\Bigg[\sum_{\sigma\in\mathcal{P}(\{i+1, \dots, j-1\})\setminus\emptyset}(-1)^{\abs{\sigma}+1} \mathcal{B}\left ( \left (\otimes_{k \in \sigma} \Pi^K_{F_k} \bra{0_{M_k}} \right )\phi_{i,j}\left (\otimes_{k \in \sigma} \ket{0_{M_k}}\Pi^K_{F_k}\right ),\frac{\epsilon}{2^\Delta} \right) \Bigg]
\end{split}
\end{equation}
Here the notation $\mathcal{B}(T,\epsilon)$ denotes the use of the ``base case" algorithm to estimate the quantity $|\bra{0_{ALL}}\ket{\phi_T}\bra{\phi_T}\ket{0_{ALL}}|^2$ to within a desired additive error $\epsilon$, for the specified synthesis $T$. For us the ``base case" algorithm will be defined to be the algorithm from Theorem 5 of \cite{BGM19}.  This algorithm can be used for middle sections of the circuit since these sections have a 2D geometry with a ``thickness" of at most a polylogarithmic number of qubits in the third dimension.  The algorithm from Theorem 5 of \cite{BGM19} can compute an $\epsilon$-additive-error approximation of output probabilities of such syntheses in $O(2^{\polylog(n)}\poly(\frac{1}{\epsilon}))$ time.  Note this is not stated explicitly in \cite{BGM19}, which technically only handles true 2D circuits (in other words, circuits with ``thickness" exactly 1 in the third dimension), but their techniques can be extended to the case of polylogarithmic thickness in a straightforward manner (to do so, increase the bond dimension of their Matrix Product States to polylogarithmic size account for the added ``thickness" of qubits). 

Since we have assumed that Algorithm \ref{alg:quasi-poly-driver} has called Algorithm \ref{alg:quasi-poly-subroutine}, we know that every slice $K_i$ in the input set $K_{heavy}$ to $\mathcal{A}(S, \eta, \Delta, w_0, \epsilon, \mathcal{B}, K_{heavy}, r) = \mathcal{A}(S, \eta)$ satisfies:

\begin{equation} 
\abs{\tr \left (\bra{0_{M_i}}C\ket{0_{ALL}}\bra{0_{ALL}}C^{\dagger}\ket{0_{M_i}}\right)}  \geq 2^{\frac{\log(\delta)}{h(n)}} = 2^{\frac{-\log(1/\delta)}{h(n)}}. \nonumber 
\end{equation}

\mnotetwo{{\color{gray} See Algorithm \ref{alg:multilevel} for the definitions of the summands in this sum.  This output is meant to approximate the quantity:

		\begin{align}
		&\sum_{\sigma \in \mathcal{P}([\Delta]): \sigma \neq \emptyset} (-1)^{|\sigma|+1} \bra{\phi_{\L, \sigma(1)}}\ket{0_{N_{\L, \sigma(1)}}} \cdot \prod_{i=1}^{ |\sigma|-1}  \bra{\phi_{\sigma(i), \sigma(i+1)}} \ket{0_{N_{\sigma(i), \sigma(i+1)}}}\cdot  \bra{\phi_{\sigma(|\sigma|), \R}}\ket{0_{N_{\sigma(|\sigma|), \R}}},
		\end{align}
		
		where $\bra{\phi_{\L, \sigma(1)}}, \bra{\phi_{\sigma(i), \sigma(i+1)}},   \bra{\phi_{\sigma(|\sigma|), \R}}$ are the states synthesized by syntheses $S_{\L, \sigma(1)}, S_{\sigma(i), \sigma(i+1)}, S_{\sigma(|\sigma|), \R}$ respectively, as in Definition \ref{def:synth}.  The syntheses $S_{\L, \sigma(1)}, S_{\sigma(i), \sigma(i+1)}, S_{\sigma(|\sigma|), \R}$ are defined in Algorithm \ref{alg:multilevel}. }\nnote{Fix the above quantity.}}

Equivalently, $p_{total}(M_i = 0) \geq 1 - e(n)$, where we define: 

\begin{equation}
e(n) \equiv (1 - 2^{\frac{-\log(1/\delta)}{h(n)}} \label{def:e(n)}).
\end{equation}  

It follows from Lemma \ref{lem:highschmidtnew} that,  $\forall K_i \in K_{heavy}$, $\lambda_1^i  \geq 1 - O(e(n))$.

We know, by Lemma \ref{clm:expansiontrick} that,

\begin{equation}
\left \| \sum_{\sigma\in\mathcal{P}([\Delta])} (-1)^{\abs{\sigma}} \ket{\Psi_\sigma}\bra{\Psi_\sigma}\right \| = \left \|	\ket{\Psi_\emptyset}\bra{\Psi_\emptyset} -  \sum_{\sigma\in\mathcal{P}([\Delta])\setminus\emptyset} (-1)^{\abs{\sigma}+1} \ket{\Psi_\sigma}\bra{\Psi_\sigma}\right \| \leq (2e(n)+2\projerror)^{\Delta} \label{eq:expbounderroan}
\end{equation}
where $\projerror \equiv \left (\frac{1 - \lambda_1^i}{\lambda_1^i} \right )^K \leq \left (\frac{e(n)}{1-O(e(n))} \right )^K $, and the states $\ket{\Psi_\sigma}$ are defined as:

$$\ket{\Psi_\sigma}=\otimes_{j \in \sigma }\Pi^K_{F_j} \otimes_{i\in [\Delta]} \bra{0_{M_i}}C\ket{0_{\all}},$$
$$\ket{\Psi_\emptyset}=\otimes_{i\in [\Delta]} \bra{0_{M_i}}C\ket{0_{\all}}.$$

Note that, $\bra{0_{ALL}}\ket{\Psi_\emptyset}\bra{\Psi_\emptyset}\ket{0_{ALL}}$  is exactly the quantity that we wish for Algorithm \ref{alg:quasi-poly-subroutine} to output!  So, the error between the returned output of Algorithm \ref{alg:quasi-poly-subroutine}, (defined on Line \ref{ln:returnofA} of that algorithm), which we will denote by $\mathcal{A}$ for short, and the desired output quantity $\bra{0_{ALL}}\ket{\Psi_\emptyset}\bra{\Psi_\emptyset}\ket{0_{ALL}}$ is:

\mnote{Uncomment the above figure for arxiv version!!!}

\begin{align}
{}&f(S,\eta,\Delta,\epsilon) \leq \Big\| \bra{0_{ALL}}\ket{\Psi_\emptyset}\bra{\Psi_\emptyset}\ket{0_{ALL}} - \mathcal{A} \Big\| \\
{}&\leq \Big\| \bra{0_{ALL}}\ket{\Psi_\emptyset}\bra{\Psi_\emptyset}\ket{0_{ALL}} - \sum_{\sigma\in\mathcal{P}([\Delta])\setminus\emptyset} (-1)^{\abs{\sigma}+1} \bra{0_{ALL}}\ket{\Psi_\sigma}\bra{\Psi_\sigma}\ket{0_{ALL}} \Big\| \\
&+ \Big\| \sum_{\sigma\in\mathcal{P}([\Delta])\setminus\emptyset} (-1)^{\abs{\sigma}+1} \bra{0_{ALL}}\ket{\Psi_\sigma}\bra{\Psi_\sigma}\ket{0_{ALL}} - \mathcal{A} \Big\| \\
\begin{split}
{}&\leq \left (2e(n)+2\projerror \right)^{\Delta} + \Big\| \sum_{\sigma\in\mathcal{P}([\Delta])\setminus\emptyset} (-1)^{\abs{\sigma}+1} \bra{0_{ALL}}\ket{\Psi_\sigma}\bra{\Psi_\sigma}\ket{0_{ALL}} - \mathcal{A} \Big\| \\
& = \left (2e(n)+2\projerror \right)^{\Delta} + \Bigg\|\sum_{\sigma\in\mathcal{P}([\Delta])\setminus\emptyset} (-1)^{\abs{\sigma}+1} \bra{0_{ALL}}\ket{\Psi_\sigma}\bra{\Psi_\sigma}\ket{0_{ALL}} \\
&-\Bigg( \sum_{i=1}^{\Delta}\frac{1}{(\kappa^{i}_{T, \epsilon_2})^{4K+1}}\mathcal{A}(S_{L,i},\eta-1) \cdot \mathcal{A}(S_{i,R},\eta-1) \\
&-\sum_{i=1}^{\Delta}\sum_{j=i+1}^{\Delta}\frac{1}{(\kappa^{i}_{T, \epsilon_2}\kappa^{j}_{T, \epsilon_2})^{4K+1}}\mathcal{A}(S_{L,i},\eta-1) \cdot \mathcal{B}(S_{i,j},\epsilon)\cdot \mathcal{A}(S_{j,R},\eta-1) \\
&+\sum_{i=1}^{\Delta}\sum_{j=i+2}^{\Delta}\frac{1}{(\kappa^{i}_{T, \epsilon_2}\kappa^{j}_{T, \epsilon_2})^{4K+1}}\mathcal{A}(S_{L,i},\eta-1) \cdot \mathcal{A}(S_{j,R},\eta-1) \\
&\cdot\Bigg[\sum_{\sigma\in\mathcal{P}(\{i+1,\dots,j-1\})\setminus\emptyset}(-1)^{\abs{\sigma}+1} \mathcal{B}\left ( \left (\otimes_{k \in \sigma} \Pi^K_{F_k} \bra{0_{M_k}} \right )\phi_{i,j}\left (\otimes_{k \in \sigma} \ket{0_{M_k}}\Pi^K_{F_k}\right ),\frac{\epsilon}{2^\Delta} \right) \Bigg]
\Bigg)\Bigg\|
\end{split}
\end{align}

Grouping analogous terms and using triangle inequality gives:

\begin{align*}
&f(S,\eta,\Delta,\epsilon) \leq \left (2e(n)+2\projerror \right)^{\Delta} \\
& +  \Bigg \|  \sum_{i=1}^{\Delta}\left ( \frac{1}{(\kappa^{i}_{T, \epsilon_2})^{4K+1}}\mathcal{A}(S_{L,i},\eta-1) \cdot \mathcal{A}(S_{i,R},\eta-1)  - \bra{0_{ALL}}\ket{\Psi_{\{i\}}}\bra{\Psi_{\{i\}}}\ket{0_{ALL}} \right )\\
&-\sum_{i=1}^{\Delta}\sum_{j=i+1}^{\Delta} \left (\frac{1}{(\kappa^{i}_{T, \epsilon_2}\kappa^{j}_{T, \epsilon_2})^{4K+1}}\mathcal{A}(S_{L,i},\eta-1) \cdot \mathcal{B}(S_{i,j},\epsilon)\cdot \mathcal{A}(S_{j,R},\eta-1) - \bra{0_{ALL}}\ket{\Psi_{\{i,j\}}}\bra{\Psi_{\{i,j\}}}\ket{0_{ALL}} \right ) \\
&+\sum_{i=1}^{\Delta}\sum_{j=i+2}^{\Delta}\sum_{\sigma\in\mathcal{P}(\{i+1,\dots,j-1\})\setminus\emptyset}\left ( \frac{1}{(\kappa^{i}_{T, \epsilon_2}\kappa^{j}_{T, \epsilon_2})^{4K+1}}\mathcal{A}(S_{L,i},\eta-1) \cdot \mathcal{A}(S_{j,R},\eta-1) \right .\\
& \cdot(-1)^{\abs{\sigma}+1} \mathcal{B}\left ( \left (\otimes_{k \in \sigma} \Pi^K_{F_k} \bra{0_{M_k}} \right )\phi_{i,j}\left (\otimes_{k \in \sigma} \ket{0_{M_k}}\Pi^K_{F_k}\right ),\frac{\epsilon}{2^\Delta} \right)  \\
&\left .- \bra{0_{ALL}}\ket{\Psi_{\{i,j\}\cup \sigma}}\bra{\Psi_{\{i,j\}\cup \sigma}}\ket{0_{ALL}} \right )
\Bigg\|\\
& \leq \left (2e(n)+2\projerror \right)^{\Delta} \\
& +  \sum_{i=1}^{\Delta}\Bigg \|  \left ( \frac{1}{(\kappa^{i}_{T, \epsilon_2})^{4K+1}}\mathcal{A}(S_{L,i},\eta-1) \cdot \mathcal{A}(S_{i,R},\eta-1)  - \bra{0_{ALL}}\ket{\Psi_{\{i\}}}\bra{\Psi_{\{i\}}}\ket{0_{ALL}} \right )  \Bigg \|\\
&+  \sum_{i=1}^{\Delta}\sum_{j=i+1}^{\Delta}\Bigg \| \left (\frac{1}{(\kappa^{i}_{T, \epsilon_2}\kappa^{j}_{T, \epsilon_2})^{4K+1}}\mathcal{A}(S_{L,i},\eta-1) \cdot \mathcal{B}(S_{i,j},\epsilon)\cdot \mathcal{A}(S_{j,R},\eta-1) - \bra{0_{ALL}}\ket{\Psi_{\{i,j\}}}\bra{\Psi_{\{i,j\}}}\ket{0_{ALL}} \right )  \Bigg \|\\
&+\sum_{i=1}^{\Delta}\sum_{j=i+2}^{\Delta}\Bigg \| \sum_{\sigma\in\mathcal{P}(\{i+1,\dots,j-1\})\setminus\emptyset}(-1)^{|\sigma|+1} \left (  \frac{1}{(\kappa^{i}_{T, \epsilon_2}\kappa^{j}_{T, \epsilon_2})^{4K+1}}\mathcal{A}(S_{L,i},\eta-1) \cdot \mathcal{A}(S_{j,R},\eta-1) \right .\\
& \cdot \mathcal{B}\left ( \left (\otimes_{k \in \sigma} \Pi^K_{F_k} \bra{0_{M_k}} \right )\phi_{i,j}\left (\otimes_{k \in \sigma} \ket{0_{M_k}}\Pi^K_{F_k}\right ),\frac{\epsilon}{2^\Delta} \right)  \\
&\left .- \bra{0_{ALL}}\ket{\Psi_{\{i,j\}\cup \sigma}}\bra{\Psi_{\{i,j\}\cup \sigma}}\ket{0_{ALL}} \right )
\Bigg\| \numberthis \label{eq:threesplits}
\end{align*}

The three summations in Equation \ref{eq:threesplits} are written separately for notational convenience.  The first summation includes all the terms corresponding to the case when the input synthesis is ``sliced" at exactly one cut, and these terms are bounded in Lemma \ref{lem:singlecuttrick}.  The second summation includes all the terms involving ``slices" at exactly 2 cuts, and these terms are bounded in Lemma \ref{lem:doublecuttrick}.  The third summation includes all the remaining terms, which cover all the cases involving ``slices" at three or more of the $\Delta$ cuts.  Each of these terms is bounded in Lemma \ref{lem:multicuttrick}.  For a depiction of these three cases see Figure \ref{fig:error_terms}.

\begin{lemma} \label{lem:singlecuttrick}
	
	\begin{align*}
&\Bigg \|  \left ( \frac{1}{(\kappa^{i}_{T, \epsilon_2})^{4K+1}}\mathcal{A}(S_{L,i},\eta-1) \cdot \mathcal{A}(S_{i,R},\eta-1)  - \bra{0_{ALL}}\ket{\Psi_{\{i\}}}\bra{\Psi_{\{i\}}}\ket{0_{ALL}} \right )  \Bigg \| \\
&\leq E_1(n, K, T, \epsilon_2) +2f(S,\eta-1,\Delta,\epsilon),
	\end{align*}
	
	where $E_1(n, K, T, \epsilon_2) \equiv 10K(e(n)^{2T} + 6\projerror +  \epsilon_2)$.
\end{lemma}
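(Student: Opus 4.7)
The plan is to reduce $\bra{0_{ALL}}\ket{\Psi_{\{i\}}}\bra{\Psi_{\{i\}}}\ket{0_{ALL}}$ to a product of two scalars across the cut $M_i$ via Lemma~\ref{clm:breaktoproduct}, to identify each factor with the quantity $\bra{0_N}\phi_{S_{L,i}}\ket{0_N}$ or $\bra{0_N}\phi_{S_{i,R}}\ket{0_N}$ (up to the $(\lambda_1^i)^K$ normalization carried by Definition~\ref{def:subsynth}), and then to make two controlled substitutions: the exact scalars by the recursive approximations $\mathcal{A}(S_{L,i},\eta-1)$, $\mathcal{A}(S_{i,R},\eta-1)$, and $\lambda_1^i$ by its empirical estimate $\kappa^i_{T,\epsilon_2}$. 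Each of these substitutions will account for one of the three summands that appears in $E_1$.

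The first concrete step is to observe that, for $k\neq i$, the factor $\bra{0_{M_k}}$ appearing in $\ket{\Psi_{\{i\}}}$ but not in $\ket{\Omega_i}$ can be absorbed into the outer $\bra{0_{ALL}}$, so $\bra{0_{ALL}}\ket{\Psi_{\{i\}}}\bra{\Psi_{\{i\}}}\ket{0_{ALL}}=\bra{0_{ALL}}\ket{\Omega_i}\bra{\Omega_i}\ket{0_{ALL}}$. Next, by Lemma~\ref{clm:breaktoproduct}, $\ket{\Omega_i}\bra{\Omega_i}$ is within $6\projerror$ (in trace norm) of $\tfrac{1}{\lambda_1^i}\tr_{F_i}(\ket{\Xi_{L_i}}\bra{\Xi_{L_i}})\otimes\tr_{B_i}(\ket{\Xi_{R_i}}\bra{\Xi_{R_i}})$; sandwiching by $\bra{0_{ALL}}\cdots\ket{0_{ALL}}$ preserves this bound, and the product structure turns the right-hand side into a product of two scalars. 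Unpacking Definition~\ref{def:subsynth}, each scalar equals $(\lambda_1^i)^{-2K}$ times $\bra{0_N}\phi_{S_{L,i}}\ket{0_N}$ (resp.\ $\bra{0_N}\phi_{S_{i,R}}\ket{0_N}$), because the definitions of $\ket{\varphi_{L,i}},\ket{\varphi_{i,R}}$ each carry an explicit prefactor of $(\lambda_1^i)^K$. This yields
\begin{equation*}
\bra{0_{ALL}}\ket{\Omega_i}\bra{\Omega_i}\ket{0_{ALL}}=\tfrac{1}{(\lambda_1^i)^{4K+1}}\,\bra{0_N}\phi_{S_{L,i}}\ket{0_N}\,\bra{0_N}\phi_{S_{i,R}}\ket{0_N}\;\pm\;6\projerror,
\end{equation*}
which matches line~\ref{eq:onecutterms} of Algorithm~\ref{alg:constant-width-assumption2} up to the two remaining substitutions, and explains why the exponent on the normalizer is precisely $4K+1$.

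The inductive step replaces $\bra{0_N}\phi_{S_{L,i}}\ket{0_N}$ by $\mathcal{A}(S_{L,i},\eta-1)$ and similarly on the right; using $|xy-x'y'|\leq|x-x'|\cdot|y|+|x'|\cdot|y-y'|$ and the fact that $|\bra{0_N}\phi_S\ket{0_N}|\leq 1$, this costs at most $2f(S,\eta-1,\Delta,\epsilon)$ after rescaling by the harmless factor $1/(\lambda_1^i)^{4K+1}=1+O(Ke(n))$. The $\kappa$--substitution is handled by Lemma~\ref{lem:lambdaapprox}: from $|\kappa^i_{T,\epsilon_2}-\lambda_1^i|=O((e(n)^{2T}+\epsilon_2)/(\lambda_1^i)^{2T+1})$ together with $\lambda_1^i\geq 1-O(e(n))$ one obtains $\bigl|(\lambda_1^i)^{-(4K+1)}-(\kappa^i_{T,\epsilon_2})^{-(4K+1)}\bigr|=O\!\left(K(e(n)^{2T}+\epsilon_2)\right)$ by a first-order expansion of $x\mapsto x^{-(4K+1)}$ near $x=1$, and multiplying by the two unit-bounded scalar factors gives a contribution of the same order.

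Summing the three contributions $6\projerror$, $O(K(e(n)^{2T}+\epsilon_2))$, and $2f(S,\eta-1,\Delta,\epsilon)$, and being generous with the absolute constants, yields $10K(e(n)^{2T}+6\projerror+\epsilon_2)+2f(S,\eta-1,\Delta,\epsilon)=E_1(n,K,T,\epsilon_2)+2f(S,\eta-1,\Delta,\epsilon)$, as required. The main obstacle will be careful bookkeeping of the $\lambda_1^i$ exponents accumulated through Definition~\ref{def:subsynth} --- the $(\lambda_1^i)^K$ prefactors on each side, together with the $1/\lambda_1^i$ coming from Lemma~\ref{clm:breaktoproduct}, must combine to exactly the exponent $4K+1$ that appears in Algorithm~\ref{alg:constant-width-assumption2} --- and ensuring that the polynomial-in-$K$ blow-up produced by substituting $\kappa^i_{T,\epsilon_2}$ for $\lambda_1^i$ is genuinely dominated once $T$ is chosen so that $e(n)^{2T}$ is negligible (which is why the algorithm sets $T=\log^3(n)$).
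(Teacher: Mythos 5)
Your proposal is correct and follows essentially the same route as the paper: the paper's own proof of this lemma is deferred to the multi-cut argument (Lemma \ref{lem:multicuttrick}) as a simpler special case, and your decomposition into the Lemma \ref{clm:breaktoproduct} product-splitting error ($6\projerror$), the Lemma \ref{lem:lambdaapprox} substitution of $\kappa^i_{T,\epsilon_2}$ for $\lambda_1^i$ ($O(K(e(n)^{2T}+\epsilon_2))$), and the recursive term $2f(S,\eta-1,\Delta,\epsilon)$ is exactly the single-cut instance of the $G_1, G_2, G_4$ structure there (with no $G_3$ since no base-case call appears). Your explicit bookkeeping of the $(\lambda_1^i)^{4K+1}$ exponent via Definition \ref{def:subsynth} is a welcome addition, and the slight slack in treating $1/(\lambda_1^i)^{4K+1}=1+o(1)$ as harmless when it multiplies the recursive term matches the level of constant-fudging in the paper's own analysis.
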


\begin{proof}
	The proof of this Lemma is a simpler special case of the proof of Lemma \ref{lem:multicuttrick} below.  It is simpler in that it follows by using Lemma \ref{lem:lambdaapprox}, and Lemma \ref{clm:breaktoproduct}, and does not require the use of Lemma \ref{clm:expansiontrick} as the proof of Lemma \ref{lem:multicuttrick} does.  For succinctness, instead of writing out this entire proof, we refer the reader to the proof of Lemma \ref{lem:multicuttrick} in the Appendix, of which the proof of this Lemma is a special case. \mnote{Maybe don't need Lemma \ref{clm:schmidproj} because it is already used in the proof of Lemma \ref{clm:breaktoproduct}??}
\end{proof}

\begin{lemma}\label{lem:doublecuttrick}
		\begin{equation}
		\begin{split}
		\Bigg \| \left (\frac{1}{(\kappa^{i}_{T, \epsilon_2}\kappa^{j}_{T, \epsilon_2})^{4K+1}}\mathcal{A}(S_{L,i},\eta-1) \cdot \mathcal{B}(S_{i,j},\epsilon)\cdot \mathcal{A}(S_{j,R},\eta-1) - \bra{0_{ALL}}\ket{\Psi_{\{i,j\}}}\bra{\Psi_{\{i,j\}}}\ket{0_{ALL}} \right )  \Bigg \| \\ 
		\leq E_2(n, K, T, \epsilon_2, \epsilon) +2f(S,\eta-1,\Delta,\epsilon),
		\end{split}
		\end{equation}
		
		where $E_2(n, K, T, \epsilon_2, \epsilon) \equiv 10K(e(n)^{2T} + 6\projerror +  \epsilon_2) + \epsilon $
\end{lemma}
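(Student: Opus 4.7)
The plan is to bridge the gap between the desired quantity $\bra{0_{ALL}}\ket{\Psi_{\{i,j\}}}\bra{\Psi_{\{i,j\}}}\ket{0_{ALL}}$ and the algorithm's estimate through a sequence of approximations, each of which introduces a controlled error. I will proceed through three substitutions, accumulating errors via the triangle inequality, in complete analogy with how the single-cut case of Lemma \ref{lem:singlecuttrick} was treated, but with an extra ``middle'' factor sitting between the two cuts.

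First I would apply Lemma \ref{clm:breaktoproduct} twice, once at slice $K_i$ and once at slice $K_j$. Since $K_i$ and $K_j$ are light-cone separated (both are in $K_{heavy}$ and the construction in Definition \ref{def:subsynth} ensures the operators $\Pi^K_{F_i}$ and $\Pi^K_{F_j}$ act on disjoint, light-cone separated registers), the two applications of Lemma \ref{clm:breaktoproduct} can be performed independently and their errors can be combined by the triangle inequality. This step expresses the state $\ket{\Psi_{\{i,j\}}}\bra{\Psi_{\{i,j\}}}$ as, up to additive operator error of order $12\projerror$, a tensor product of three un-normalized factors: a ``left'' factor corresponding to $\phi_{\L,i}$, a ``middle'' factor corresponding to $\phi_{i,j}$, and a ``right'' factor corresponding to $\phi_{j,\R}$, with a normalizing denominator of $\lambda_1^i \lambda_1^j$. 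Taking $\bra{0_{ALL}} \cdot \ket{0_{ALL}}$ on both sides then turns this operator identity into a scalar identity where the scalar on the right is exactly
\[\frac{1}{\lambda_1^i \lambda_1^j} \cdot \bra{0_{ALL}}\phi_{\L,i}\ket{0_{ALL}} \cdot \bra{0_{ALL}}\phi_{i,j}\ket{0_{ALL}} \cdot \bra{0_{ALL}}\phi_{j,\R}\ket{0_{ALL}},\]
up to error $O(\projerror)$.

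Second I would replace the exact normalization factors $\lambda_1^i$ and $\lambda_1^j$ by their efficiently computable surrogates $\kappa^{i}_{T,\epsilon_2}$ and $\kappa^{j}_{T,\epsilon_2}$ using Lemma \ref{lem:lambdaapprox}. Because each of the three synthesized scalar factors lies in $[0,1]$ (they are overlaps of sub-normalized states with $\ket{0_{ALL}}$), the total multiplicative perturbation from replacing $(\lambda_1^i \lambda_1^j)^{-1}$ by $(\kappa^{i}_{T,\epsilon_2}\kappa^{j}_{T,\epsilon_2})^{-(4K+1)}$ (and from absorbing the compensating powers that already appear in the definition of $\phi_{\L,i}$, $\phi_{i,j}$, $\phi_{j,\R}$ through the $(\lambda^i_1)^K,(\lambda^j_1)^K$ prefactors in Definition \ref{def:subsynth}) contributes additive error of order $O\bigl(K(e(n)^{2T}+\epsilon_2)/(\lambda_1^i \lambda_1^j)^{O(K)}\bigr)$, matching the $10K(e(n)^{2T}+6\projerror+\epsilon_2)$ scale already baked into the definition of $E_1$ in Lemma \ref{lem:singlecuttrick}.

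Third I would replace the three exact scalar values by the algorithm's estimates. The ``left'' and ``right'' factors are approximated by recursive calls $\mathcal{A}(S_{\L,i},\eta-1)$ and $\mathcal{A}(S_{j,\R},\eta-1)$, each contributing at most $f(S,\eta-1,\Delta,\epsilon)$ additive error by the inductive definition of $f$. The new ingredient relative to Lemma \ref{lem:singlecuttrick} is the middle factor, which is approximated by the single base-case call $\mathcal{B}(S_{i,j},\epsilon)$ at additive error $\epsilon$. Since all three exact factors and all three estimates lie in $[0,1]$ (up to negligible scaling by the $\kappa$'s, which are close to $1$), a standard telescoping $|abc-\tilde a\tilde b \tilde c|\le |a-\tilde a|+|b-\tilde b|+|c-\tilde c|$ applies and the total error from this step is at most $2f(S,\eta-1,\Delta,\epsilon)+\epsilon$. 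Summing the three contributions gives the claimed bound $E_2(n,K,T,\epsilon_2,\epsilon)+2f(S,\eta-1,\Delta,\epsilon)$.

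The main technical obstacle is the bookkeeping of the multiplicative normalization factors: since the definitions of $\phi_{\L,i}$, $\phi_{i,j}$, $\phi_{j,\R}$ in Definition \ref{def:subsynth} each carry explicit $(\lambda_1)^K$ prefactors and are divided by $(\kappa^{i}_{T,\epsilon_2}\kappa^{j}_{T,\epsilon_2})^{4K+1}$ in the algorithm, one must carefully check that replacing $\lambda_1$ by $\kappa$ at the $K$-th (resp.\ $(4K+1)$-th) power produces error that scales at most linearly in $K$ times $(e(n)^{2T}+\epsilon_2)$ and not exponentially. This is controlled by the assumption $\lambda_1^i \geq 1-O(e(n))$ that comes from $K_i \in K_{heavy}$, combined with Lemma \ref{lem:lambdaapprox} and the telescoping identity $a^m-b^m = (a-b)\sum_{k=0}^{m-1} a^k b^{m-1-k}$ applied to $m = O(K)$.
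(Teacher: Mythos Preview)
Your proposal is correct and follows essentially the same approach as the paper. The paper treats this lemma as a special case of Lemma \ref{lem:multicuttrick}, whose proof decomposes the error into four pieces $G_1,G_2,G_3,G_4$ corresponding precisely to your three steps: $G_1$ is your first step (two applications of Lemma \ref{clm:breaktoproduct} at cuts $i$ and $j$), $G_2$ is your second step (replacing $\lambda_1$ by $\kappa$ via Lemma \ref{lem:lambdaapprox} and the telescoping identity for powers), and $G_3,G_4$ together are your third step (error $\epsilon$ from $\mathcal{B}(S_{i,j},\epsilon)$ and $2f(S,\eta-1,\Delta,\epsilon)$ from the two recursive calls), with the simplification that no sum over $\sigma$ appears and hence Lemma \ref{clm:expansiontrick} is not needed.
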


\begin{proof}
	
	The proof of this Lemma is a simpler special case of the proof of Lemma \ref{lem:multicuttrick} below.  It is simpler in that it follows by using Lemma \ref{lem:lambdaapprox}, and Lemma \ref{clm:breaktoproduct}, and does not require the use of Lemma \ref{clm:expansiontrick} as the proof of Lemma \ref{lem:multicuttrick} does.  For succinctness, instead of writing out this entire proof, we refer the reader to the proof of Lemma \ref{lem:multicuttrick} in the Appendix, of which the proof of this Lemma is a special case.
\end{proof}

\begin{lemma} \label{lem:multicuttrick}
	\begin{align*}
	&\Bigg \|\sum_{\sigma\in\mathcal{P}(\{i+1, \dots, j-1\})\setminus\emptyset} (-1)^{|\sigma|+1} \Bigg (  \frac{1}{(\kappa^{i}_{T, \epsilon_2}\kappa^{j}_{T, \epsilon_2})^{4K+1}}\mathcal{A}(S_{L,i},\eta-1) \cdot \mathcal{A}(S_{j,R},\eta-1) \\
	& \cdot  \mathcal{B}\left ( \Big (\otimes_{k \in \sigma} \Pi^K_{F_k} \bra{0_{M_k}} \Big )\phi_{i,j}\Big (\otimes_{k \in \sigma} \ket{0_{M_k}}\Pi^K_{F_k}\Big ),\frac{\epsilon}{2^\Delta} \right) - \bra{0_{ALL}}\ket{\Psi_{\{i,j\}\cup \sigma}}\bra{\Psi_{\{i,j\}\cup \sigma}}\ket{0_{ALL}} \Bigg ) \Bigg\| \numberthis\\
	&\leq E_3(n, K, T, \epsilon_2, \epsilon,\Delta) + 16f(S,\eta-1,\Delta,\epsilon),
	\end{align*}
	
	where 
	
	\begin{align*}
	&E_3(n, K, T, \epsilon_2, \epsilon,\Delta) \equiv   O \left (2^{\Delta} (6\projerror) + 2^\Delta K \left (e(n)^{2T}+\epsilon_2 \right ) + \epsilon \right) 
	\end{align*}
\end{lemma}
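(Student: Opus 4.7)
The plan is to bound, for each subset $\sigma\in\mathcal{P}(\{i+1,\dots,j-1\})\setminus\emptyset$, the pointwise difference between the algorithm's triple-product summand and the exact scalar $\bra{0_{\all}}\ket{\Psi_{\{i,j\}\cup\sigma}}\bra{\Psi_{\{i,j\}\cup\sigma}}\ket{0_{\all}}$, and then to sum these bounds over $\sigma$ while exploiting structural redundancy to avoid a catastrophic $2^\Delta$ blow-up on the recursive error.

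First I would apply Lemma \ref{clm:breaktoproduct} iteratively to the two outer heavy cuts $K_i$ and $K_j$, which are lightcone-separated by the construction of $K_{heavy}$. Since every $\Pi^K_{F_k}$ for $k\in\sigma\subseteq\{i+1,\dots,j-1\}$ acts entirely inside $R_i$ (to the right of cut $i$) and simultaneously inside $L_j$ (to the left of cut $j$), these extra projectors are absorbed cleanly into the ``middle'' piece of each factorization. Two applications of Lemma \ref{clm:breaktoproduct} therefore express $\ket{\Psi_{\{i,j\}\cup\sigma}}\bra{\Psi_{\{i,j\}\cup\sigma}}$, up to additive trace-norm error $12\projerror$, as $\tfrac{1}{\lambda_1^i\lambda_1^j}$ times a triple tensor product whose factors are (a) $\tr_{F_i\cup M_i}(\ket{\Xi_{L_i}}\bra{\Xi_{L_i}})$, (b) the $\sigma$-dressed middle piece $\bigl(\otimes_{k\in\sigma}\Pi^K_{F_k}\bra{0_{M_k}}\bigr)\phi_{i,j}\bigl(\otimes_{k\in\sigma}\ket{0_{M_k}}\Pi^K_{F_k}\bigr)$, and (c) $\tr_{B_j\cup M_j}(\ket{\Xi_{R_j}}\bra{\Xi_{R_j}})$. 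Taking the $\bra{0_{\all}}\cdot\ket{0_{\all}}$ expectation of both sides turns this into a product of three scalars matching, modulo normalization, the three algorithmic primitives in the return statement.

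Next I would replace each of the three true scalars by its algorithmic estimate. The inductive hypothesis on $\eta$ bounds each outer approximation error by $f(S,\eta-1,\Delta,\epsilon)$; the middle scalar is estimated by $\mathcal{B}$ with tolerance $\epsilon/2^\Delta$. The $(\lambda_1^i\lambda_1^j)^{-1}$ prefactor above, combined with the $(\lambda_1^i\lambda_1^j)^{2K}$ normalization concealed inside $\Proj^K_{F_i}$ and $\Proj^K_{B_j}$ by Definition \ref{def:schmidtproj}, totals $(\lambda_1^i\lambda_1^j)^{4K+1}$, which the algorithm replaces by $(\kappa^i_{T,\epsilon_2}\kappa^j_{T,\epsilon_2})^{4K+1}$; Lemma \ref{lem:lambdaapprox} bounds this substitution error by $O(K(e(n)^{2T}+\epsilon_2))$. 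Applying $|abc-a'b'c'|\le|a-a'|+|b-b'|+|c-c'|$ (valid since every factor and estimate is $O(1)$), one $\sigma$-summand incurs absolute error at most $O(K(e(n)^{2T}+\projerror+\epsilon_2))+2f(S,\eta-1,\Delta,\epsilon)+\epsilon/2^\Delta$.

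Finally I would sum over the $2^\Delta$ subsets $\sigma$. The base-case errors aggregate to $2^\Delta\cdot\epsilon/2^\Delta=\epsilon$, and the $O(K(e(n)^{2T}+\projerror+\epsilon_2))$ errors aggregate to $O(2^\Delta K(e(n)^{2T}+\projerror+\epsilon_2))$, matching $E_3$. The crucial step is that both $\mathcal{A}(S_{L,i},\eta-1)$ and $\mathcal{A}(S_{j,R},\eta-1)$ are independent of $\sigma$ and share the $1/(\kappa^i\kappa^j)^{4K+1}$ prefactor; they can therefore be pulled outside the alternating $\sigma$-sum before taking absolute values, so the recursive $f(S,\eta-1,\Delta,\epsilon)$ error is multiplied only by a constant (at most $16$, accounting for the two cross-terms $|A_L-A_L^*|\cdot|A_R^*|$ and $|A_L^*|\cdot|A_R-A_R^*|$ plus slack from the normalization swap) rather than $2^\Delta$. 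The main obstacle is precisely this bookkeeping: the product-of-approximations structure admits a $2^\Delta$ blow-up on the recursive error unless one first factors out the $\sigma$-independent terms, and the normalization by $\kappa$ rather than $\lambda_1$ demands a uniform-in-$\sigma$ control via Lemma \ref{lem:lambdaapprox}.
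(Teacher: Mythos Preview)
Your overall structure mirrors the paper's proof closely: you correctly use Lemma~\ref{clm:breaktoproduct} twice to factor at cuts $i$ and $j$ (this is the paper's $G_1$ term), Lemma~\ref{lem:lambdaapprox} for the $\lambda\to\kappa$ substitution (the paper's $G_2$), and you correctly observe that the $\sigma$-independent outer factors $\mathcal{A}(S_{L,i},\eta-1)$, $\mathcal{A}(S_{j,R},\eta-1)$ should be pulled outside the alternating sum before taking absolute values (this corresponds to the paper's $G_3$ and $G_4$).

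There is, however, a genuine gap in your final step. After factoring, the contribution of the recursive error has the form
\[
\bigl|\,\tfrac{1}{(\lambda_1^i\lambda_1^j)^{4K+1}}\,A_L A_R - \tfrac{1}{(\lambda_1^i\lambda_1^j)^{4K+1}}\,A_L^* A_R^*\,\bigr|\;\cdot\;\Bigl|\sum_{\sigma} (-1)^{|\sigma|+1} B_\sigma^*\Bigr|,
\]
where $B_\sigma^* = \bra{0_{\all}}\bigl(\otimes_{k\in\sigma}\Pi^K_{F_k}\bra{0_{M_k}}\bigr)\phi_{i,j}\bigl(\otimes_{k\in\sigma}\ket{0_{M_k}}\Pi^K_{F_k}\bigr)\ket{0_{\all}}$. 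Your ``at most $16$'' accounts only for the first factor; you never bound the second. Since there are up to $2^{\Delta-2}$ subsets $\sigma$ and each $|B_\sigma^*|\le 1$, the triangle inequality alone yields $\bigl|\sum_\sigma (-1)^{|\sigma|+1} B_\sigma^*\bigr| = O(2^\Delta)$, which is precisely the blow-up you claim to avoid. Simply ``not taking absolute values yet'' does not help unless you can exhibit actual cancellation in the sum.

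The paper closes this gap by applying Lemma~\ref{clm:expansiontrick} to the state $\phi_{i,j}$ over the interior cuts $\{i+1,\dots,j-1\}$: that lemma shows the alternating sum over nonempty $\sigma$ equals $\bra{0_{\all}}\phi_{i,j}\ket{0_{\all}}\le 1$ up to additive error $(2e(n)+2\projerror)^{\Delta-2}=o(1)$, so the whole sum is $O(1)$. The paper singles out this application of Lemma~\ref{clm:expansiontrick} as the distinguishing ingredient of Lemma~\ref{lem:multicuttrick} relative to the simpler Lemmas~\ref{lem:singlecuttrick} and~\ref{lem:doublecuttrick}, and it is absent from your argument.
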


\begin{proof}
The proof follows by two uses Lemma \ref{clm:breaktoproduct}, Lemma \ref{lem:lambdaapprox},  AND (unlike the previous two Lemmas) Lemma \ref{clm:expansiontrick}.  See Appendix \ref{appendix:lemma-proofs} for a full proof.
\end{proof}

Returning to where we left off in Equation \ref{eq:threesplits}, using all three of the above Lemmas, we have

\begin{equation}\label{eq:error-recursive-bound}
\begin{split}
f(S,\eta,\Delta,\epsilon) &\leq \left (2e(n)+2\projerror \right)^{\Delta}+ \Delta\left(E_1(n, K, T, \epsilon_2, \epsilon) +2f(S,\eta-1,\Delta,\epsilon)\right) \\
&+ \Delta^2\left(E_2(n, K, T, \epsilon_2, \epsilon) +2f(S,\eta-1,\Delta,\epsilon)\right)+\Delta^2\left(E_3(n, K, T, \epsilon_2, \epsilon,\Delta) +16 f(S,\eta-1,\Delta,\epsilon)\right) \\
&\leq \left (2e(n)+2\projerror \right)^{\Delta}+ 3 \Delta^2 E_3(n, K, T, \epsilon_2, \epsilon,\Delta) + 20\Delta^2f(S,\eta-1,\Delta,\epsilon) ,
\end{split}
\end{equation}

where the final inequality follows because $E_3(n, K, T, \epsilon_2, \epsilon,\Delta) \geq E_2(n, K, T, \epsilon_2, \epsilon) \geq E_1(n, K, T, \epsilon_2, \epsilon)$.

\mnote{We note that, during the course of Algorithm \ref{alg:constant-width-assumption2}, any time that we may wish to apply this recursion, we always have $\ell \geq w_0$.  This is because, in the definition of Algorithm \ref{alg:constant-width-assumption2}, if $\ell < w_0 = 10d(\Delta + h(n))$ the algorithm the calls subroutine $\mathcal{B}(S, \epsilon)$ and computes the final desired quantity with error $\epsilon$.  Thus, whenever $\ell < w_0$ for a given synthesis $S$, we have }

\begin{figure}[h!]
	\centering
	\includegraphics[scale=0.9]{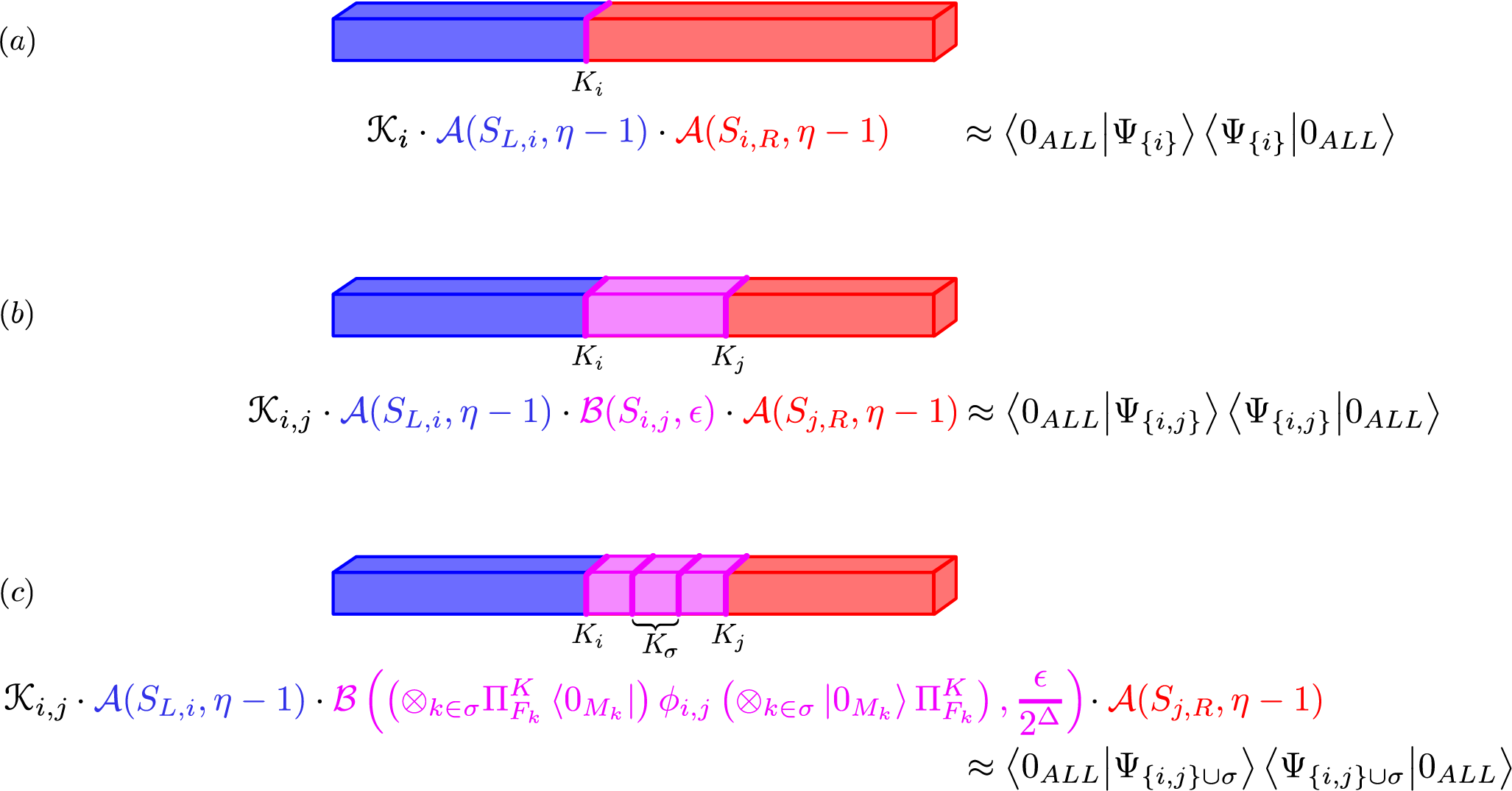}
	\caption{Depiction of the types of terms that appear in Equation \ref{eq:threesplits}: (a) those involving exactly one cut, $\{i\}$, (b) those involving exactly two cuts, $\{i,j\}$, and (c) those involved three or more of the $\Delta$ cuts, $\{i,j\}\cup\sigma$, for $\sigma\in\mathcal{P}(\{ i+1,\dots,j-1 \})\setminus\emptyset$. For brevity we have denoted $\mathcal{K}_i\equiv \frac{1}{(\kappa^{i}_{T, \epsilon_2})^{4K+1}}$ and $\mathcal{K}_{i,j}\equiv \frac{1}{(\kappa^{i}_{T, \epsilon_2}\kappa^{j}_{T, \epsilon_2})^{4K+1}}$.} \label{fig:error_terms}
\end{figure}

We note that, when $\eta = 0$ we have $f(S, \eta, \Delta, \epsilon) = f(S, 0, \Delta, \epsilon) \leq \epsilon$.  This is because, in the definition of Algorithm \ref{alg:constant-width-assumption2}, if $\eta < 1$ the algorithm the calls subroutine $\mathcal{B}(S, \epsilon)$ and computes the final desired quantity with error $\epsilon$.  This gives us the base case that we need to bound $f(S, \eta, \Delta, \epsilon)$ via standard recursive analysis:

\begin{lemma}\label{lem:error-bound-even-spaced}
	The error function $f(S, \eta, \Delta, \epsilon)$ obeys the following bound:

	\begin{equation}\label{eq:upper-bound-even-spaced}
	\begin{split}
	f(S, \eta, \Delta, \epsilon) &\leq 	\eta 20^{\eta}\Delta^{2\eta}\left(E_3(n, K, T, \epsilon_2, \epsilon\Delta) + (2e(n)+2\projerror)^{\Delta} \right)
	\end{split}
	\end{equation}

\end{lemma}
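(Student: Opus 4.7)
My plan is to prove this bound by a straightforward induction on $\eta$, treating the recursive inequality in Equation~\eqref{eq:error-recursive-bound} as a linear recurrence of the form $f(\eta) \leq A + B\cdot f(\eta-1)$, where $B \equiv 20\Delta^2$ is the multiplier and $A \equiv (2e(n)+2\projerror)^{\Delta} + 3\Delta^2 E_3(n,K,T,\epsilon_2,\epsilon,\Delta)$ is the inhomogeneous term. The base case comes directly from the structure of Algorithm~\ref{alg:constant-width-assumption2}: whenever $\eta < 1$, the algorithm immediately delegates to the base-case algorithm $\mathcal{B}$ with error budget $\epsilon$, so $f(S, 0, \Delta, \epsilon) \leq \epsilon$.

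The inductive step is the standard unrolling of a linear recurrence. I would unroll $\eta$ times to obtain
\[
f(S,\eta,\Delta,\epsilon) \;\leq\; A \sum_{k=0}^{\eta-1} B^k \;+\; B^{\eta} f(S,0,\Delta,\epsilon) \;\leq\; A\cdot \eta B^{\eta-1} \;+\; B^{\eta}\epsilon,
\]
using that each of the $\eta$ summands in the geometric series is at most $B^{\eta-1}$ since $B \geq 1$. I would then absorb the $\epsilon$ from the base-case contribution into $E_3$ by invoking the observation that $\epsilon$ appears as one of the positive summands in the definition of $E_3(n,K,T,\epsilon_2,\epsilon,\Delta)$, so $\epsilon \leq E_3$. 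Likewise, $A$ itself decomposes cleanly as a positive combination of $(2e(n)+2\projerror)^{\Delta}$ and $E_3$, with a polynomial-in-$\Delta$ coefficient.

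Putting the two contributions together, I get
\[
f(S,\eta,\Delta,\epsilon) \;\leq\; \eta B^{\eta}\Bigl((2e(n)+2\projerror)^{\Delta} + E_3(n,K,T,\epsilon_2,\epsilon,\Delta)\Bigr),
\]
after collecting constants and using $B = 20\Delta^2$, which yields $B^\eta = 20^\eta \Delta^{2\eta}$ as claimed.

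The only real ``obstacle'' is careful bookkeeping of constants: the recursive inequality carries a $3\Delta^2 E_3$ contribution from $A$ in addition to the $20\Delta^2 f(\eta-1)$ recursive part, and one must verify that the geometric-series aggregation does not produce extra polynomial factors beyond the $\eta 20^\eta \Delta^{2\eta}$ stated. In the worst case one absorbs an additional $O(\Delta^2)$ or $O(\eta)$ factor into the prefactor, which is harmless: this bound is ultimately applied in the proof of Theorem~\ref{thm:mainthmwdelta} with $\eta = \Theta(\log n)$ and $\Delta = \Theta(\log n)$, where any such slack is dominated by the $2^{O(\log n \log\log n)}$ term that is already present and is in turn overwhelmed by the doubly-exponentially-small factor $(O(1/\log^4 n))^{\log^3 n}$ arising from $E_3$. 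No new ideas beyond the recurrence solution are needed; all the substantive estimates (Lemmas~\ref{lem:singlecuttrick}, \ref{lem:doublecuttrick}, \ref{lem:multicuttrick}) have already been used to derive Equation~\eqref{eq:error-recursive-bound}.
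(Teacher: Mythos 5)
Your proposal is correct and follows essentially the same route as the paper: the paper's proof simply invokes ``standard analysis'' of the recursion in Equation~\eqref{eq:error-recursive-bound} together with the base case $f(S,0,\Delta,\epsilon)\leq\epsilon\leq E_3$, which is exactly the linear-recurrence unrolling you carry out explicitly (and your remark that any leftover $O(\Delta^2)$ or $O(\eta)$ slack is harmless matches the level of precision actually used when the bound is applied in Theorem~\ref{thm:mainthmwdelta}).
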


\begin{proof}
	The Lemma follows by using standard analysis of the recursion in Equation \ref{eq:error-recursive-bound}, and with the base case $f(S, 0, \Delta, \epsilon) \leq \epsilon \leq E_3(n, K, T, \epsilon_2, \epsilon)$.
\end{proof}

\subsection{ Run-Time Analysis for Algorithm  \ref{alg:constant-width-assumption2}}\label{subsection:quasi-run-time}

In this section we will derive a bound on the run-time for Algorithm \ref{alg:quasi-poly-subroutine}.  Recall that, given a synthesis $S$, the goal of Algorithm \ref{alg:quasi-poly-subroutine} is to compute the quantity $|\bra{0_N}\phi_S\ket{0_N}|^2$ where $\phi_S$ is the state synthesized by synthesis $S$, and $N$ is the active register, as defined in Definition \ref{def:synth}.  Recall that $\ell$ is defined, in Algorithm \ref{alg:quasi-poly-subroutine}, to be the width of the $N$ register for our input synthesis $S$.  We use $T(\ell)$ to denote the run-time bound for our algorithm on a synthesis with an $N$ register of width $\ell$.

 The main time cost for Algorithm \ref{alg:quasi-poly-subroutine} is accrued by the \textbf{return} line of the algorithm, which makes recursive calls to Algorithm \ref{alg:quasi-poly-subroutine} on a variety of smaller subproblems, as well as calls to the base case algorithm $\mathcal{B}$, and computing the $\kappa^i_{T, \epsilon_2}$ quantities.  All of the steps performed in Algorithm \ref{alg:quasi-poly-subroutine} before the \textbf{return} line (selecting the region $Z$, etc), can easily be done in (lesser) polynomial time, and their total cost will be absorbed into the term $\mu(\cdot)$ in our calculations below. 

Note that the \textbf{return} line of Algorithm \ref{alg:quasi-poly-subroutine} makes $2 \Delta$ distinct recursive calls to itself, which each only need to be computed once, even though they are reused multiple times in Equations \ref{eq:onecutterms}, \ref{eq:twocutterms}, and \ref{eq:basecaseinsum}.  Furthermore, the width of the active register of the synthesis input to each of the recursive calls to Algorithm \ref{alg:quasi-poly-subroutine} is at most $\frac{\ell+|Z|}{2} \leq \frac{3}{4} \ell$ (where the inequality follows because, in the context of Algorithm 2, the relationship $|Z| \leq \frac{\ell}{2}$ is enforced by lines \ref{ln:alg2stoppingwidth} and \ref{ln:alg2defz}).     Therefore, we know that our recursive time analysis will have the form:

\[T(\ell) < 2 \Delta T \left(\frac{3}{4} \ell \right ) + \zeta_1(n) \]

Where $\zeta_1(n)$ absorbs the time cost of all steps in Algorithm \ref{alg:quasi-poly-subroutine} other than the $2 \Delta$ recursive calls.  We will now explicitly bound the term $\zeta_1(n)$ in order to obtain a complete runtime bound.  The main time cost in $\zeta_1(n)$ comes from executing the calls to the base case algorithm $\mathcal{B}$ in lines  \ref{eq:twocutterms}, and \ref{eq:basecaseinsum} of Algorithm \ref{alg:quasi-poly-subroutine}, as well as computing the values  $\kappa^i_{T, \epsilon_2}$ that appear throughout the \textbf{return} line of  Algorithm \ref{alg:quasi-poly-subroutine}.  We will analyze the time cost of these two tasks separately in the two corresponding paragraphs below.  The only remaining time cost then comes from the steps performed in Algorithm \ref{alg:quasi-poly-subroutine} before the \textbf{return} line (selecting the region $Z$, etc),  and can easily be done with (less significant) polynomial time cost, which we will denote by $\mu(n)$.

\paragraph{Uses of $\mathcal{B}$ in the \textbf{return} line of Algorithm \ref{alg:quasi-poly-subroutine}: }There are at most $2\Delta^2$ calls to the base case algorithm $\mathcal{B}$ in the \textbf{return} line of Algorithm \ref{alg:quasi-poly-subroutine}, which all occur in Equations \ref{eq:twocutterms} and \ref{eq:basecaseinsum} of the \textbf{return} line.  The terms in Equation \ref{eq:twocutterms} have the form $\mathcal{B}(S_{i,j},\epsilon)$, and the terms in Equation \ref{eq:basecaseinsum} have the form $\mathcal{B}\left ( \left (\otimes_{k \in \sigma} \Pi^K_{F_k} \bra{0_{M_k}} \right )\phi_{i,j}\left (\otimes_{k \in \sigma} \ket{0_{M_k}}\Pi^K_{F_k}\right ),\frac{\epsilon}{2^\Delta} \right)$.  By Remark \ref{rm:3Dbasecase} we know that, for any 3D geometrically-local, depth-$D$ synthesis $S$ (on n qubits, total) we can use the algorithm from Theorem 5 of \cite{BGM19} to compute the quantity $\mathcal{B}(S,\epsilon) = |\bra{0_{ALL}} \phi_S \ket{0_{ALL}}|^2 \pm \epsilon $ in time $n\epsilon^{-2}2^{O(D^2\cdot w)}$, where $w$ is the width of the ``active register", $N$, of the synthesis $S$ in the third dimension (see Definition \ref{def:synth} for the definitions of syntheses $S$, and the register $N$).  Using  Remark \ref{rm:3Dbasecase} in this way, we see that the quantities $\mathcal{B}(S_{i,j},\epsilon) = |\bra{0_{ALL}} \phi_{i,j} \ket{0_{ALL}}|^2 \pm \epsilon $ from Equation \ref{eq:twocutterms} can each be computed in time $n\epsilon^{-2}2^{O((dK^2)^2\cdot 10d(\Delta+h(n)+2) )}$, because the synthesis $S_{i,j}$ has depth at most $O(dK^2)$ \footnote{The reason that the depth of the synthesis $S_{i,j}$ is bounded by $O(dK^2)$ is that $S_{i,j}$ is constructed using the original depth-$d$ circuit $C$, acted on by a number of operators $\Pi^K_{F_k}$.  However, it is important to note that each $\Pi^K_{F_k}$ is always acts in tensor product with $\Pi^K_{F_j}$ on any other cut $K_j$ for $j \neq k$.  Also, each cut is only acted on at most once this way.  It follows that these additive depths do not pile up during the course of Algorithm \ref{alg:quasi-poly-subroutine}, and thus the total depth never exceeds $O(dK^2)$. }, and has an active region of width at most $10d(\Delta+h(n)+2)$ in the third dimension (this width is enforced by line \ref{ln:alg2defz} of Algorithm \ref{alg:quasi-poly-subroutine}, where $S_{i,j}$ is explicitly specified in conjunction with Definition \ref{def:subsynth}). 

Similarly, using  Remark \ref{rm:3Dbasecase} in the same way, we see that the quantities:

\begin{align}
\mathcal{B}\left ( \left (\otimes_{k \in \sigma} \Pi^K_{F_k} \bra{0_{M_k}} \right )\phi_{i,j}\left (\otimes_{k \in \sigma} \ket{0_{M_k}}\Pi^K_{F_k}\right ),\frac{\epsilon}{2^\Delta} \right) = | \bra{0_{ALL}}\left (\otimes_{k \in \sigma} \Pi^K_{F_k} \bra{0_{M_k}} \right )\phi_{i,j}\left (\otimes_{k \in \sigma} \ket{0_{M_k}}\Pi^K_{F_k}\right )\ket{0_{ALL}} |^2 \pm \frac{\epsilon}{2^\Delta}, \nonumber
\end{align}

can be computed in time $n2^{2\Delta}\epsilon^{-2}2^{O((dK^2+2dK^2)^2\cdot 10d(\Delta+h(n)+2) )}$, because the synthesis $\left (\otimes_{k \in \sigma} \Pi^K_{F_k} \bra{0_{M_k}} \right )\phi_{i,j}\left (\otimes_{k \in \sigma} \ket{0_{M_k}}\Pi^K_{F_k}\right )$ has an active region that is a subset of the active region of the synthesis $S_{i,j}$ for $\phi_{i,j}$, and therefore still has width as most $10d(\Delta+h(n)+2) )$ in the third dimension.  Additionally, the depth of the synthesis $\left (\otimes_{k \in \sigma} \Pi^K_{F_k} \bra{0_{M_k}} \right )\phi_{i,j}\left (\otimes_{k \in \sigma} \ket{0_{M_k}}\Pi^K_{F_k}\right )$ is at most $O(2dK^2)$ higher than the depth of the synthesis $S_{i,j}$ for $\phi_{i,j}$, because the depth of the operator $\otimes_{k \in \sigma} \Pi^K_{F_k}$ is at most the maximum depth of the operator $\Pi^K_{F_k}$ for any $k$, which is at most $O(dK^2)$, by Lemma \ref{lem:powerblockencoding} and Definitions \ref{def:schmidtproj} and \ref{def:wrappedpi}.  Note that we are making a slight abuse of notation in that we have always referred to the state $\left (\otimes_{k \in \sigma} \Pi^K_{F_k} \bra{0_{M_k}} \right )\phi_{i,j}\left (\otimes_{k \in \sigma} \ket{0_{M_k}}\Pi^K_{F_k}\right )$ as a synthesis, since it's first appearance in the \textbf{return} line of Algorithm \ref{alg:quasi-poly-subroutine}, even though we never wrote out the explicit synthesis structure for this state according to Definition \ref{def:synth}.  This was done in order to keep the statement of Algorithm \ref{alg:quasi-poly-subroutine} and the subsequent discussion succinct.  We leave to the reader the exercise of expressing the state $\left (\otimes_{k \in \sigma} \Pi^K_{F_k} \bra{0_{M_k}} \right )\phi_{i,j}\left (\otimes_{k \in \sigma} \ket{0_{M_k}}\Pi^K_{F_k}\right )$ as a synthesis according to Definition \ref{def:synth}, which is accomplished by a straightforward modification of the (already defined) synthesis  $S_{i,j}$ for $\phi_{i,j}$, in order to include the operators and projections $\left (\otimes_{k \in \sigma} \Pi^K_{F_k} \bra{0_{M_k}} \right )$ into the synthesis (this is something that the definition of a synthesis was designed to do in the first place).  

So, we are now ready to explicitly include, in our recursive time analysis, the time cost for the use of the base case algorithm $\mathcal{B}$ in the \textbf{return} line of Algorithm \ref{alg:quasi-poly-subroutine}.  We now have:

\[T(\ell) < 2 \Delta T \left(\frac{3}{4} \ell \right ) + \Delta^2 n\epsilon^{-2}2^{O((dK^2)^2\cdot 10d(\Delta+h(n)+2) )} + \Delta^2 n2^{2\Delta}\epsilon^{-2}2^{O((dK^2+2dK^2)^2\cdot 10d(\Delta+h(n)+2) )} +  \zeta_2(n) + \mu(n) \]

Where $\zeta_2(n)$ denotes the time cost for computing the quantities $\kappa^i_{T, \epsilon_2}$ in the \textbf{return} line of Algorithm \ref{alg:quasi-poly-subroutine} (discussed below), and $\mu(n)$ denotes the (less significant) polynomial time cost that comes from the steps performed in Algorithm \ref{alg:quasi-poly-subroutine} before the \textbf{return} line (selecting the region $Z$, etc), as previously discussed. 

Noting that, in Algorithm \ref{alg:quasi-poly-subroutine}, the parameters are set as $\Delta = \log(n)$, $K = \log^3(n)$, $h(n) = \log^7(n)$, $\epsilon = \delta 2^{-10\log(n)\log(\log(n))}$ we have:

\begin{align}
T(\ell) < 2 \Delta T \left(\frac{3}{4} \ell \right ) + \delta^{-2} 2^{d^3 \polylog(n)}+  \zeta_2(n) + \mu(n) \nonumber
\end{align}

We now bound the remaining time cost, $\zeta_2(n)$, in the following paragraph.

\paragraph{Computation of quantities $\kappa^i_{T, \epsilon_2}$ in the \textbf{return} line of Algorithm \ref{alg:quasi-poly-subroutine}: }

The task of computing the $\kappa^i_{T, \epsilon_2}$ values in the \textbf{return} line of Algorithm \ref{alg:quasi-poly-subroutine} requires using the base case algorithm $\mathcal{B}$ as specified and discussed in Definition \ref{def:kappa} (via the usual prescription in Remark \ref{rm:3Dbasecase}).  In particular, according to Definition \ref{def:kappa}, we have that:

	\begin{align}
	\kappa^{i}_{T, \epsilon_2} \equiv \frac{\mathcal{B}(\Lambda_1^{j,T},\epsilon_2)}{\mathcal{B}(Z_j^{2T},\epsilon_2)} = \frac{ \tr(\rho_{B_j}^T\ket{\psi} \bra{\psi}_{B_j \cup M_j \cup F_j}\rho_{B_j}^T ) \pm \epsilon_2}{\tr(\rho_{B_j}^{2T} )\pm \epsilon_2}\nonumber
	\end{align}
	
	where the syntheses $\Lambda_1^{j,T}$, and $Z_j^{2T}$ are explicitly specified as 
	
	\begin{align}
	\Lambda_1^{j,T} \equiv & \left (\Gamma_{\Proj_{B_j}^T}  \circ \Gamma_{B_j \cup M_j \cup F_j}, (B_j  \cup F_j), (M_j \cup M_j' \cup F_j' \cup B^1_j,... \cup B^T_j ) \right . \nonumber\\
	&\left . ,( B_j \cup M_j \cup F_j \cup M_j' \cup F_j' \cup B^1_j,... \cup B^T_j) \right), \nonumber
	\end{align}
	
	\begin{align}
	Z_j^{T} \equiv & \left (\Gamma_{\Proj_{B_j}^T} , (B_j ), ( M_j' \cup F_j' \cup B^1_j,... \cup B^T_j ) \right . \nonumber\\
	&\left . ,( B_j  \cup M_j' \cup F_j' \cup B^1_j,... \cup B^T_j) \right), \nonumber
	\end{align}

	Note that this is a different way of using syntheses that in other parts of Algorithm \ref{alg:quasi-poly-subroutine} because $\Lambda_1^{j,T}$, and $Z_j^{2T}$ are scalars rather density matrices ( see Definition \ref{def:synth} to understand why).  These scalars can, nonetheless, be described, as above, by 3D geometrically-local, shallow depth syntheses as described above. In fact,
	
	\begin{align} 
	&Z_j^{T} = \tr(\rho_{B_j}^T ), \nonumber\\
	& \text{ and }\\
	&		 \Lambda_1^{j,T} = 	\tr(\rho_{B_j}^T\ket{\psi} \bra{\psi}_{B_j \cup M_j \cup F_j}\rho_{B_j}^T ) 
	\end{align}
	
	where
	
	\begin{align} 
	\rho_{B_i}^K \equiv \nonumber  \bra{0^{F_i, M_i, B_i^1,...B_i^K}}\prod_{j=1}^K(C^{\dagger}_{B_i \cup M_i \cup F_i} \otimes I_{B_i^1,...,B_i^K})(I_{F_i \cup M_i }\otimes \text{SWAP}_{B_i B_i^j}) (C_{B_i \cup M_i \cup F_i} \otimes I_{B_i^1, ..., B_i^K})\ket{0^{F_i, M_i, B_i^1,...B_i^K}} 
	\end{align}
	
	as in Lemma \ref{lem:powerblockencoding}.  
	
	So, applying Theorem 5 of \cite{BGM19} according to Remark \ref{rm:3Dbasecase}, as usual, we see that the time cost of computing $\mathcal{B}(\Lambda_1^{j,T},\epsilon_2) = \tr(\rho_{B_j}^T\ket{\psi} \bra{\psi}_{B_j \cup M_j \cup F_j}\rho_{B_j}^T ) \pm \epsilon_2$ is at most $n \epsilon_2^{-2}2^{O(D^2)w} = n \epsilon_2^{-2}2^{O(d^2T^2K^4)30d}$ because the synthesis $\Lambda_1^{j,T}$ has depth $O(dTK^2)$, and width at most $30d$ (the width of the slice $K_j$) in the third dimension.  The same depth and width bounds apply to the synthesis $Z_j^{2T}$, and so, the time cost of computing $\mathcal{B}(Z_j^{2T},\epsilon_2) = \tr(\rho_{B_j}^{2T} )\pm \epsilon_2$ is also bounded by $n \epsilon_2^{-2}2^{O(d^2T^2K^4)30d}$.  So, the total time cost of computing $\kappa^i_{T, \epsilon_2}$ according to Definition \ref{def:kappa} is at most $2n \epsilon_2^{-2}2^{O(d^2T^2K^4)30d} + \poly(n)$.  Since there are $\Delta$ distinct $\kappa^i_{T, \epsilon_2}$ values appearing in the \textbf{return} line of Algorithm \ref{alg:quasi-poly-subroutine}, the total time $\zeta_2(n)$, for computing all of the  $\kappa^i_{T, \epsilon_2}$ values, is $2\Delta n \epsilon_2^{-2}2^{O(d^2T^2K^4)30d} + \Delta \poly(n)$.  We an now update our recursive time analysis for Algorithm \ref{alg:quasi-poly-subroutine} as follows:

\begin{align}
&T(\ell) < 2 \Delta T \left(\frac{3}{4} \ell \right ) + \delta^{-2} 2^{d^3 \polylog(n)}+  \zeta_2(n) + \mu(n) \\
& = 2 \Delta T \left(\frac{3}{4} \ell \right ) + \delta^{-2} 2^{d^3 \polylog(n)}+  2\Delta n \epsilon_2^{-2}2^{O(d^2T^2K^4)30d} + \Delta \poly(n) + \mu(n)
\end{align} 

Recalling that, in Algorithm \ref{alg:quasi-poly-subroutine}, we specify parameter scalings $\Delta = \log(n)$, $K = T = \log^3(n)$, $\epsilon_2 = \delta 2^{-10\log(n)\log(\log(n))}$, this gives 

\begin{align}
&T(\ell) <  2 \Delta T \left(\frac{3}{4} \ell \right ) + \delta^{-2} 2^{d^3 \polylog(n)}+ \delta^{-2} 2^{d^3 \polylog(n)} + \mu(n)\\
& = 2 \Delta T \left(\frac{3}{4} \ell \right ) + \delta^{-2} 2^{d^3 \polylog(n)} + \poly(n)\\
&=2 \Delta T \left(\frac{3}{4} \ell \right ) + \delta^{-2} 2^{d^3 \polylog(n)} \label{eq:recursiveruntimebound}
\end{align}

Note that Equation \ref{eq:recursiveruntimebound} is a recursive run-time whereby, at each level, we have at most $2\Delta$ subproblems, each with size at most $\frac{3}{4}$ of the original problem. This is a common formula, and we can use the Master Theorem for divide-and-conquer algorithms to determine an upper bound for our run-time as
\begin{align}
T(n^{1/3}) &< (2\Delta)^\eta\cdot T\left( \left( \frac{3}{4} \right)^{\eta} n^{\frac{1}{3}} \right) +\sum_{i=0}^\eta (2\Delta)^i\delta^{-2} 2^{d^3 \polylog(n)}\\
&< (2\Delta)^\eta\cdot T\left( \left( \frac{3}{4} \right)^{\eta} n^{\frac{1}{3}} \right) + \frac{(2\Delta)^{\eta+1}-1}{(2\Delta)-1}\cdot\delta^{-2} 2^{d^3 \polylog(n)}  \\
&< (2\Delta)^\eta\cdot T\left( \left( \frac{3}{4} \right)^{\eta} n^{\frac{1}{3}} \right) + (2\Delta)^{\eta+1}\cdot\delta^{-2} 2^{d^3 \polylog(n)} \\
T(n^{1/3}) &< (2\Delta)^\eta \left[ T\left( \left( \frac{3}{4} \right)^{\eta} n^{\frac{1}{3}} \right) + 2 \Delta \delta^{-2} 2^{d^3 \polylog(n)} \right] \label{eq:quasi-run-time-master-bound}
\end{align}
where $\eta$ is the depth of our recursive calls. Note that the reason we start our recursion at $n^{1/3}$ instead of at $n$ is because of the technical definition of $T(\ell)$.  Recall that, we use $T(\ell)$ to denote the run-time bound for our algorithm on a synthesis with an $N$ register of width $\ell$.  The starting point of our recursion is a cube of $n$ qubits, which has side length $N^{1/3}$ in each dimension, and this is the reason that the total runtime for the original problem is bounded by $T(n^{1/3})$.  

\begin{theorem}\label{thm:quasi-poly-runtime-bound}
	Suppose $\eta =\frac{\log(n)}{3\log(4/3)}$ and $\Delta=\log{n}$. Given these values, the run-time for Algorithm \ref{alg:quasi-poly-subroutine} will be bounded by
	\begin{equation}
		T(n)<\delta^{-2} 2^{d^3 \polylog(n)}
	\end{equation} 
\end{theorem}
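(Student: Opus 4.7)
The plan is to start from the recursive run-time bound derived immediately above the theorem, namely Equation \ref{eq:quasi-run-time-master-bound},
\begin{equation*}
T(n^{1/3}) < (2\Delta)^\eta \left[ T\left( \left( \tfrac{3}{4} \right)^{\eta} n^{1/3} \right) + 2 \Delta \delta^{-2} 2^{d^3 \polylog(n)} \right],
\end{equation*}
and substitute the prescribed values $\eta = \frac{\log(n)}{3\log(4/3)}$ and $\Delta = \log(n)$. First I would verify that $\eta$ has been chosen precisely so that $(3/4)^\eta = 2^{\eta \log(3/4)} = 2^{-\log(n)/3} = n^{-1/3}$, and hence $(3/4)^\eta \cdot n^{1/3} = 1$. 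In other words, after $\eta$ levels of recursion the width $\ell$ of the active register has shrunk to a constant, which trivially satisfies the stopping condition $\ell < w_0 = 20d(\Delta + h(n) + 2)$ on Line \ref{ln:alg2stoppingwidth} of Algorithm \ref{alg:quasi-poly-subroutine}. The recursion therefore terminates by invoking the base-case algorithm $\mathcal{B}$ on Line \ref{ln:basecase}.

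Next I would bound the cost of this terminal base-case call. The synthesis handed to $\mathcal{B}$ has active-register width at most $w_0 = O(d \polylog(n))$ in the third dimension and total circuit depth at most $O(dK^2) = O(d \polylog(n))$, where the depth bound uses the footnote observation from the preceding runtime analysis that the $\Pi^K_{F_k}$ operators introduced at distinct slices always act in tensor product and so their depth contributions do not stack across recursion levels. By Remark \ref{rm:3Dbasecase}, the algorithm $\mathcal{B}$ then runs in time $n \epsilon^{-2} 2^{O(D^2 w)}$ on such a synthesis, and substituting $D = O(d\polylog(n))$, $w = O(d\polylog(n))$, and $\epsilon = \delta 2^{-10\log(n)\log\log(n)}$ gives a base-case cost of at most $\delta^{-2} 2^{d^3 \polylog(n)}$. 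Thus $T\!\left((3/4)^\eta n^{1/3}\right) \leq \delta^{-2} 2^{d^3 \polylog(n)}$.

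Finally I would collect the outer prefactor. With $\Delta = \log(n)$,
\begin{equation*}
(2\Delta)^\eta = (2\log n)^{\log(n)/(3\log(4/3))} = 2^{O(\log(n)\log\log(n))} = 2^{\polylog(n)},
\end{equation*}
which is absorbed into the $2^{d^3 \polylog(n)}$ factor. Plugging the bound on $T\!\left((3/4)^\eta n^{1/3}\right)$ into Equation \ref{eq:quasi-run-time-master-bound} then gives
\begin{equation*}
T(n^{1/3}) < 2^{\polylog(n)} \cdot \left[ \delta^{-2} 2^{d^3 \polylog(n)} + 2\log(n) \cdot \delta^{-2} 2^{d^3 \polylog(n)} \right] = \delta^{-2} 2^{d^3 \polylog(n)},
\end{equation*}
which is the claimed bound. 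I do not expect a serious technical obstacle: the substantive work lies in the preceding per-level runtime analysis (bounding each of the $2\Delta$ recursive subproblems, controlling the depth of the syntheses $S_{L,i}, S_{i,j}, S_{j,R}$, and computing the $\kappa^i_{T,\epsilon_2}$ quantities within a single $\delta^{-2} 2^{d^3 \polylog(n)}$ budget). What remains is a clean Master-Theorem-style accounting. The one piece of bookkeeping to double-check is that $\eta$ recursive levels really do suffice to bring the width from $n^{1/3}$ below $w_0 = O(d \polylog(n))$; since we have shown the width in fact reaches $1$, this is comfortably satisfied and only improves the bound.
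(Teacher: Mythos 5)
Your proposal is correct and follows essentially the same route as the paper: the paper's proof simply asserts that the theorem "follows directly from the above calculations," i.e., substituting $\eta = \frac{\log(n)}{3\log(4/3)}$ and $\Delta = \log(n)$ into the master-theorem bound of Equation \ref{eq:quasi-run-time-master-bound}, exactly as you do. Your explicit verification that $(3/4)^{\eta} n^{1/3} = 1$, the bound on the terminal base-case call via Remark \ref{rm:3Dbasecase}, and the absorption of $(2\Delta)^{\eta} = 2^{O(\log(n)\log\log(n))}$ into $2^{d^3\polylog(n)}$ are just the details the paper leaves implicit.
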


\begin{proof}
	Theorem \ref{thm:quasi-poly-runtime-bound} follows directly from the above calculations.
	
\end{proof}

\vspace{1em}
\textbf{\Large Acknowledgments}

MC thanks David Gosset and Sergey Bravyi for helpful discussions.  Part of this work was completed while MC was attending the Simons Institute Quantum Wave in Computing workshop.

\newpage

\appendix
\appendixpage

\section{Proofs of Lemma Statements}\label{appendix:lemma-proofs}

\subsection{Statements from Section \ref{sec:splitheavy}}

\begin{lemma*}[Restatement of Lemma \ref{clm:findheavy}]
If $|\bra{0^{\otimes n}}C   \ket{0^{\otimes n}}| > |1/q(n)|$, then, for any $0 \leq h \leq 1$, $h|K|$ of the slices $K_i$ in $K$ have the property that:

\begin{equation}
p_{total}(M_i = 0) \geq (|1/q(n)|)^{\frac{1}{(1-h)|K|}} \label{eq:heavyslice2}
\end{equation}	

  We will let $K_{heavy}$ be the subset of $K$ consisting of those $K_i$ satisfying Equation \eqref{eq:heavyslice2}.
\end{lemma*}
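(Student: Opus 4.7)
The plan is to combine the independence from Lemma~\ref{clm:independence} with a simple pigeonhole/contrapositive argument. First I would observe that the joint probability that every $M_i$ register evaluates to zero is lower-bounded by the full all-zeros output probability. Since the event $\{M_i = 0 \text{ } \forall i\}$ is a coarsening of the event $\{\text{every qubit measures to } 0\}$,
$$p_{total}(M_i = 0 \text{ } \forall i) \geq |\bra{0^{\otimes n}}C\ket{0^{\otimes n}}|^2 > 1/q(n)^2.$$
The square factor here is harmless and can be absorbed by replacing $q(n)$ with $q(n)^2$ (still a polynomial), so WLOG I will work with the cleaner hypothesis $\prod_i p_{total}(M_i = 0) \geq 1/q(n)$, which follows by applying Lemma~\ref{clm:independence} to the above.

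From here I would proceed by contraposition. Suppose for contradiction that strictly fewer than $h|K|$ of the slices $K_i$ satisfy inequality~\eqref{eq:heavyslice2}; equivalently, strictly more than $(1-h)|K|$ slices $K_i$ satisfy $p_{total}(M_i = 0) < (1/q(n))^{1/((1-h)|K|)}$. Upper-bounding each of the remaining factors by $1$ (since $p_{total}(M_i = 0) \leq 1$) and multiplying, the full product would satisfy
$$\prod_i p_{total}(M_i = 0) < \left( (1/q(n))^{1/((1-h)|K|)} \right)^{(1-h)|K|} = 1/q(n),$$
contradicting the lower bound established in the previous paragraph. Thus at least $h|K|$ of the slices must satisfy~\eqref{eq:heavyslice2}, and defining $K_{heavy}$ to be precisely the set of such slices finishes the proof.

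The argument presents no real obstacle; it is a short pigeonhole-style computation once Lemma~\ref{clm:independence} provides the product factorization. The only subtle bookkeeping concerns the passage from the amplitude $|\bra{0^{\otimes n}}C\ket{0^{\otimes n}}|$ to its square, but this is absorbed cleanly into the polynomial $q(n)$ and does not affect the structure of the bound.
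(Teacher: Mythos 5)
Your proposal is correct and follows essentially the same route as the paper: factor the joint probability using Lemma \ref{clm:independence}, lower-bound the resulting product by the global all-zeros quantity, and then apply a pigeonhole count to how many factors can fall below the threshold. The paper runs the counting step in log space (sum of negative logs versus $\log(1/q(n))$) while you run it multiplicatively by contraposition, but these are the same argument.

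One bookkeeping point deserves attention. Your first step correctly gives $p_{total}(M_i = 0 \ \forall i) \geq |\bra{0^{\otimes n}}C\ket{0^{\otimes n}}|^2 > 1/q(n)^2$ (this is exactly what Cauchy--Schwarz justifies), and the "absorb the square into $q$" move then yields the conclusion $p_{total}(M_i=0) \geq (1/q(n))^{\frac{2}{(1-h)|K|}}$ rather than the literally stated base $(1/q(n))^{\frac{1}{(1-h)|K|}}$ --- replacing $q$ by $q^2$ weakens the conclusion, not just the hypothesis, so strictly speaking you prove a marginally weaker bound than the lemma asserts. Interestingly, the paper's own proof sidesteps this by writing $\prod_i p_{total}(M_i=0) \geq |\bra{0^{\otimes n}}C\ket{0^{\otimes n}}|$ with no square, which is not justified in general (the probability dominates the \emph{squared} amplitude, and the amplitude itself can exceed the probability). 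Your accounting is therefore the more careful one; the factor of $2$ in the exponent is harmless everywhere the lemma is used, since the applications invoke it with probability-type hypotheses and exponents of size $\polylog(n)$, where constant factors in the exponent are absorbed.
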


\begin{proof}
	Using \ref{clm:independence} we have that:
	
	\begin{align}
	p_{total}(M_i = 0 \text{ } \forall i) = \prod_i p_{total}(M_i = 0) \geq |\bra{0^{\otimes n}}C   \ket{0^{\otimes n}}| > |1/q(n)|
	\end{align}
	
	So,

	\begin{align}
  \log(\prod_i p_{total}(M_i = 0))  = \sum_i \log(p_{total}(M_i = 0)) \geq  \log( |1/q(n)|)
	\end{align}
	
	Since every term on both sides of the equation is negative, it follows that at least $h |K|$ of the slices $K_i$ in $K$ must satisfy $\log(p_{total}(M_i = 0)) \geq \frac{1}{(1-h)|K|} \log( |1/q(n)|)$.
	
	So, at least $h |K|$ of the slices $K_i$ in $K$ must satisfy 
	
	\[p_{total}(M_i = 0) \geq \exp{ \frac{1}{(1-h)|K|}\log( |1/q(n)|)} = (|1/q(n)|)^{\frac{1}{(1-h)|K|}}   \]

\end{proof}

\begin{lemma*}[Restatement of Lemma \ref{lem:highschmidtnew}]
		For any slice $K_i \in K_{heavy}$ satisfying:
		
				\begin{equation}\label{eq:exactweight}
				p_{total}(M_i = 0) \geq 1 - e(n),
				\end{equation}	
		 
		the top Schmidt coefficient of $\ket{\psi}_{B_i \cup F_i}$ satisfies $\lambda_1^i \geq 1 - O(e(n))$.  (Where the Schmidt decomposition is taken across the partition $B_i, F_i$.)
	\end{lemma*}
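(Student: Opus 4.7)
First I would rewrite the hypothesis $p_{total}(M_i=0)\ge 1-e(n)$ as a statement about the norm of $\ket{\psi}_{B_i\cup F_i}$. Since $C = C_{L_i\cup R_i}\circ C_{B_i\cup M_i\cup F_i}$ by Definition~\ref{def:shortcircuit} and $C_{L_i\cup R_i}$ does not act on $M_i$, one has
\[\bra{0_{M_i}} C\ket{0_{\all}} \;=\; C_{L_i\cup R_i}\bigl(\ket{\psi}_{B_i\cup F_i}\otimes\ket{0_{L_i\cup R_i}}\bigr),\]
and taking squared norms (using unitarity of $C_{L_i\cup R_i}$) yields $\|\ket{\psi}_{B_i\cup F_i}\|^2 = p_{total}(M_i=0)\ge 1 - e(n)$.

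Next, I would extract a product structure from the lightcone separation of Definition~\ref{def:bmf}. The $10d$ width of $M_i$ guarantees that the forward lightcones of $B_i$ and $F_i$ in $C$---and hence in the sub-circuit $C_{B_i\cup M_i\cup F_i}$, which only contains a subset of $C$'s gates---are disjoint, separated by at least an $8d$-wide corridor inside $M_i$. The key technical observation is that because the gates of $C$ are $O(1)$-qubit and geometrically local, no chain of gates inside $C_{B_i\cup M_i\cup F_i}$ can link any qubit of $B_i$ to any qubit of $F_i$ without violating disjointness of forward lightcones. This allows me to partition $B_i\cup M_i\cup F_i$ into three regions $B_i\cup A_L$, $A_{mid}$, and $F_i\cup A_R$, where $A_L\sqcup A_{mid}\sqcup A_R = M_i$, across which $C_{B_i\cup M_i\cup F_i}$ factorizes as a tensor product of three independently-acting unitaries, giving
\[C_{B_i\cup M_i\cup F_i}\ket{0_{B_i\cup M_i\cup F_i}} \;=\; \ket{\Phi_L}_{B_i\cup A_L}\otimes\ket{\Phi_{mid}}_{A_{mid}}\otimes\ket{\Phi_R}_{F_i\cup A_R}.\]

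Finally, projecting onto $\bra{0_{M_i}} = \bra{0_{A_L}}\otimes\bra{0_{A_{mid}}}\otimes\bra{0_{A_R}}$ in the factored state gives
\[\ket{\psi}_{B_i\cup F_i} \;=\; c_{\mathrm{mid}}\cdot\ket{\alpha}_{B_i}\otimes\ket{\beta}_{F_i},\]
for a scalar $c_{\mathrm{mid}}\in\mathbb{C}$ and unnormalized vectors $\ket{\alpha}_{B_i},\ket{\beta}_{F_i}$ obtained by projecting the $A_L$, $A_{mid}$, $A_R$ sub-registers of the three factors onto $\ket{0}$. Since $\ket{\psi}_{B_i\cup F_i}$ is a (sub-normalized) product state across the cut $(B_i,F_i)$, its Schmidt rank is $1$ and its unique Schmidt coefficient equals its norm, so $\lambda_1^i = \|\ket{\psi}_{B_i\cup F_i}\| \ge \sqrt{1-e(n)} \ge 1-e(n)$, which gives the desired $\lambda_1^i\ge 1-O(e(n))$. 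The part I expect to require the most care is the causal-propagation step in the middle paragraph: one must rigorously verify that the interaction graph of gates in $C_{B_i\cup M_i\cup F_i}$ disconnects into three components, by showing inductively that if a gate were to couple a qubit in the forward lightcone of $B_i$ to a qubit in the forward lightcone of $F_i$ then the two forward lightcones would be forced to overlap, contradicting the assumption. A softer alternative (useful if one only has approximate lightcone separation) would be to first show that the marginal $\tr_{M_i}$ of $C_{B_i\cup M_i\cup F_i}\ket{0}\bra{0}C_{B_i\cup M_i\cup F_i}^{\dagger}$ is a product across $(B_i,F_i)$, then combine this with the heavy-slice hypothesis and continuity of the partial trace to show $\ket{\psi}_{B_i\cup F_i}$ is trace-norm-close to a pure product state, and finally extract the bound on $\lambda_1^i$ via the inequality $\tr(\rho_{B_i}^2)\le\|\rho_{B_i}\|_\infty = \lambda_1^i$ with $\rho_{B_i}=\tr_{F_i}(\ket{\psi}\bra{\psi}_{B_i\cup F_i})$.
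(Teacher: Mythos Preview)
Your main argument has a genuine gap. The claim that $C_{B_i\cup M_i\cup F_i}$ factorizes as a tensor product $U_L\otimes U_{mid}\otimes U_R$ across some tripartition $(B_i\cup A_L)\,|\,A_{mid}\,|\,(A_R\cup F_i)$ is false for a generic (e.g.\ brickwork) circuit. Take a depth-$2$ nearest-neighbour circuit on six qubits with $B=\{1\}$, $M=\{2,3,4,5\}$, $F=\{6\}$, layer~1 gates on $(1,2),(3,4),(5,6)$ and layer~2 gates on $(2,3),(4,5)$. Every adjacent pair is coupled by some gate, so no nontrivial tripartition separates the circuit into commuting tensor factors. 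The forward lightcones of $B$ and $F$ are disjoint ($\{1,2,3\}$ versus $\{4,5,6\}$), but this only guarantees that the \emph{marginal} $\tr_{M}(C_{BMF}\ket{0}\bra{0}C_{BMF}^\dagger)$ is a product across $B|F$; it does \emph{not} make the post-selected vector $\ket{\psi}_{B\cup F}=\bra{0_M}C_{BMF}\ket{0}$ an exact product. Indeed, with each gate chosen so that the relevant reshaped row or column equals $\tfrac12(1,1,1,-1)$, one finds $\ket{\psi}_{1,6}\propto\sum_{a,f}H_{af}\ket{af}$ with $H=\bigl(\begin{smallmatrix}1&1\\1&-1\end{smallmatrix}\bigr)$, which has Schmidt rank~$2$. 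Post-selection on $\bra{0_M}$ can inject correlations between $B$ and $F$ through $M$ even when the lightcones are disjoint.

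Your ``softer alternative'' is the correct route and is essentially the paper's proof. The paper uses lightcone separation to get $\tr_{M_i}(C_{BMF}\ket{0}\bra{0}C_{BMF}^\dagger)=\sigma_{B_i}\otimes\sigma_{F_i}$ exactly, then invokes the heavy-slice hypothesis to show that projecting onto $\ket{0_{M_i}}$ perturbs this by only $O(e(n))$ in trace norm, so $\ket{\psi}\bra{\psi}_{B_i\cup F_i}=\rho_{B_i}\otimes\rho_{F_i}+O(e(n))$ with $\rho_{B_i}=\tr_{F_i}(\ket{\psi}\bra{\psi})$ and $\rho_{F_i}=\tr_{B_i}(\ket{\psi}\bra{\psi})$. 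It then combines the H\"older bound $\|\rho^2\|_1\le\|\rho\|_\infty\|\rho\|_1\le\lambda_1^i$ on each factor to obtain $(\lambda_1^i)^2\ge\tr(\rho_{B_i}^2)\tr(\rho_{F_i}^2)=\tr\bigl((\rho_{B_i}\otimes\rho_{F_i})^2\bigr)=\tr\bigl((\ket{\psi}\bra{\psi})^2\bigr)+O(e(n))\ge1-O(e(n))$. If you pursue this route, note that passing from $\tr(\ket{0_M}\bra{0_M}\rho)\ge 1-e(n)$ to $\|\rho-\ket{0_M}\bra{0_M}\rho\ket{0_M}\bra{0_M}\|_1=O(e(n))$ (or $O(\sqrt{e(n)})$) requires controlling the off-diagonal blocks, not just the diagonal weight; this is routine but should be made explicit.
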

	
	\begin{proof}

		For any $K_i \in K_{heavy}$, recall that, by definition, the width of $M_i$ is chosen large enough that $B_i$ and $F_i$ do not have any intersecting light cones (so the two halves of the circuit are lightcone separated).  It follows that,
		
		\begin{align}
		&\tr_{M_i}(C_{B_i\cup M_i \cup F_i }\ket{0_{B_i\cup M_i \cup F_i }}\bra{0_{B_i\cup M_i \cup F_i }}C^{\dagger}_{B_i\cup M_i \cup F_i })=\nonumber\\
		&  \tr_{M_i \cup F_i }(C_{B_i\cup M_i \cup F_i }\ket{0_{B_i\cup M_i \cup F_i }}\bra{0_{B_i\cup M_i \cup F_i }}C^{\dagger}_{B_i\cup M_i \cup F_i }) \otimes \tr_{M_i \cup B_i }(C_{B_i\cup M_i \cup F_i }\ket{0_{B_i\cup M_i \cup F_i }}\bra{0_{B_i\cup M_i \cup F_i }}C^{\dagger}_{B_i\cup M_i \cup F_i }) \nonumber \label{eq:lightconeproduct}
		\end{align}
		
		But, from Equation \ref{eq:exactweight}, which is an assumption of the Lemma, we see that,
		
			\begin{align}
			&\tr \left (  \ket{0_{M_i}}\bra{0_{M_i}}C_{B_i\cup M_i \cup F_i }\ket{0_{B_i\cup M_i \cup F_i }}\bra{0_{B_i\cup M_i \cup F_i }}C^{\dagger}_{B_i\cup M_i \cup F_i }\ket{0_{M_i}}\bra{0_{M_i}} \right )  \\ & =\tr ( \ket{\psi}\bra{\psi}_{B_i \cup F_i} ) =  p_{total}(M_i = 0)  \geq 1 - e(n), \label{eq:highweighteq} 
			\end{align}
		
		and so,
		\begin{align}
		&\left \| C_{B_i\cup M_i \cup F_i }\ket{0_{B_i\cup M_i \cup F_i }}\bra{0_{B_i\cup M_i \cup F_i }}C^{\dagger}_{B_i\cup M_i \cup F_i } \right. \nonumber\\
		& \left .- \ket{0_{M_i}}\bra{0_{M_i}}C_{B_i\cup M_i \cup F_i }\ket{0_{B_i\cup M_i \cup F_i }}\bra{0_{B_i\cup M_i \cup F_i }}C^{\dagger}_{B_i\cup M_i \cup F_i }\ket{0_{M_i}}\bra{0_{M_i}} \right \| \nonumber \\ 
		&\leq 1- p_{total}(M_i = 0) \nonumber \leq e(n) 
		\end{align}

		
		So, 
		
		\begin{align}
		\ket{\psi}\bra{\psi}_{B_i \cup F_i} & = \tr_{M_i}(\ket{0_{M_i}}\bra{0_{M_i}}C_{B_i\cup M_i \cup F_i }\ket{0_{B_i\cup M_i \cup F_i }}\bra{0_{B_i\cup M_i \cup F_i }}C^{\dagger}_{B_i\cup M_i \cup F_i }\ket{0_{M_i}}\bra{0_{M_i}})\nonumber \\
		&= \tr_{M_i}(C_{B_i\cup M_i \cup F_i }\ket{0_{B_i\cup M_i \cup F_i }}\bra{0_{B_i\cup M_i \cup F_i }}C^{\dagger}_{B_i\cup M_i \cup F_i }) +O(e(n)) \nonumber\\
		&= \tr_{M_i \cup F_i }(C_{B_i\cup M_i \cup F_i }\ket{0_{B_i\cup M_i \cup F_i }}\bra{0_{B_i\cup M_i \cup F_i }}C^{\dagger}_{B_i\cup M_i \cup F_i }) \nonumber\\
		& \otimes \tr_{M_i \cup B_i }(C_{B_i\cup M_i \cup F_i }\ket{0_{B_i\cup M_i \cup F_i }}\bra{0_{B_i\cup M_i \cup F_i }}C^{\dagger}_{B_i\cup M_i \cup F_i })+O(e(n)) \nonumber\\
		& = \tr_{M_i \cup F_i }(\ket{0_{M_i}}\bra{0_{M_i}}C_{B_i\cup M_i \cup F_i }\ket{0_{B_i\cup M_i \cup F_i }}\bra{0_{B_i\cup M_i \cup F_i }}C^{\dagger}_{B_i\cup M_i \cup F_i }\ket{0_{M_i}}\bra{0_{M_i}}) \nonumber\\
		& \otimes \tr_{M_i \cup B_i }(\ket{0_{M_i}}\bra{0_{M_i}}C_{B_i\cup M_i \cup F_i }\ket{0_{B_i\cup M_i \cup F_i }}\bra{0_{B_i\cup M_i \cup F_i }}C^{\dagger}_{B_i\cup M_i \cup F_i }\ket{0_{M_i}}\bra{0_{M_i}})+O(e(n)) \nonumber\\
		&=  \tr_{F_i}(\ket{\psi}\bra{\psi}_{B_i \cup F_i}) \otimes \tr_{B_i}(\ket{\psi}\bra{\psi}_{B_i \cup F_i})+O(e(n))  \label{eq:highweightprod}
		\end{align}

		Now, by definition, the largest Schmidt coefficient $\lambda^i_1$ of $\ket{\psi}_{B_i \cup F_i}$ is equal to the largest eigenvalue of the mixed state $\tr_{F_i}(\ket{\psi}\bra{\psi}_{B_i \cup F_i})$, which is equivalent to the largest eigenvalue of the mixed state $\tr_{B_i}(\ket{\psi}\bra{\psi}_{B_i \cup F_i})$  (since $\ket{\psi}\bra{\psi}_{B_i \cup F_i}$ is an unormalized pure state).

		 For notational brevity we define $\rho_{B_i} \equiv \tr_{F_i}(\ket{\psi}\bra{\psi}_{B_i \cup F_i})$ and $\rho_{F_i} \equiv \tr_{B_i}(\ket{\psi}\bra{\psi}_{B_i \cup F_i})$.  By Holder's Inequality (with Holder parameters set top $p=1$ and $q = \infty$) we have that:
		
		\begin{align}
		&\| \rho_{B_i}^2 \|_1 \leq \| \rho_{B_i} \|_1 \| \rho_{B_i} \|_{\infty} = \| \rho_{B_i} \|_{\infty} = \lambda^i_1, \nonumber\\
		& \text{and,} \nonumber\\
		&\| \rho_{F_i}^2 \|_1 \leq \| \rho_{F_i} \|_1 \| \rho_{F_i} \|_{\infty} = \| \rho_{F_i} \|_{\infty} = \lambda^i_1, \nonumber
		\end{align}
		
		where the second to last inequality follows because $\| \rho_{B_i} \|_1 = \tr( \rho_{B_i} ) \leq 1$ (resp. $\| \rho_{F_i} \|_1 = \tr( \rho_{F_i} ) \leq 1$), and the last equality follows by the definition of $\lambda^i_1$.\mnote{Try to explain the use of norms in Holder's Inequality better}	\mnote{perhaps could include a second proof of the above equation that explicitly writes out the eigenvalues rather than just citing Holder's inequality?}  So, we have:
		
		\begin{align}
		&(\lambda^i_1)^2 \geq \| \rho_{B_i}^2 \|_1\| \rho_{F_i}^2 \|_1 = \tr(\rho_{B_i}^2) \tr(\rho_{F_i}^2) = \tr(\rho_{B_i}^2 \otimes \rho_{F_i}^2 ) = \tr(\left ( \rho_{B_i} \otimes \rho_{F_i}  \right )^2) \nonumber \\
		& = \tr( \left (\ket{\psi}\bra{\psi}_{B_i \cup F_i} \right )^2 ) + O(e(n))\geq 1 - O(e(n)),
		\end{align}
		
		Where the first three equalities follow by definition, the fourth equality follows by two uses of Equation \ref{eq:highweightprod}  (and the fact that $\| \ket{\psi}\bra{\psi}_{B_i \cup F_i}  \|_1 \leq 1$), and the final inequality follows by Equation \ref{eq:highweighteq}. It follows that:
		
		\begin{align}
		&\lambda^i_1 \geq  1 - O(e(n))
		\end{align}

	\end{proof}

	\begin{lemma*}[Restatement of Lemma \ref{clm:schmidtproj}]
	For any $K_i \in K_{heavy}$,
	
	\begin{align}
		&\|  \Proj_{F_i}^K -\ket{w_1}\bra{w_1}_{F_i}  \|_1 \leq \projerror \label{eq:Fprojclose1} \\
		& \text{and}  \nonumber\\
		&\|  \Proj_{B_i}^K -\ket{v_1}\bra{v_1}_{B_i}  \|_1 \leq \projerror \label{eq:Bprojclose1}
	\end{align}
	
	where $\projerror \equiv \left (\frac{1 - \lambda_1^i}{\lambda_1^i} \right )^K$, and $\ket{w_1}\bra{w_1}_{F_i}$, $\ket{v_1}\bra{v_1}_{B_i} $ are the projectors onto the top Schmidt vectors of $\ket{\psi}_{B_i \cup F_i}$ in $F_i$ and $B_i$ respectively.
\end{lemma*}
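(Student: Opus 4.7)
\textbf{Proof plan for Lemma \ref{clm:schmidtproj}.} The strategy is to diagonalize the operator $\rho_{F_i} \equiv \tr_{B_i}(\ket{\psi}\bra{\psi}_{B_i \cup F_i})$ via the Schmidt decomposition of $\ket{\psi}_{B_i \cup F_i}$ across the $B_i\text{-}F_i$ cut, then read off $\Proj_{F_i}^K$ in this eigenbasis and estimate the contribution from all but the top eigenvalue. By symmetry, the identical argument will yield the bound on $\|\Proj_{B_i}^K - \ket{v_1}\bra{v_1}_{B_i}\|_1$, so I will only carry out the argument for $\Proj_{F_i}^K$.

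First, I will recall from Section \ref{sec:simpleexamp} (in particular Lemma \ref{lem:powerblockencoding} and the definition in Definition \ref{def:schmidtproj}) that projecting the block-encoding circuit onto the all-zeros ancilla state extracts exactly $\rho_{F_i}^K$, so that
\begin{equation*}
\Proj_{F_i}^K = \frac{1}{(\lambda_1^i)^K}\,\rho_{F_i}^K.
\end{equation*}
Next, writing the Schmidt/eigendecomposition $\rho_{F_i} = \sum_{j} \lambda_j^i \,\ket{w_j}\bra{w_j}_{F_i}$ with $\lambda_1^i \geq \lambda_2^i \geq \cdots \geq 0$ (so that $\lambda_1^i$ matches the convention used in Lemma \ref{lem:highschmidtnew}), I get
\begin{equation*}
\Proj_{F_i}^K - \ket{w_1}\bra{w_1}_{F_i} = \sum_{j \geq 2}\left(\frac{\lambda_j^i}{\lambda_1^i}\right)^{K}\ket{w_j}\bra{w_j}_{F_i}.
\end{equation*}

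Since the right-hand side is a sum of rank-one orthogonal projectors with non-negative coefficients, its trace norm is just the sum of those coefficients. I will then bound this sum using two elementary facts: (i) $\sum_j \lambda_j^i = \tr(\rho_{F_i}) = \tr(\ket{\psi}\bra{\psi}_{B_i\cup F_i}) = p_{\text{total}}(M_i = 0) \leq 1$, so $\sum_{j \geq 2}\lambda_j^i \leq 1 - \lambda_1^i$; and (ii) for $K\geq 1$ and non-negative reals $x_j$ with $\max_j x_j \leq \sum_j x_j$, the monotonicity inequality $\sum_j x_j^K \leq (\max_j x_j)^{K-1}\sum_j x_j \leq \bigl(\sum_j x_j\bigr)^K$ holds. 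Applying these with $x_j = \lambda_j^i$ for $j \geq 2$ yields
\begin{equation*}
\bigl\|\Proj_{F_i}^K - \ket{w_1}\bra{w_1}_{F_i}\bigr\|_1 \;=\; \frac{1}{(\lambda_1^i)^K}\sum_{j \geq 2}(\lambda_j^i)^K \;\leq\; \frac{(1-\lambda_1^i)^K}{(\lambda_1^i)^K} \;=\; g(n),
\end{equation*}
which is exactly the stated bound.

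The analogous derivation for $\Proj_{B_i}^K$ follows by swapping the roles of $B_i$ and $F_i$ throughout; since the Schmidt coefficients of $\ket{\psi}_{B_i\cup F_i}$ are symmetric across the cut, $\rho_{B_i}$ has the same spectrum as $\rho_{F_i}$ and the same calculation applies verbatim. There is no substantial obstacle; the only mildly nontrivial step is confirming the inequality $\sum_{j\geq 2}(\lambda_j^i)^K \leq (1-\lambda_1^i)^K$, which, as noted above, follows by pulling out a single factor of $\max_{j\geq 2}\lambda_j^i \leq 1-\lambda_1^i$ and iterating, or equivalently by the submultiplicativity of the $\ell^K$ norm on the non-negative tail of the spectrum.
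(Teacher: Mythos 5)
Your proof is correct and follows essentially the same route as the paper: both reduce to the identity $\Proj_{F_i}^K = \rho_{F_i}^K/(\lambda_1^i)^K$ from the block-encoding construction and then bound the deviation from $\ket{w_1}\bra{w_1}_{F_i}$ spectrally, you by summing the tail $\sum_{j\geq 2}(\lambda_j^i/\lambda_1^i)^K \leq \bigl((1-\lambda_1^i)/\lambda_1^i\bigr)^K$ in the eigenbasis, the paper by writing $\rho_{F_i}/\lambda_1^i = \ket{w_1}\bra{w_1}_{F_i} + E$ with $E$ PSD and orthogonal to the top eigenvector and bounding $\|E^K\|$. The two arguments are the same computation in different packaging, and your symmetry remark for the $B_i$ case matches the paper's treatment.
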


\begin{proof}
 Here we will write the proof for Equation \ref{eq:Fprojclose1}, but the proof for Equation \ref{eq:Bprojclose1} is exactly analogous.  In particular, by Lemma \ref{lem:powerblockencoding} we have that:
	
		For any constant integer $K > 0$, the following is a 2D-local circuit which gives a block encoding for $\rho_{F_i}^K$:
		
		\begin{align} 
		&\prod_{j=1}^K(C^{\dagger}_{B_i \cup M_i \cup F_i} \otimes I_{F^1_i,...,F^K_i})(I_{B_i \cup M_i }\otimes \text{SWAP}_{F_i F^j_i}) (C_{B_i \cup M_i \cup F_i} \otimes I_{F^1_i, ..., F^K_i}). \nonumber
		\end{align}
		
		It follows, by the definition of a block encoding, that,

		\begin{align} 
		&\rho_{F_i}^K = \bra{0^{B_i, M_i, F^1_i,...F^k_i}}\prod_{j=1}^K(C^{\dagger}_{B_i \cup M_i \cup F_i} \otimes I_{F^1_i,...,F^K_i})(I_{B_i \cup M_i }\otimes \text{SWAP}_{F_i F^j_i}) (C_{B_i \cup M_i \cup F_i} \otimes I_{F^1_i, ..., F^K_i})\ket{0^{B_i, M_i, F^1_i,...F^k_i}}  \nonumber
		\end{align} 
		
		Recall the definition of $\Proj_{F_i}^K$:
		\begin{align} 
		&\Proj_{F_i}^K \equiv \nonumber  \frac{1}{(\lambda_1^i)^K}\bra{0^{B_i, M_i, F^1_i,...F^k_i}}\prod_{j=1}^K(C^{\dagger}_{B_i \cup M_i \cup F_i} \otimes I_{F^1_i,...,F^K_i})(I_{B_i \cup M_i }\otimes \text{SWAP}_{F_i F^j_i}) (C_{B_i \cup M_i \cup F_i} \otimes I_{F^1_i, ..., F^K_i})\ket{0^{B_i, M_i, F^1_i,...F^k_i}}. \nonumber 
		\end{align} 
		
		And thus,
		
		\begin{align} 
		&\Proj_{F_i}^K = \nonumber  \frac{1}{(\lambda_1^i)^K}\rho_{F_i}^K = \left (\frac{\rho_{F_i}}{\lambda_1^i} \right )^K. 
		\end{align} 
		\mnote{maybe should include the above as part of the definition of $\Proj_{F_i}^K$??}
		
		By the definition of $\lambda_1^i$ and the leading Schmidt coefficient we have that: 
		
		\begin{align} 
		&\frac{\rho_{F_i}}{\lambda_1^i}  = \ket{w_1}\bra{w_1}_{F_i}  + E , \nonumber 
		\end{align} 
		
		where $E \equiv (\frac{\rho_{F_i}}{\lambda_1^i}  - \ket{w_1}\bra{w_1}_{F_i})$ is a PSD operator with trace norm $\|E \| \leq \frac{1 - \lambda_1^i}{\lambda_1^i}$, and which is orthogonal to $\bra{w_1}_{F_i}$  (i.e. $\ket{w_1}\bra{w_1}_{F_i} \cdot E = 0$). \mnote{is orthogonal the right word?  Clean up explanation}  It follows that:
		
		\begin{align} 
		& \left (\frac{\rho_{F_i}}{\lambda_1^i} \right )^K = \left ( \ket{w_1}\bra{w_1}_{F_i}  + E \right )^K = \left ( \ket{w_1}\bra{w_1}_{F_i}  \right )^K  + E^K. 
		\end{align} 
		
		So,

		\begin{align} 
		&\left \|  \Proj_{F_i}^K - \ket{w_1}\bra{w_1}_{F_i} \right \| = \left \| \left (\frac{\rho_{F_i}}{\lambda_1^i} \right )^K - \ket{w_1}\bra{w_1}_{F_i} \right \| = \|E^K\| = \|E\|^K \leq \left (\frac{1 - \lambda_1^i}{\lambda_1^i} \right )^K 
		\end{align}

\end{proof}

\begin{lemma*} [Restatement of Lemma \ref{clm:expansiontrick}]
	 Consider a set $K_{heavy}$ of slices such that, for every $K_i \in K_{heavy}$, $\ket{\psi}_{B_i \cup F_i}$ satisfies $\lambda_1^i \geq 1 - e(n)$, and such that for any $K_i, K_j \in K_{heavy}$, the operators $\Pi^K_{F_i}$ and $\Pi^K_{F_j}$ are light-cone separated whenever $i \neq j$.  Then, for any set of $\Delta$ slices, $\{K_i\}_{i \in [\Delta]} \subseteq K_{heavy}$, we have that:
	
	\begin{align}
	&\left \| \sum_{\sigma\in\mathcal{P}([\Delta])} (-1)^{\abs{\sigma}} \ket{\Psi_\sigma}\bra{\Psi_\sigma}\right \| = \left \|	\ket{\Psi_\emptyset}\bra{\Psi_\emptyset} -  \sum_{\sigma\in\mathcal{P}([\Delta])\setminus\emptyset} (-1)^{\abs{\sigma}+1} \ket{\Psi_\sigma}\bra{\Psi_\sigma}\right \| \leq (2e(n)+2\projerror)^{\Delta}, 
	\end{align}
	where $\projerror \equiv \left (\frac{1 - \lambda_1^i}{\lambda_1^i} \right )^K$.

\end{lemma*}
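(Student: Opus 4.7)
The plan is to exploit the light-cone separation of the cuts $\{K_i\}_{i\in[\Delta]}$ in order to rewrite the inclusion-exclusion sum as a tensor product of local terms, after which the trace norm factorises into $\Delta$ local estimates controlled by Lemma \ref{clm:schmidtproj} and Lemma \ref{lem:highschmidtnew}.

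First I would unfold the structure of $\ket{\Psi_\emptyset}$. Because the subcircuits $C_{B_iM_iF_i}$ have pairwise disjoint light-cones (so they commute and act on disjoint qubits), and because every remaining gate of $C$ lies outside the reverse light-cone of every $M_i$, the whole circuit factors as $C = C_{rest}\cdot\prod_i C_{B_iM_iF_i}$, where $C_{rest}$ acts trivially on each $M_i$. Pushing the measurements $\bra{0_{M_i}}$ past $C_{rest}$ and through the other $C_{B_jM_jF_j}$ then gives
\begin{equation}
\ket{\Psi_\emptyset} \;=\; C_{rest}\Big(\bigotimes_{i\in[\Delta]}\ket{\psi}_{B_iF_i}\otimes\ket{0_{other}}\Big).
\end{equation}

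Next I would rotate by $C_{rest}^\dagger$, which leaves the trace norm unchanged. Because each $\Pi^K_{F_i} = C_{Wrap_i}\Proj^K_{F_i}C_{Wrap_i}^\dagger$ is supported within $O(d)$ of $K_i$, while every part of $C_{rest}$ other than $C_{Wrap_i}$ is light-cone separated from $F_i$ and hence commutes with $\Proj^K_{F_i}$, the rotation yields $C_{rest}^\dagger\Pi^K_{F_i}C_{rest} = \Proj^K_{F_i}$ acting only on $F_i$. In this rotated frame the operators inserted for different cuts act on pairwise disjoint registers, so the inclusion-exclusion identity $\sum_\sigma(-1)^{|\sigma|}\prod_{i\in\sigma}x_i = \prod_i(1-x_i)$ can be applied factor-wise, giving
\begin{equation}
\sum_{\sigma\in\mathcal{P}([\Delta])}(-1)^{|\sigma|}\ket{\Psi_\sigma}\bra{\Psi_\sigma} \;=\; \bigotimes_{i=1}^\Delta\Big(\ket{\psi}\bra{\psi}_{B_iF_i} - \Proj^K_{F_i}\ket{\psi}\bra{\psi}_{B_iF_i}\Proj^K_{F_i}\Big)\otimes\ket{0_{other}}\bra{0_{other}}.
\end{equation}

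Then I would use the multiplicativity of the trace norm under tensor products to reduce the global bound to the product of $\Delta$ local ones. Each local factor can be split as $(I-\Proj^K_{F_i})\ket{\psi}\bra{\psi}_{B_iF_i} + \Proj^K_{F_i}\ket{\psi}\bra{\psi}_{B_iF_i}(I-\Proj^K_{F_i})$; the triangle inequality together with Lemma \ref{clm:schmidtproj} (which controls $\Proj^K_{F_i}-\ket{w_1}\bra{w_1}_{F_i}$ up to $g(n)$) and Lemma \ref{lem:highschmidtnew} (which gives the weight $\sum_{k\ge 2}(\lambda_k^i)^2 = O(e(n))$ on the non-leading Schmidt components of $\ket{\psi}_{B_iF_i}$) then bounds each factor by $2e(n)+2g(n)$. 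Multiplying across the $\Delta$ cuts produces the claimed $(2e(n)+2g(n))^\Delta$ estimate.

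The main technical obstacle I expect is Step 1: carefully verifying that light-cone separation really does allow $C$ to be written in the clean tensor-product-friendly form $C = C_{rest}\cdot\prod_i C_{B_iM_iF_i}$ with $C_{rest}$ commuting with every $\bra{0_{M_i}}$, and that after the subsequent rotation each $\Pi^K_{F_i}$ collapses to $\Proj^K_{F_i}$ acting on disjoint registers. Once this structural claim is secured, Steps 2--4 are essentially algebraic manipulations --- the inclusion-exclusion factorisation, the tensor-product trace-norm identity, and a pointwise application of Lemma \ref{clm:schmidtproj}.
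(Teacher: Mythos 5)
Your proposal follows essentially the same route as the paper's appendix proof of Lemma \ref{clm:expansiontrick}: use the light-cone separation to factor $C$ across the $\Delta$ slices, strip off the interstitial unitaries by unitary invariance of the trace norm, rewrite the inclusion--exclusion sum as a tensor product of the local differences $\ket{\psi}\bra{\psi}_{B_iF_i}-\Proj^K_{F_i}\ket{\psi}\bra{\psi}_{B_iF_i}\Proj^K_{F_i}$, invoke multiplicativity of the trace norm over tensor factors, and bound each factor by $2e(n)+2\projerror$ via Lemma \ref{clm:schmidtproj} together with the heavy-slice condition $\lambda_1^i\geq 1-e(n)$. The only deviation is cosmetic bookkeeping in the per-slice bound (you split as $(I-\Proj^K_{F_i})\psi+\Proj^K_{F_i}\psi(I-\Proj^K_{F_i})$ while the paper adds and subtracts the $\ket{w_1}\bra{w_1}_{F_i}$-projected state), so the argument matches the paper's.
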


\begin{proof}
	 For the following, we use the shorthand $\rho_M(U)$ for the density matrix of the state prepared by linear operator $U$ acting on the all $0$ state of the $M$ register. For instance, $\rho_{\all}(U)=U\ket{0_{ALL}}\bra{0_{ALL}}U^\dagger$. Note that $\rho_M(VU)=V\rho_M(U)V^\dagger$.
	
	Following the definition of $\ket{\Psi_\sigma}$ in Definition \ref{def:psi-states}, we are trying to upper bound the quantity
	\begin{align}
	& \left \| \sum_{\sigma\in\mathcal{P}([\Delta])} (-1)^{\abs{\sigma}} \ket{\Psi_\sigma}\bra{\Psi_\sigma}\right \| = \left \| \sum_{\sigma\in\mathcal{P}([\Delta])} (-1)^{\abs{\sigma}} \rho_{\all}\left((\otimes_{j \in \sigma }\Pi^K_{F_j})(\otimes_{i\in [\Delta]} \bra{0_{M_i}})C\right) \right \|.\label{eq:sum-psi-states}
	\end{align}
	
	The proof proceeds in three parts. First, we consider Equation \ref{eq:sum-psi-states} without the post-selection on $\otimes_{i\in [\Delta]} \bra{0_{M_i}}$, and give an equivalent formulation in terms of a product of similar quantities. Second, we show this formulation holds under the post-selection. And lastly, we bound each term in this product formulation.
	
	From the definition of $C$ we have
	\begin{align}
	&\left \| \sum_{\sigma\in\mathcal{P}([\Delta])} (-1)^{\abs{\sigma}} \rho_{\all}\left((\otimes_{j \in \sigma }\Pi^K_{F_j})C\right) \right \|= \\
	&\left \| \sum_{\sigma\in\mathcal{P}([\Delta])} (-1)^{\abs{\sigma}} \rho_{\all}\left((\otimes_{j \in \sigma }\Pi^K_{F_j})(C_{L, \sigma_1} \circ \otimes_{j \in [|\sigma|-1]}  C_{\sigma_j, \sigma_{j+1}} \circ C_{\sigma_{|\sigma|}, R} \circ \otimes_{j \in [|\sigma|]} C_{B_{\sigma_j} \cup M_{\sigma_j}  \cup F_{\sigma_j} })\right) \right \|,
	\end{align}
	and by the definitions of $\Pi_{F_j}^K$ and $C_{wrap_i}$ we have
	\begin{equation}
		=\left \| \sum_{\sigma\in\mathcal{P}([\Delta])} (-1)^{\abs{\sigma}} \rho_{\all}\left(  C_{L, \sigma_1} \circ \otimes_{j \in [|\sigma|-1]} C_{\sigma_j, \sigma_{j+1}} \circ C_{\sigma_{|\sigma|}, R} \circ \otimes_{j \in [|\sigma|]} P^K_{F_j}C_{B_{\sigma_j} \cup M_{\sigma_j}  \cup F_{\sigma_j}}  \right) \right \|.
	\end{equation}
	We can rewrite this by expanding the summation and regrouping terms in tensor product \nnote{maybe?}
	\begin{align}
		=&\left\|  (C_{L, 1} \circ \otimes_{j \in [\Delta-1]} C_{\sigma_j, \sigma_{j+1}} \circ C_{\sigma_{\Delta}, R})  \circ  \otimes_{j \in [\Delta]} \left( \rho_{B_j \cup M_j \cup F_j}(C_{B_j \cup M_j \cup F_j}) - \rho_{B_j \cup M_j \cup F_j}(P^K_{F_j}C_{B_j \cup M_j \cup F_j})\right) \right. \\
		&\left. \otimes  \rho_{\all \setminus \cup_{j \in [\Delta]}B_j \cup M_j \cup F_j}(I)\circ (C_{L, 1} \circ \otimes_{j \in [\Delta-1]} C_{\sigma_j, \sigma_{j+1}} \circ C_{\sigma_{\Delta}, R})^\dagger \right\|.
	\end{align}
	
	\mnote{We should think about how to explain the previous line.  Perhaps say that expanding the middle tensor product (the one with minus sign in it) results in every possible combination of slices or omitting slices, with corresponding negative signs.}Lastly, by standard properties of the trace norm and noting that $C_{L, 1} \circ \otimes_{j \in [\Delta-1]} C_{\sigma_j, \sigma_{j+1}} \circ C_{\sigma_{\Delta}, R}$ is a unitary operator, we have
	\begin{equation}
		\left \| \sum_{\sigma\in\mathcal{P}([\Delta])} (-1)^{\abs{\sigma}} \rho_{\all}\left((\otimes_{j \in \sigma }\Pi^K_{F_j})C\right) \right \|= \prod_{j\in[\Delta]}\left\| \rho_{B_j \cup M_j \cup F_j}(C_{B_j \cup M_j \cup F_j}) - \rho_{B_j \cup M_j \cup F_j}(P^K_{F_j}C_{B_j \cup M_j \cup F_j})    \right\|. \label{eq:prod-form-no-post}
	\end{equation}	
	
	Now, since the terms $\otimes_{i\in [\Delta]} \ket{0_{M_i}}$, $\otimes_{i\in [\Delta]} \bra{0_{M_i}}$ commute with the terms $\otimes_{j \in \sigma }\Pi^K_{F_j}$, $P^K_{F_j}$ for all $i, j$, and $\sigma$, the form of Equation \ref{eq:prod-form-no-post} holds even under post-selection:
	\begin{align}
		&\left \| \sum_{\sigma\in\mathcal{P}([\Delta])} (-1)^{\abs{\sigma}} \rho_{\all}\left((\otimes_{j \in \sigma }\Pi^K_{F_j}\bra{0_{M_j}})C\right) \right \|= \nonumber \\
		& \prod_{j\in[\Delta]}\left\| \rho_{B_j \cup M_j \cup F_j}\left(\bra{0_{M_j}}C_{B_j \cup M_j \cup F_j}\right) - \rho_{B_j \cup M_j \cup F_j}\left(P^K_{F_j}\bra{0_{M_j}}C_{B_j \cup M_j \cup F_j}\right)    \right\|. \label{eq:prod-form-with-post}
	\end{align}
	
	We now bound each term of this product. By adding and subtracting $\rho_{B_j \cup M_j \cup F_j}\left(\ket{w_1}\bra{w_1}_{F_j}\bra{0_{M_j}}C_{B_j \cup M_j \cup F_j}\right)$ (where $\ket{w_1}\bra{w_1}_{F_j}$ is the projector onto the top Schmidt vector of $\ket{\Psi}_{B_j\cup F_j}$) and using the triangle inequality we have
	\begin{align}
		&\left\| \rho_{B_j \cup M_j \cup F_j}\left(\bra{0_{M_j}}C_{B_j \cup M_j \cup F_j}\right) - \rho_{B_j \cup M_j \cup F_j}\left(P^K_{F_j}\bra{0_{M_j}}C_{B_j \cup M_j \cup F_j}\right)    \right\| \leq \\
		&\left\| \rho_{B_j \cup M_j \cup F_j}\left(\bra{0_{M_j}}C_{B_j \cup M_j \cup F_j}\right)  - \rho_{B_j \cup M_j \cup F_j}\left(\ket{w_1}\bra{w_1}_{F_j} \bra{0_{M_j}}C_{B_j \cup M_j \cup F_j}\right) \right\| + \nonumber \\
		&\left\| \rho_{B_j \cup M_j \cup F_j}\left(\ket{w_1}\bra{w_1}_{F_j}\bra{0_{M_j}}C_{B_j \cup M_j \cup F_j}\right)  - \rho_{B_j \cup M_j \cup F_j}\left(P^K_{F_j}\bra{0_{M_j}}C_{B_j \cup M_j \cup F_j}\right)    \right\|.
	\end{align}
	By using Lemma \ref{clm:schmidtproj} (twice) we can bound the right summand by $2\projerror \equiv 2 \left (\frac{1 - \lambda_1^i}{\lambda_1^i} \right )^K$. By assumption in the lemma statement we have that the top Schmidt coefficient of $\ket{\psi}_{B_j \cup F_j}$ satisfies $\lambda_1^j \geq 1 - e(n)$ (for every $j$), and so (applying this bound twice) the left summand is bounded by $2e(n)$. Thus, we have 
	\begin{equation}
		\left\| \rho_{B_j \cup M_j \cup F_j}\left(\bra{0_{M_j}}C_{B_j \cup M_j \cup F_j}\right) - \rho_{B_j \cup M_j \cup F_j}\left(P^K_{F_j}\bra{0_{M_j}}C_{B_j \cup M_j \cup F_j}\right)    \right\| \leq 2(e)+2f(n). \label{eq:prod-term-bound}
	\end{equation}
	Combining Equations \ref{eq:prod-form-with-post} with Equation \ref{eq:prod-term-bound} we have the desired
	\begin{equation}
		\left \| \sum_{\sigma\in\mathcal{P}([\Delta])} (-1)^{\abs{\sigma}} \rho_{\all}\left((\otimes_{j \in \sigma }\Pi^K_{F_j}\bra{0_{M_j}})C\right) \right \| \leq \prod_{j\in[\Delta]} (2(e)+2\projerror) = (2(e)+2\projerror)^\Delta.
	\end{equation}

\end{proof}

\begin{lemma*}[Restatement of Lemma \ref{clm:breaktoproduct}]
	For any $K_i \in K_{heavy}$ (recall this means that $\ket{\psi}_{B_i \cup F_i}$ satisfies $\lambda_1^i \geq 1 - e(n)$), the state $\ket{\Omega_i}\bra{\Omega_i}$ is within $6\projerror$ of an unnormalized product state about $M_i$, described as follows:
	\begin{align}
	&\left \| \ket{\Omega_i}\bra{\Omega_i}  -  1/\lambda_1^i \tr_{F_i}\left (  \ket{\Xi_{L_i}}\bra{\Xi_{L_i}} \right )  \otimes  \tr_{B_i}\left ( \ket{\Xi_{R_i}}\bra{\Xi_{R_i}} \right )\right \|\leq 6\projerror
	\end{align}

	Here $\projerror \equiv \left (\frac{1 - \lambda_1^i}{\lambda_1^i} \right )^K \leq \left (\frac{e(n)}{1-e(n)} \right )^K$ just as in Lemma \ref{clm:schmidtproj}.  Recall the definitions: 
	
	\begin{align*}
	\ket{\Omega_i} &\equiv \Pi^K_{F_i}\bra{0_{M_i}}C\ket{0_{\all}} \\
	\ket{\Xi_{L_i}}&\equiv \Proj^K_{F_i} \bra{0_{M_i}}C_{L_i}C_{B_i \cup M_i \cup F_i}  \ket{0_{L_i\cup B_i \cup M_i \cup F_i}}\\
	\ket{\Xi_{R_i}}&\equiv \Proj^K_{B_i} \bra{0_{M_i}}C_{R_i}C_{B_i \cup M_i \cup F_i}  \ket{0_{R_i\cup B_i \cup M_i \cup F_i}}
	\end{align*}
	
\end{lemma*}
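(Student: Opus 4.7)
The strategy is to exhibit a common ``target'' product operator and show that both sides of the desired inequality are close to it in trace norm, then combine the two estimates by the triangle inequality. Concretely, let $\mu_1=\sqrt{\lambda_1^i}$ (so that in the Schmidt decomposition $\ket{\psi}_{B_i\cup F_i}=\sum_k\mu_k\ket{v_k}_{B_i}\otimes\ket{w_k}_{F_i}$ the top eigenvalue of $\rho_{F_i}$ is $\mu_1^2=\lambda_1^i$), and define $\ket{\alpha}_{L_i\cup B_i}:=C_{L_i}(\ket{v_1}_{B_i}\otimes\ket{0_{L_i}})$, $\ket{\beta}_{F_i\cup R_i}:=C_{R_i}(\ket{w_1}_{F_i}\otimes\ket{0_{R_i}})$, and
$$T_i := \lambda_1^i\,\ket{\alpha}\bra{\alpha}\otimes\ket{\beta}\bra{\beta}.$$
The plan is to bound $\|\ket{\Omega_i}\bra{\Omega_i}-T_i\|_1$ and $\bigl\|\tfrac{1}{\lambda_1^i}\tr_{F_i}(\ket{\Xi_{L_i}}\bra{\Xi_{L_i}})\otimes\tr_{B_i}(\ket{\Xi_{R_i}}\bra{\Xi_{R_i}})-T_i\bigr\|_1$ separately and then apply triangle inequality.

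First, I would unfold $\ket{\Omega_i}=\Pi^K_{F_i}\bra{0_{M_i}}C\ket{0_{\all}}$. By Equation \ref{eq:cwrapexamp} we can write $C=C_{Wrap_i}C'_{L_i}C_{B_i\cup M_i\cup F_i}C'_{R_i}$, and because $C_{Wrap_i},C'_{L_i},C'_{R_i}$ all act trivially on $M_i$, one obtains $\bra{0_{M_i}}C\ket{0_{\all}}=C_{Wrap_i}C'_{L_i}C'_{R_i}(\ket{\psi}_{B_i\cup F_i}\otimes\ket{0_{L_i\cup R_i}})$. Applying $\Pi^K_{F_i}=C_{Wrap_i}\Proj^K_{F_i}C_{Wrap_i}^\dagger$, the $C_{Wrap_i}^\dagger C_{Wrap_i}$ cancels, $\Proj^K_{F_i}$ commutes past the registers-disjoint $C'_{L_i},C'_{R_i}$, and the wrap pieces recombine via $C_{Wrap_i}C'_{L_i}C'_{R_i}=C_{L_i}C_{R_i}$, yielding $\ket{\Omega_i}=C_{L_i}C_{R_i}\Proj^K_{F_i}(\ket{\psi}_{B_i\cup F_i}\otimes\ket{0_{L_i\cup R_i}})$. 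Now I would replace $\Proj^K_{F_i}$ by $\ket{w_1}\bra{w_1}_{F_i}$ via Lemma \ref{clm:schmidtproj}; since $\ket{w_1}\bra{w_1}_{F_i}\ket{\psi}_{B_i\cup F_i}=\mu_1\ket{v_1}_{B_i}\otimes\ket{w_1}_{F_i}$ and $C_{L_i},C_{R_i}$ act on disjoint registers, the resulting approximation factorizes as $\mu_1\ket{\alpha}\otimes\ket{\beta}$. Performing this substitution once on the ket and once on the bra side of $\ket{\Omega_i}\bra{\Omega_i}$, and using that the unitary $C_{L_i}C_{R_i}$ preserves trace norm while $\|\Proj^K_{F_i}\|_{op}\le 1$ and $\|\ket{\psi}_{B_i\cup F_i}\|\le 1$ keep intermediate norms under 1, gives $\|\ket{\Omega_i}\bra{\Omega_i}-T_i\|_1\le 2\projerror$.

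The same unfolding applied to $\ket{\Xi_{L_i}}=\Proj^K_{F_i}C_{L_i}(\ket{\psi}_{B_i\cup F_i}\otimes\ket{0_{L_i}})$ gives $\ket{\Xi_{L_i}}\approx\mu_1\ket{\alpha}\otimes\ket{w_1}_{F_i}$, so $\tr_{F_i}(\ket{\Xi_{L_i}}\bra{\Xi_{L_i}})\approx\mu_1^2\ket{\alpha}\bra{\alpha}=\lambda_1^i\ket{\alpha}\bra{\alpha}$ with trace-norm error at most $2\projerror$ (by the same two uses of Lemma \ref{clm:schmidtproj} combined with monotonicity of trace norm under the partial trace). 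Symmetrically, $\tr_{B_i}(\ket{\Xi_{R_i}}\bra{\Xi_{R_i}})\approx\lambda_1^i\ket{\beta}\bra{\beta}$ within $2\projerror$. Tensoring the two and using the bound $\|A\otimes B-A'\otimes B'\|_1\le\|A-A'\|_1\|B\|_1+\|A'\|_1\|B-B'\|_1$ with all four factors having trace norm at most 1, produces an error of at most $4\projerror$ relative to $(\lambda_1^i)^2\ket{\alpha}\bra{\alpha}\otimes\ket{\beta}\bra{\beta}=\lambda_1^i\,T_i$. Dividing by $\lambda_1^i\ge 1-e(n)$ yields the second target estimate with error at most $4\projerror/\lambda_1^i$; combining with the first via triangle inequality gives $2\projerror+4\projerror/\lambda_1^i\le 6\projerror$ up to higher-order corrections in $e(n)$ since $K_i\in K_{heavy}$ forces $\lambda_1^i$ close to 1.

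The main technical obstacle is not any individual step but the careful bookkeeping that ensures the two ``leading'' terms genuinely coincide: the two factors of $\lambda_1^i$ produced by the partial traces $\tr_{F_i}(\cdot)$ and $\tr_{B_i}(\cdot)$ must combine with the explicit $1/\lambda_1^i$ normalization to leave exactly one factor of $\lambda_1^i$, matching the $\mu_1^2=\lambda_1^i$ appearing in $\ket{\Omega_i}\bra{\Omega_i}$. Once that cancellation is verified, the rest of the argument amounts to six invocations of Lemma \ref{clm:schmidtproj} (two each for $\ket{\Omega_i}\bra{\Omega_i}$, $\tr_{F_i}(\ket{\Xi_{L_i}}\bra{\Xi_{L_i}})$, and $\tr_{B_i}(\ket{\Xi_{R_i}}\bra{\Xi_{R_i}})$), producing the stated $6\projerror$ bound.
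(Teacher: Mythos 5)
Your proposal is essentially the paper's own argument: the same six invocations of Lemma \ref{clm:schmidtproj}, organized around the very intermediate product state $\lambda_1^i\ket{\alpha}\bra{\alpha}\otimes\ket{\beta}\bra{\beta}$ that appears in the paper's proof (there written with $C_{L-Wrap_i}C'_{L_i}=C_{L_i}$, $C_{R-Wrap_i}C'_{R_i}=C_{R_i}$), the only difference being that you triangle-inequality both sides against this common anchor while the paper transforms one side into the other through the error terms $E$, $H_1$, $H_2$. The one quantitative slack is your final tally $2\projerror+4\projerror/\lambda_1^i$, which exceeds the stated $6\projerror$ by a factor $1/\lambda_1^i$ on the second term; this is repaired by keeping the weight $\|\tr_{B_i}(\ket{\Xi_{R_i}}\bra{\Xi_{R_i}})\|_1$ and $\lambda_1^i$ attached when applying the tensor-product bound (as the paper does in bounding $H_1,H_2$), so that they cancel against the explicit $1/\lambda_1^i$ normalization rather than being crudely bounded by $1$ before dividing.
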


	\begin{proof}
	The proof proceeds in two parts. First, we demonstrate that the quantity
	\begin{equation*}
		\ket{\Omega_i}\bra{\Omega_i}  -  1/\lambda_1^i \tr_{F_i}\left (  \ket{\Xi_{L_i}}\bra{\Xi_{L_i}} \right )  \otimes  \tr_{B_i}\left ( \ket{\Xi_{R_i}}\bra{\Xi_{R_i}} \right )
	\end{equation*}
	is equal to the sum of three error terms, $E$, $H_1$, and $H_2$. We then bound the trace norm of each of these quantities using applications of Lemma \ref{clm:schmidtproj}. By definition of $\ket{\Omega_i}$ we have
	\begin{equation}
		\ket{\Omega_i}\bra{\Omega_i} = \Pi^K_{F_i} \bra{0_{M_i}}C\ket{0_{\all}}\bra{0_{\all}}C^{\dagger}\ket{0_{M_i}}\Pi^K_{F_i}.
	\end{equation}
	Now, recall by Definition \ref{def:wrappedpi}, $\Pi^K_{F_i} \equiv C_{Wrap_i} \Proj^K_{F_i} C^{\dagger}_{Wrap_i}$.  So,
		
		  \begin{align*}
		  &\Pi^K_{F_i} \bra{0_{M_i}}C\ket{0_{\all}}\bra{0_{\all}}C^{\dagger}\ket{0_{M_i}}\Pi^K_{F_i} =  \bra{0_{M_i}}\Pi^K_{F_i}C\ket{0_{\all}}\bra{0_{\all}}C^{\dagger}\Pi^K_{F_i} \ket{0_{M_i}} \nonumber\\
		  &  = \bra{0_{M_i}} C_{Wrap_i} \Proj^K_{F_i} C^{\dagger}_{Wrap_i} C\ket{0_{\all}}\bra{0_{\all}}C^{\dagger}C_{Wrap_i} \Proj^K_{F_i} C^{\dagger}_{Wrap_i}\ket{0_{M_i}} \nonumber \\
		  & = \bra{0_{M_i}} C_{Wrap_i} \circ \Proj^K_{F_i} \circ C'_{L_i} \circ  C_{B_i \cup M_i \cup F_i} \circ C'_{R_i}  \ket{0_{\all}}\bra{0_{\all}}(C')^{\dagger}_{L_i} \circ  C^{\dagger}_{B_i \cup M_i \cup F_i} \circ (C')^{\dagger}_{R_i}  \circ \Proj^K_{F_i} \circ C^{\dagger}_{Wrap_i}\ket{0_{M_i}} \nonumber \\
		  & = C_{Wrap_i}\Proj^K_{F_i} \left ( \bra{0_{M_i}}   \circ C'_{L_i} \circ  C_{B_i \cup M_i \cup F_i} \circ C'_{R_i}  \ket{0_{\all}}\bra{0_{\all}}(C')^{\dagger}_{L_i} \circ  C^{\dagger}_{B_i \cup M_i \cup F_i} \circ (C')^{\dagger}_{R_i}   \ket{0_{M_i}} \right )\Proj^K_{F_i} C^{\dagger}_{Wrap_i} \nonumber \\
		  & =  C_{Wrap_i}\Proj^K_{F_i} \left ( C'_{L_i} \ket{0_{L_i}}\bra{0_{L_i}}(C')^{\dagger}_{L_i} \otimes  \bra{0_{M_i}}    C_{B_i \cup M_i \cup F_i} \ket{0_{B_i \cup M_i \cup F_i}}\bra{0_{B_i \cup M_i \cup F_i}} C^{\dagger}_{B_i \cup M_i \cup F_i}\ket{0_{M_i}} \right . \\
		  & \left . \otimes   C'_{R_i}  \ket{0_{R_i}}\bra{0_{R_i}} (C')^{\dagger}_{R_i}    \right )\Proj^K_{F_i} C^{\dagger}_{Wrap_i} \\
		  & = C_{Wrap_i}\Proj^K_{F_i} \left ( C'_{L_i} \ket{0_{L_i}}\bra{0_{L_i}}(C')^{\dagger}_{L_i} \otimes  \ket{\psi} \bra{\psi}_{B_i \cup F_i}  \otimes   C'_{R_i}  \ket{0_{R_i}}\bra{0_{R_i}} (C')^{\dagger}_{R_i}    \right )\Proj^K_{F_i} C^{\dagger}_{Wrap_i} \\
		  &= C_{Wrap_i} \left ( C'_{L_i} \ket{0_{L_i}}\bra{0_{L_i}}(C')^{\dagger}_{L_i} \otimes  \Proj^K_{F_i}\ket{\psi} \bra{\psi}_{B_i  \cup F_i} \Proj^K_{F_i} \otimes   C'_{R_i}  \ket{0_{R_i}}\bra{0_{R_i}} (C')^{\dagger}_{R_i}    \right ) C^{\dagger}_{Wrap_i} \numberthis  \label{eq:lem20firstbatch}
		  \end{align*}
		  
		   Here the first equality holds because the $\Pi^K_{F_i}$ operators only act on register $F_i$, which is disjoint from register $M_i$.  The second equality holds by the the definition of $\Pi^K_{F_i}$, see Definition \ref{def:wrappedpi}.  The third equality holds by Equation \ref{eq:cwrapexamp}, repeated below for the convenience of the reader.  
		   
		   \begin{align} 
		   C_{Wrap_i}^{\dagger} \circ C   = C'_{L_i} \circ  C_{B_i \cup M_i \cup F_i} \circ C'_{R_i}  \nonumber
		   \end{align}
		   
		   The fourth equality holds because neither the operator $C_{Wrap_i}$, nor the operator $\Proj^K_{F_i}$ act (non-trivially) on the register $M_i$.  The fifth equality holds because the operators $C'_{L_i}$, $C'_{R_i}$, and $C_{B_i\cup M_i \cup F_i}$ all act on disjoint registers and are therefore in tensor product by definition.  The sixth equality holds by the definition of   $\ket{\psi}_{B_i \cup F_i}$ \mnote{see Equation ...}.  The seventh equality follows because $\Proj^K_{F_i}$ only acts (non-trivially) on the register $F_i$, by definition.

		  Now, define:
		      
		\begin{align*}
		&E \equiv  C_{Wrap_i} \left ( C'_{L_i} \ket{0_{L_i}}\bra{0_{L_i}}(C')^{\dagger}_{L_i} \otimes  \ket{w_1}\bra{w_1}_{F_i}\ket{\psi} \bra{\psi}_{B_i  \cup F_i} \ket{w_1}\bra{w_1}_{F_i} \otimes   C'_{R_i}  \ket{0_{R_i}}\bra{0_{R_i}} (C')^{\dagger}_{R_i}    \right ) C^{\dagger}_{Wrap_i} \\
		& -  C_{Wrap_i} \left ( C'_{L_i} \ket{0_{L_i}}\bra{0_{L_i}}(C')^{\dagger}_{L_i} \otimes  \Proj^K_{F_i}\ket{\psi} \bra{\psi}_{B_i  \cup F_i} \Proj^K_{F_i} \otimes   C'_{R_i}  \ket{0_{R_i}}\bra{0_{R_i}} (C')^{\dagger}_{R_i}    \right ) C^{\dagger}_{Wrap_i} \\
		& = C_{Wrap_i} \left ( C'_{L_i} \ket{0_{L_i}}\bra{0_{L_i}}(C')^{\dagger}_{L_i} \otimes  \left ( \ket{w_1}\bra{w_1}_{F_i}\ket{\psi} \bra{\psi}_{B_i  \cup F_i} \ket{w_1}\bra{w_1}_{F_i} - \Proj^K_{F_i}\ket{\psi} \bra{\psi}_{B_i  \cup F_i} \Proj^K_{F_i} \right ) \right. \\
		& \left . \otimes   C'_{R_i}  \ket{0_{R_i}}\bra{0_{R_i}} (C')^{\dagger}_{R_i}    \right )C^{\dagger}_{Wrap_i} \numberthis \label{eq:defofE}
		\end{align*}

		  This error term quantifies the difference between using our block-encoding construction $\Proj^K_{F_i}$ versus using the true projector onto the top Schmidt vector, $\ket{w_1}\bra{w_1}_{F_i}$, which is $\Proj^K_{F_i}$ is meant to approximate.  We will show below, using Lemma \ref{clm:schmidtproj}, that the trace norm of $E$ is small.  We have, from Equation \ref{eq:lem20firstbatch}, that:
		  
		  \mnote{Schmidt coefficients should be v's for the Bi register and w's for the Fi register.  Should change everywhere!}
		  
		  \begin{align*}
		  &\Pi^K_{F_i} \bra{0_{M_i}}C\ket{0_{\all}}\bra{0_{\all}}C^{\dagger}\ket{0_{M_i}}\Pi^K_{F_i}  \\
		  &= C_{Wrap_i} \left ( C'_{L_i} \ket{0_{L_i}}\bra{0_{L_i}}(C')^{\dagger}_{L_i} \otimes  \Proj^K_{F_i}\ket{\psi} \bra{\psi}_{B_i \cup F_i} \Proj^K_{F_i} \otimes   C'_{R_i}  \ket{0_{R_i}}\bra{0_{R_i}} (C')^{\dagger}_{R_i}    \right ) C^{\dagger}_{Wrap_i}\\
		  & = C_{Wrap_i} \left ( C'_{L_i} \ket{0_{L_i}}\bra{0_{L_i}}(C')^{\dagger}_{L_i} \otimes  \ket{w_1}\bra{w_1}_{F_i}\ket{\psi} \bra{\psi}_{B_i  \cup F_i} \ket{w_1}\bra{w_1}_{F_i} \otimes   C'_{R_i}  \ket{0_{R_i}}\bra{0_{R_i}} (C')^{\dagger}_{R_i}    \right ) C^{\dagger}_{Wrap_i} + E \\
		  & = 		  C_{Wrap_i} \left ( C'_{L_i} \ket{0_{L_i}}\bra{0_{L_i}}(C')^{\dagger}_{L_i} \otimes  \left ( \lambda_1^i\ket{v_1}_{B_i}\ket{w_1}_{F_i}\bra{v_1}_{B_i}\bra{w_1}_{F_i} \right ) \otimes   C'_{R_i}  \ket{0_{R_i}}\bra{0_{R_i}} (C')^{\dagger}_{R_i}    \right ) C^{\dagger}_{Wrap_i} + E\\
		  &= \lambda_1^i  C_{Wrap_i} \left ( C'_{L_i} \ket{0_{L_i}}\bra{0_{L_i}}(C')^{\dagger}_{L_i} \otimes  \left ( \ket{v_1}\bra{v_1}_{B_i} \otimes \ket{w_1}\bra{w_1}_{F_i} \right ) \otimes   C'_{R_i}  \ket{0_{R_i}}\bra{0_{R_i}} (C')^{\dagger}_{R_i}    \right ) C^{\dagger}_{Wrap_i} + E \nonumber \\
		  &= \lambda_1^i  C_{L-Wrap_i} \left ( C'_{L_i} \ket{0_{L_i}}\bra{0_{L_i}}(C')^{\dagger}_{L_i} \otimes   \ket{v_1}\bra{v_1}_{B_i} \right ) C^{\dagger}_{L-Wrap_i} \\
		  &\otimes C_{R-Wrap_i}\left ( \ket{w_1}\bra{w_1}_{F_i}   \otimes   C'_{R_i}  \ket{0_{R_i}}\bra{0_{R_i}} (C')^{\dagger}_{R_i}    \right ) C^{\dagger}_{R-Wrap_i} + E \nonumber \\
		  & = \lambda_1^i  C_{L-Wrap_i} \tr_{F_i}\left ( C'_{L_i} \ket{0_{L_i}}\bra{0_{L_i}}(C')^{\dagger}_{L_i} \otimes   \ket{v_1}\bra{v_1}_{B_i} \otimes \ket{w_1}\bra{w_1}_{F_i} \right ) C^{\dagger}_{L-Wrap_i} \\
		  &\otimes C_{R-Wrap_i} \tr_{B_i}\left (\ket{v_1}\bra{v_1}_{B_i} \otimes \ket{w_1}\bra{w_1}_{F_i}   \otimes   C'_{R_i}  \ket{0_{R_i}}\bra{0_{R_i}} (C')^{\dagger}_{R_i}    \right ) C^{\dagger}_{R-Wrap_i} + E \numberthis \label{eq:lem20batch2}
		  \end{align*}

		  Now that we have successfully approximated the starting state by a product state (plus an error term $E$) we want to switch back from the true Schmidt vector projectors $\ket{v_1}\bra{v_1}_{B_i}, \ket{w_1}\bra{w_1}_{F_i}$, to the original block encoding approximations of those projectors $\Proj^K_{B_i}, \Proj^K_{F_i}$, in order to complete the proof.  To do this we define two new error terms $H_1$, and $H_2$ as follows:
		  
		  \begin{align*}
		  &H_1 \equiv \lambda_1^i  C_{L-Wrap_i} \tr_{F_i}\left ( C'_{L_i} \ket{0_{L_i}}\bra{0_{L_i}}(C')^{\dagger}_{L_i} \otimes   \ket{v_1}\bra{v_1}_{B_i} \otimes \ket{w_1}\bra{w_1}_{F_i} \right ) C^{\dagger}_{L-Wrap_i} \\
		  &\otimes C_{R-Wrap_i} \tr_{B_i}\left (\ket{v_1}\bra{v_1}_{B_i} \otimes \ket{w_1}\bra{w_1}_{F_i}   \otimes   C'_{R_i}  \ket{0_{R_i}}\bra{0_{R_i}} (C')^{\dagger}_{R_i}    \right ) C^{\dagger}_{R-Wrap_i} \\
		  & -   C_{L-Wrap_i} \tr_{F_i}\left ( C'_{L_i} \ket{0_{L_i}}\bra{0_{L_i}}(C')^{\dagger}_{L_i} \otimes   \Proj^K_{F_i}\ket{\psi} \bra{\psi}_{B_i  \cup F_i} \Proj^K_{F_i} \right ) C^{\dagger}_{L-Wrap_i} \\
		  &\otimes C_{R-Wrap_i} \tr_{B_i}\left (\ket{v_1}\bra{v_1}_{B_i} \otimes \ket{w_1}\bra{w_1}_{F_i}   \otimes   C'_{R_i}  \ket{0_{R_i}}\bra{0_{R_i}} (C')^{\dagger}_{R_i}    \right ) C^{\dagger}_{R-Wrap_i} \numberthis \label{eq:defofG1} \\
		  \end{align*}
		  and
		  \begin{align*}
		  & H_2 =   C_{L-Wrap_i} \tr_{F_i}\left ( C'_{L_i} \ket{0_{L_i}}\bra{0_{L_i}}(C')^{\dagger}_{L_i} \otimes   \Proj^K_{F_i}\ket{\psi} \bra{\psi}_{B_i  \cup F_i} \Proj^K_{F_i} \right ) C^{\dagger}_{L-Wrap_i} \\
		  &\otimes C_{R-Wrap_i} \tr_{B_i}\left (\ket{v_1}\bra{v_1}_{B_i} \otimes \ket{w_1}\bra{w_1}_{F_i}   \otimes   C'_{R_i}  \ket{0_{R_i}}\bra{0_{R_i}} (C')^{\dagger}_{R_i}    \right ) C^{\dagger}_{R-Wrap_i} \\
		  &- 1/\lambda_1^iC_{L-Wrap_i} \tr_{F_i}\left ( C'_{L_i} \ket{0_{L_i}}\bra{0_{L_i}}(C')^{\dagger}_{L_i} \otimes   \Proj^K_{F_i}\ket{\psi} \bra{\psi}_{B_i  \cup F_i} \Proj^K_{F_i} \right ) C^{\dagger}_{L-Wrap_i} \\
		  &\otimes C_{R-Wrap_i} \tr_{B_i}\left (\Proj^K_{B_i}\ket{\psi} \bra{\psi}_{B_i  \cup F_i} \Proj^K_{B_i}   \otimes   C'_{R_i}  \ket{0_{R_i}}\bra{0_{R_i}} (C')^{\dagger}_{R_i}    \right ) C^{\dagger}_{R-Wrap_i} \numberthis \label{eq:defofG2}
		  \end{align*}

		  (we will later show that both error terms $H_1$ and $H_2$ are small in the trace norm) it follows from Equation \ref{eq:lem20batch2} that:

		  \begin{align*}
		  &\Pi^K_{F_i} \bra{0_{M_i}}C\ket{0_{\all}}\bra{0_{\all}}C^{\dagger}\ket{0_{M_i}}\Pi^K_{F_i} \\
		  & = \lambda_1^i  C_{L-Wrap_i} \tr_{F_i}\left ( C'_{L_i} \ket{0_{L_i}}\bra{0_{L_i}}(C')^{\dagger}_{L_i} \otimes   \ket{v_1}\bra{v_1}_{B_i} \otimes \ket{w_1}\bra{w_1}_{F_i} \right ) C^{\dagger}_{L-Wrap_i} \\
		  &\otimes C_{R-Wrap_i} \tr_{B_i}\left (\ket{v_1}\bra{v_1}_{B_i} \otimes \ket{w_1}\bra{w_1}_{F_i}   \otimes   C'_{R_i}  \ket{0_{R_i}}\bra{0_{R_i}} (C')^{\dagger}_{R_i}    \right ) C^{\dagger}_{R-Wrap_i} + E\\
		   &=  C_{L-Wrap_i} \tr_{F_i}\left ( C'_{L_i} \ket{0_{L_i}}\bra{0_{L_i}}(C')^{\dagger}_{L_i} \otimes   \Proj^K_{F_i}\ket{\psi} \bra{\psi}_{B_i  \cup F_i} \Proj^K_{F_i} \right ) C^{\dagger}_{L-Wrap_i} \\
		   &\otimes C_{R-Wrap_i} \tr_{B_i}\left (\ket{v_1}\bra{v_1}_{B_i} \otimes \ket{w_1}\bra{w_1}_{F_i}   \otimes   C'_{R_i}  \ket{0_{R_i}}\bra{0_{R_i}} (C')^{\dagger}_{R_i}    \right ) C^{\dagger}_{R-Wrap_i} + E + H_1 \nonumber \\
		   &=   1/\lambda_1^i C_{L-Wrap_i} \tr_{F_i}\left ( C'_{L_i} \ket{0_{L_i}}\bra{0_{L_i}}(C')^{\dagger}_{L_i} \otimes   \Proj^K_{F_i}\ket{\psi} \bra{\psi}_{B_i  \cup F_i} \Proj^K_{F_i} \right ) C^{\dagger}_{L-Wrap_i} \\
		   &\otimes C_{R-Wrap_i} \tr_{B_i}\left (\Proj^K_{B_i}\ket{\psi} \bra{\psi}_{B_i  \cup F_i} \Proj^K_{B_i}   \otimes   C'_{R_i}  \ket{0_{R_i}}\bra{0_{R_i}} (C')^{\dagger}_{R_i}    \right ) C^{\dagger}_{R-Wrap_i}  + E + H_1+ H_2 \nonumber \\
		    &= 1/\lambda_1^i  \tr_{F_i}\left ( \Proj^K_{F_i}C_{L-Wrap_i} \left ( C'_{L_i} \ket{0_{L_i}}\bra{0_{L_i}}(C')^{\dagger}_{L_i} \otimes   \ket{\psi} \bra{\psi}_{B_i \cup F_i} \right ) C^{\dagger}_{L-Wrap_i} \Proj^K_{F_i} \right )  \\
		    &\otimes  \tr_{B_i}\left (\Proj^K_{B_i} C_{R-Wrap_i} \left ( \ket{\psi} \bra{\psi}_{B_i \cup F_i}    \otimes   C'_{R_i}  \ket{0_{R_i}}\bra{0_{R_i}} (C')^{\dagger}_{R_i} \right ) C^{\dagger}_{R-Wrap_i} \Proj^K_{B_i}  \right )  + E + H_1+ H_2 \nonumber \\
		    & =  1/\lambda_1^i \tr_{F_i}\left (  \Proj^K_{F_i} \bra{0_{M_i}}C_{L_i}C_{B_i \cup M_i \cup F_i}  \ket{0_{L_i\cup B_i \cup M_i \cup F_i}}\bra{0_{L_i\cup B_i \cup M_i \cup F_i}} C^{\dagger}_{B_i \cup M_i \cup F_i}C^{\dagger}_{L_i}\ket{0_{M_i}}\Proj^K_{F_i}  \right )  \nonumber \\
		    &\otimes  \tr_{B_i}\left (  \Proj^K_{B_i} \bra{0_{M_i}}C_{R_i}C_{B_i \cup M_i \cup F_i}  \ket{0_{R_i\cup B_i \cup M_i \cup F_i}}\bra{0_{R_i\cup B_i \cup M_i \cup F_i}}C^{\dagger}_{B_i \cup M_i \cup F_i}C^{\dagger}_{R_i}\ket{0_{M_i}}\Proj^K_{B_i}  \right ) + E + H_1+ H_2 \\
		    &= 1/\lambda_1^i \tr_{F_i}\left (  \ket{\Xi_{L_i}}\bra{\Xi_{L_i}} \right )  \otimes  \tr_{B_i}\left ( \ket{\Xi_{R_i}}\bra{\Xi_{R_i}} \right ) + E + H_1 + H_2
		  \end{align*}
		  
		  It follows, by triangle inequality, that:
		  
		  \begin{align}
		  &\left \| \Pi^K_{F_i} \bra{0_{M_i}}C\ket{0_{\all}}\bra{0_{\all}}C^{\dagger}\ket{0_{M_i}}\Pi^K_{F_i} \right. \nonumber \\
		  &\left .	    -  1/\lambda_1^i \tr_{F_i}\left (  \Proj^K_{F_i} \bra{0_{M_i}}C_{L_i}C_{B_i \cup M_i \cup F_i}  \ket{0_{L_i\cup B_i \cup M_i \cup F_i}}\bra{0_{L_i\cup B_i \cup M_i \cup F_i}} C^{\dagger}_{B_i \cup M_i \cup F_i}C^{\dagger}_{L_i}\ket{0_{M_i}}\Proj^K_{F_i}  \right )  \nonumber \right .\\
		  &\left . \otimes  \tr_{B_i}\left (  \Proj^K_{B_i} \bra{0_{M_i}}C_{R_i}C_{B_i \cup M_i \cup F_i}  \ket{0_{R_i\cup B_i \cup M_i \cup F_i}}\bra{0_{R_i\cup B_i \cup M_i \cup F_i}}C^{\dagger}_{B_i \cup M_i \cup F_i}C^{\dagger}_{R_i}\ket{0_{M_i}}\Proj^K_{B_i}  \right )\right \| \\
		  &  =\left \| \ket{\Omega_i}\bra{\Omega_i}  -  1/\lambda_1^i \tr_{F_i}\left (  \ket{\Xi_{L_i}}\bra{\Xi_{L_i}} \right )  \otimes  \tr_{B_i}\left ( \ket{\Xi_{R_i}}\bra{\Xi_{R_i}} \right )\right \|  \\
		  &= \|E+H_1+H_2 \|\leq \|E\| + \|H_1\| + \|H_2\|
		  \end{align}
		  
		  It remains to bound the norms (in this case the trace norm) of $E, H_1, H_2$.  We will start with $E$:
		  
		 	From the definition of $E$ (Equation \ref{eq:defofE}) we see that:
		 	     
		 	     	\begin{align*}
		 	     \|E\| &= \left \| C_{Wrap_i} \left ( C'_{L_i} \ket{0_{L_i}}\bra{0_{L_i}}(C')^{\dagger}_{L_i} \otimes  \left ( \ket{w_1}\bra{w_1}_{F_i}\ket{\psi} \bra{\psi}_{B_i  \cup F_i} \ket{w_1}\bra{w_1}_{F_i} - \Proj^K_{F_i}\ket{\psi} \bra{\psi}_{B_i  \cup F_i} \Proj^K_{F_i} \right ) \right. \right . \\
		 	     	 &\left . \left . \otimes   C'_{R_i}  \ket{0_{R_i}}\bra{0_{R_i}} (C')^{\dagger}_{R_i}    \right ) C^{\dagger}_{Wrap_i}\right \|\\
		 	     	 & = \| C'_{L_i} \ket{0_{L_i}}\bra{0_{L_i}}(C')^{\dagger}_{L_i} \| \cdot  \|\ket{w_1}\bra{w_1}_{F_i}\ket{\psi} \bra{\psi}_{B_i  \cup F_i} \ket{w_1}\bra{w_1}_{F_i} - \Proj^K_{F_i}\ket{\psi} \bra{\psi}_{B_i \cup F_i} \Proj^K_{F_i}\| \\
		 	     	& \cdot   \|C'_{R_i}  \ket{0_{R_i}}\bra{0_{R_i}} (C')^{\dagger}_{R_i}\|\\
		 	     & = \|  \ket{0_{L_i}}\bra{0_{L_i}} \| \cdot  \|\ket{w_1}\bra{w_1}_{F_i}\ket{\psi} \bra{\psi}_{B_i  \cup F_i} \ket{w_1}\bra{w_1}_{F_i} - \Proj^K_{F_i}\ket{\psi} \bra{\psi}_{B_i  \cup F_i} \Proj^K_{F_i}\|  \cdot   \|  \ket{0_{R_i}}\bra{0_{R_i}} \|\\
		 	     & = \|\ket{w_1}\bra{w_1}_{F_i}\ket{\psi} \bra{\psi}_{B_i  \cup F_i} \ket{w_1}\bra{w_1}_{F_i} - \Proj^K_{F_i}\ket{\psi} \bra{\psi}_{B_i  \cup F_i} \Proj^K_{F_i}\| \leq 2\projerror = 2 \left (\frac{1 - \lambda_1^i}{\lambda_1^i} \right )^K 
		 	     	\end{align*}

		 	     Here the first equality follows by definition of $E$ (Equation \ref{eq:defofE}), the second equality follows because $C_{Wrap_i}$ is unitary and by using the tensor product structure after $C_{Wrap_i}$ is removed, the third equality follows because $C'_{L_i}$ and $C'_{R_i}$ are unitary, the fourth equality follows because $\|  \ket{0_{L_i}}\bra{0_{L_i}} \| = \|  \ket{0_{R_i}}\bra{0_{R_i}} \| = 1$, and the inequality follows by two sequential applications of Lemma \ref{clm:schmidtproj}.  
		 	     
		 	     Next we will bound $\|H_1\|$.  From the definition of $H_1$ in Equation \ref{eq:defofG1} we have that:
		 	     
		 	     \begin{align*}
		 	     &\|H_1\| \equiv \left \| \lambda_1^i  C_{L-Wrap_i} \tr_{F_i}\left ( C'_{L_i} \ket{0_{L_i}}\bra{0_{L_i}}(C')^{\dagger}_{L_i} \otimes   \ket{v_1}\bra{v_1}_{B_i} \otimes \ket{w_1}\bra{w_1}_{F_i} \right ) C^{\dagger}_{L-Wrap_i} \right .\\
		 	     & \otimes C_{R-Wrap_i} \tr_{B_i}\left (\ket{v_1}\bra{v_1}_{B_i} \otimes \ket{w_1}\bra{w_1}_{F_i}   \otimes   C'_{R_i}  \ket{0_{R_i}}\bra{0_{R_i}} (C')^{\dagger}_{R_i}    \right ) C^{\dagger}_{R-Wrap_i}  \\
		 	     & -   C_{L-Wrap_i} \tr_{F_i}\left ( C'_{L_i} \ket{0_{L_i}}\bra{0_{L_i}}(C')^{\dagger}_{L_i} \otimes   \Proj^K_{F_i}\ket{\psi} \bra{\psi}_{B_i \cup F_i} \Proj^K_{F_i} \right ) C^{\dagger}_{L-Wrap_i} \\
		 	     & \left . \otimes C_{R-Wrap_i} \tr_{B_i}\left (\ket{v_1}\bra{v_1}_{B_i} \otimes \ket{w_1}\bra{w_1}_{F_i}   \otimes   C'_{R_i}  \ket{0_{R_i}}\bra{0_{R_i}} (C')^{\dagger}_{R_i}    \right ) C^{\dagger}_{R-Wrap_i}  \right \| \\
		 	     & = \left \| \left ( \lambda_1^i  C_{L-Wrap_i} \tr_{F_i}\left ( C'_{L_i} \ket{0_{L_i}}\bra{0_{L_i}}(C')^{\dagger}_{L_i} \otimes   \ket{v_1}\bra{v_1}_{B_i} \otimes \ket{w_1}\bra{w_1}_{F_i} \right ) C^{\dagger}_{L-Wrap_i} \right . \right .\\
		 	     & \left .  -   C_{L-Wrap_i} \tr_{F_i}\left ( C'_{L_i} \ket{0_{L_i}}\bra{0_{L_i}}(C')^{\dagger}_{L_i} \otimes   \Proj^K_{F_i}\ket{\psi} \bra{\psi}_{B_i \cup F_i} \Proj^K_{F_i} \right ) C^{\dagger}_{L-Wrap_i}\right )    \\
		 	     & \left . \otimes  C_{R-Wrap_i} \tr_{B_i}\left (\ket{v_1}\bra{v_1}_{B_i} \otimes \ket{w_1}\bra{w_1}_{F_i}   \otimes   C'_{R_i}  \ket{0_{R_i}}\bra{0_{R_i}} (C')^{\dagger}_{R_i}    \right ) C^{\dagger}_{R-Wrap_i}  \right \|\\
		 	      &= \left \| \left ( \lambda_1^i  C_{L-Wrap_i} \tr_{F_i}\left ( C'_{L_i} \ket{0_{L_i}}\bra{0_{L_i}}(C')^{\dagger}_{L_i} \otimes   \ket{v_1}\bra{v_1}_{B_i} \otimes \ket{w_1}\bra{w_1}_{F_i} \right ) C^{\dagger}_{L-Wrap_i} \right . \right .\\
		 	      & \left . \left .  -   C_{L-Wrap_i} \tr_{F_i}\left ( C'_{L_i} \ket{0_{L_i}}\bra{0_{L_i}}(C')^{\dagger}_{L_i} \otimes   \Proj^K_{F_i}\ket{\psi} \bra{\psi}_{B_i  \cup F_i} \Proj^K_{F_i} \right ) C^{\dagger}_{L-Wrap_i}\right )   \right \| \\
		 	      & \cdot \left \|  C_{R-Wrap_i} \tr_{B_i}\left (\ket{v_1}\bra{v_1}_{B_i} \otimes \ket{w_1}\bra{w_1}_{F_i}   \otimes   C'_{R_i}  \ket{0_{R_i}}\bra{0_{R_i}} (C')^{\dagger}_{R_i}    \right ) C^{\dagger}_{R-Wrap_i}  \right \| \\
		 	       &= \left \| \left ( \lambda_1^i   \tr_{F_i}\left ( C'_{L_i} \ket{0_{L_i}}\bra{0_{L_i}}(C')^{\dagger}_{L_i} \otimes   \ket{v_1}\bra{v_1}_{B_i} \otimes \ket{w_1}\bra{w_1}_{F_i} \right )  \right . \right .\\
		 	       & \left . \left .  -    \tr_{F_i}\left ( C'_{L_i} \ket{0_{L_i}}\bra{0_{L_i}}(C')^{\dagger}_{L_i} \otimes   \Proj^K_{F_i}\ket{\psi} \bra{\psi}_{B_i  \cup F_i} \Proj^K_{F_i} \right ) \right )   \right \| \\
		 	       & \cdot \left \|     \ket{w_1}\bra{w_1}_{F_i}   \otimes   C'_{R_i}  \ket{0_{R_i}}\bra{0_{R_i}} (C')^{\dagger}_{R_i}     \right \| \\
		 	        &= \left \|    \tr_{F_i}\left ( C'_{L_i} \ket{0_{L_i}}\bra{0_{L_i}}(C')^{\dagger}_{L_i} \otimes  \left ( \lambda_1^i  \ket{v_1}\bra{v_1}_{B_i} \otimes \ket{w_1}\bra{w_1}_{F_i} - \Proj^K_{F_i}\ket{\psi} \bra{\psi}_{B_i  \cup F_i} \Proj^K_{F_i}\right ) \right )     \right \|\\
		 	        & \cdot \left \|     \ket{w_1}\bra{w_1}_{F_i}   \otimes  \ket{0_{R_i}}\bra{0_{R_i}}    \right \| \\
		 	       & \leq  \left \|     \lambda_1^i  \ket{v_1}\bra{v_1}_{B_i} \otimes \ket{w_1}\bra{w_1}_{F_i} - \Proj^K_{F_i}\ket{\psi} \bra{\psi}_{B_i  \cup F_i} \Proj^K_{F_i}    \right \|\\
		 	       &\leq 2\lambda_1^i \projerror \leq 2\projerror =2 \left (\frac{1 - \lambda_1^i}{\lambda_1^i} \right )^K
		 	     \end{align*}

		 	    Here the first equality follows by definition (Equation \ref{eq:defofG1}), the second equality follows by regrouping terms, and the third equality follows by the tensor product structure.  \mnote{ the fourth equality follows by} \mnote{unitariness of some operators}
		 	     
		 	     The proof for the bound on $H_2$ is extremely similar to the bound on $H_1$, and so we will not repeat the argument.
		 	     
		 	     \mnote{finish the argument for $H_1$, give argument for $H_2$ or say that it is very similar, complete the Lemma proof}

	\end{proof}

\subsection{Proofs for Statements in Section \ref{section:quasi-poly-time}}

\begin{lemma*} [Restatement of Lemma \ref{lem:multicuttrick}]
	\begin{align*}
	&\Bigg \|\sum_{\sigma\in\mathcal{P}(\{i+1,\dots,j-1\})\setminus\emptyset} (-1)^{|\sigma|+1} \Bigg (  \frac{1}{(\kappa^{i}_{T, \epsilon_2}\kappa^{j}_{T, \epsilon_2})^{4K+1}}\mathcal{A}(S_{L,i},\eta-1) \cdot \mathcal{A}(S_{j,R},\eta-1) \\
	& \cdot  \mathcal{B}\left ( \Big (\otimes_{k \in \sigma} \Pi^K_{F_k} \bra{0_{M_k}} \Big )\phi_{i,j}\Big (\otimes_{k \in \sigma} \ket{0_{M_k}}\Pi^K_{F_k}\Big ),\frac{\epsilon}{2^\Delta} \right) - \bra{0_{ALL}}\ket{\Psi_{\{i,j\}\cup \sigma}}\bra{\Psi_{\{i,j\}\cup \sigma}}\ket{0_{ALL}} \Bigg ) \Bigg\| \numberthis\\
	&\leq E_3(n, K, T, \epsilon_2, \epsilon, \Delta) + 16f(S,\eta-1,\Delta,\epsilon),
	\end{align*}
	
	where 
	
	\begin{align*}
	&E_3(n, K, T, \epsilon_2, \epsilon,\Delta) \equiv   O \left (2^{\Delta} (6\projerror) + 2^\Delta K \left (e(n)^{2T}+\epsilon_2 \right ) + \epsilon \right) 
	\end{align*}

\end{lemma*}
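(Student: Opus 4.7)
The plan is a four-stage hybrid argument that interpolates between the true quantity $\bra{0_{\all}}\ket{\Psi_{\{i,j\}\cup\sigma}}\bra{\Psi_{\{i,j\}\cup\sigma}}\ket{0_{\all}}$ and the summand produced by Algorithm~\ref{alg:constant-width-assumption2}, introducing a controlled error at each stage and then summing over the at most $2^\Delta$ choices of $\sigma\subseteq\{i+1,\dots,j-1\}$.

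Stage 1 (splitting about $K_i$ and $K_j$). I apply Lemma~\ref{clm:breaktoproduct} iteratively at cuts $K_i$ and $K_j$; the middle projectors $\Pi^K_{F_k}$ for $k\in\sigma$ are light-cone separated from both and stay inside the middle factor. This yields
\begin{equation*}
\bra{0_{\all}}\ket{\Psi_{\{i,j\}\cup\sigma}}\bra{\Psi_{\{i,j\}\cup\sigma}}\ket{0_{\all}} \;=\; \frac{\bra{0}\phi_{L,i}\ket{0}\,\bra{0}\tilde\phi_{i,j,\sigma}\ket{0}\,\bra{0}\phi_{j,R}\ket{0}}{(\lambda_1^i\lambda_1^j)^{4K+1}}\;+\;O(\projerror),
\end{equation*}
where $\tilde\phi_{i,j,\sigma}=\bigl(\otimes_{k\in\sigma}\Pi^K_{F_k}\bra{0_{M_k}}\bigr)\phi_{i,j}\bigl(\otimes_{k\in\sigma}\ket{0_{M_k}}\Pi^K_{F_k}\bigr)$, and the exponent $4K+1$ combines the $(\lambda_1)^K$ prefactors baked into $\ket{\varphi_{L,i}},\ket{\varphi_{j,R}},\ket{\varphi_{i,j}}$ (Definition~\ref{def:subsynth}) with the $1/(\lambda_1^i\lambda_1^j)$ produced by Lemma~\ref{clm:breaktoproduct}. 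The two applications each cost $6\projerror$ in trace norm, so this stage contributes $O(2^\Delta\projerror)$ after summing over $\sigma$.

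Stage 2 substitutes $\kappa^i_{T,\epsilon_2},\kappa^j_{T,\epsilon_2}$ for $\lambda_1^i,\lambda_1^j$ in the denominator via Lemma~\ref{lem:lambdaapprox}; since there are $O(K)$ such factors and each is perturbed by $O(e(n)^{2T}+\epsilon_2)$, this costs $O(K(e(n)^{2T}+\epsilon_2))$ per $\sigma$ and $O(2^\Delta K(e(n)^{2T}+\epsilon_2))$ in total. Stage 3 replaces each $\bra{0}\tilde\phi_{i,j,\sigma}\ket{0}$ by $\mathcal{B}(\tilde\phi_{i,j,\sigma},\epsilon/2^\Delta)$, contributing $\epsilon/2^\Delta$ per $\sigma$ and $O(\epsilon)$ total. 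Together, Stages 1--3 account exactly for the $E_3$ term in the claim.

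Stage 4 is the only delicate step and is the main obstacle. A naive triangle inequality here would pay $f(S,\eta-1,\Delta,\epsilon)$ for each of the $2^\Delta$ summands, giving an unacceptable $2^\Delta f$ term. The fix is to observe that the factors $\bra{0}\phi_{L,i}\ket{0}$ and $\bra{0}\phi_{j,R}\ket{0}$ do not depend on $\sigma$, so I pull them outside the sum and rewrite the residual error as
\begin{equation*}
\frac{\mathcal{A}(S_{L,i},\eta-1)\,\mathcal{A}(S_{j,R},\eta-1)-\bra{0}\phi_{L,i}\ket{0}\,\bra{0}\phi_{j,R}\ket{0}}{(\kappa^i_{T,\epsilon_2}\kappa^j_{T,\epsilon_2})^{4K+1}}\cdot\sum_{\sigma\neq\emptyset}(-1)^{|\sigma|+1}\mathcal{B}(\tilde\phi_{i,j,\sigma},\epsilon/2^\Delta).
\end{equation*}
The outer difference, by a three-piece split $\mathcal{A}_L\mathcal{A}_R-a_L^*a_R^*=(\mathcal{A}_L-a_L^*)a_R^*+a_L^*(\mathcal{A}_R-a_R^*)+(\mathcal{A}_L-a_L^*)(\mathcal{A}_R-a_R^*)$, is bounded by $O(f(S,\eta-1,\Delta,\epsilon))$ using $|a^*_L|,|a^*_R|=O(1)$. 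The inner alternating sum is bounded by $O(1)$: by the analogue of Lemma~\ref{clm:expansiontrick} applied to the middle region (which applies because the cuts in $\sigma$ are light-cone separated), the true sum $\sum_{\sigma\neq\emptyset}(-1)^{|\sigma|+1}\bra{0}\tilde\phi_{i,j,\sigma}\ket{0}$ differs from the single $\sigma=\emptyset$ term $\bra{0}\phi_{i,j}\ket{0}$ by at most $(2e(n)+2\projerror)^{j-i-1}=O(1)$, and the per-term $\mathcal{B}$-errors contribute an additional $O(\epsilon)$. Collecting all four stages and noting that $(\kappa^i_{T,\epsilon_2}\kappa^j_{T,\epsilon_2})^{-(4K+1)}=O(1)$ (since $\kappa\geq 1-O(e(n)+\epsilon_2)$ by Lemma~\ref{lem:lambdaapprox}) yields $E_3+16f$, with the constant $16$ absorbing the three pieces of the outer split, the $O(1)$ bound on the middle sum, and the $O(1)$ bound on the $\kappa$-normalization.
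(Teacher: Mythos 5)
Your proposal is correct and follows essentially the same route as the paper's proof: the four stages correspond exactly to the paper's error terms $G_1$ (two applications of Lemma \ref{clm:breaktoproduct} at cuts $i$ and $j$, costing $O(2^{\Delta}\projerror)$), $G_2$ (Lemma \ref{lem:lambdaapprox} for the $\kappa$-for-$\lambda$ substitution), $G_3$ (the $\epsilon/2^{\Delta}$ base-case precision), and $G_4$ (pulling the $\sigma$-independent outer factors out of the alternating sum and invoking Lemma \ref{clm:expansiontrick} to bound the middle sum by $O(1)$, thereby paying only $O(f(S,\eta-1,\Delta,\epsilon))$ rather than $2^{\Delta}f$). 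The only difference is minor bookkeeping of where the per-term $\mathcal{B}$-errors are charged, which does not affect the bound.
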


\begin{proof}
	The proof proceeds in two parts. First, we show by direct calculation that the desired error quantity can be upper bounded by the sum of four error quantities $G_1$, $G_2$, $G_3$, and $G_4$:
	\begin{align*}
	&\Bigg \|\sum_{\sigma\in\mathcal{P}(\{i+1,\dots,j-1\})\setminus\emptyset} (-1)^{|\sigma|+1} \Bigg (  \frac{1}{(\kappa^{i}_{T, \epsilon_2}\kappa^{j}_{T, \epsilon_2})^{4K+1}}\mathcal{A}(S_{L,i},\eta-1) \cdot \mathcal{A}(S_{j,R},\eta-1) \\
	& \cdot  \mathcal{B}\left ( \Big (\otimes_{k \in \sigma} \Pi^K_{F_k} \bra{0_{M_k}} \Big )\phi_{i,j}\Big (\otimes_{k \in \sigma} \ket{0_{M_k}}\Pi^K_{F_k}\Big ),\frac{\epsilon}{2^\Delta} \right) - \bra{0_{ALL}}\ket{\Psi_{\{i,j\}\cup \sigma}}\bra{\Psi_{\{i,j\}\cup \sigma}}\ket{0_{ALL}} \Bigg ) \Bigg\| \\
	&\leq G_1+G_2+G_3+G_4. \numberthis\label{eq:four-error-terms}
	\end{align*}
	We then bound these four terms individually. We begin by demonstrating Equation \ref{eq:four-error-terms}, and defining $G_1$, $G_2$, $G_3$, and $G_4$ in the process.
	
	\begin{align*}
	&\Bigg \|\sum_{\sigma\in\mathcal{P}(\{i+1,\dots,j-1\})\setminus\emptyset} (-1)^{|\sigma|+1}\left (  \frac{1}{(\kappa^{i}_{T, \epsilon_2}\kappa^{j}_{T, \epsilon_2})^{2K+1}}\mathcal{A}(S_{L,i},\eta-1) \cdot \mathcal{A}(S_{j,R},\eta-1) \right .\\
	& \cdot  \mathcal{B}\left ( \left (\otimes_{k \in \sigma} \Pi^K_{F_k} \bra{0_{M_k}} \right )\phi_{i,j}\left (\otimes_{k \in \sigma} \ket{0_{M_k}}\Pi^K_{F_k}\right ),\frac{\epsilon}{2^\Delta} \right)  - \bra{0_{ALL}}\ket{\Psi_{\{i,j\}\cup \sigma}}\bra{\Psi_{\{i,j\}\cup \sigma}}\ket{0_{ALL}}  \Bigg )
	\Bigg \| \\
	&\leq \Bigg \|\frac{1}{(\lambda_1^i \lambda_1^j)^{4K+1}} \sum_{\sigma\in\mathcal{P}(\{i+1,\dots,j-1\})\setminus\emptyset} (-1)^{|\sigma|+1}\Bigg ( \mathcal{A}(S_{L,i},\eta-1) \cdot \mathcal{A}(S_{j,R},\eta-1) \\
	& \left . \cdot   \mathcal{B}\left ( \left (\otimes_{k \in \sigma} \Pi^K_{F_k} \bra{0_{M_k}} \right )\phi_{i,j}\left (\otimes_{k \in \sigma} \ket{0_{M_k}}\Pi^K_{F_k}\right ),\frac{\epsilon}{2^\Delta} \right) \right .\\
	&-\bra{0_{ALL}} \phi_{L, i}\ket{0_{ALL}} \cdot \bra{0_{ALL}} \left (\otimes_{k \in \sigma} \Pi^K_{F_k} \bra{0_{M_k}} \right )\phi_{i,j}\left (\otimes_{k \in \sigma} \ket{0_{M_k}}\Pi^K_{F_k}\right )\ket{0_{ALL}}\cdot \bra{0_{ALL}} \phi_{j,R}\ket{0_{ALL}} \Bigg )\Bigg\|\\
	& + \sum_{\sigma\in\mathcal{P}(\{i+1,\dots,j-1\})\setminus\emptyset} \Bigg \|\bra{0_{ALL}}\ket{\Psi_{\{i,j\}\cup \sigma}}\bra{\Psi_{\{i,j\}\cup \sigma}}\ket{0_{ALL}}\\
	&- \frac{1}{(\lambda^{i}_1\lambda^{j}_{1})^{4K+1}}\bra{0_{ALL}} \phi_{L, i}\ket{0_{ALL}} \cdot \bra{0_{ALL}} \left (\otimes_{k \in \sigma} \Pi^K_{F_k} \bra{0_{M_k}} \right )\phi_{i,j}\left (\otimes_{k \in \sigma} \ket{0_{M_k}}\Pi^K_{F_k}\right )\ket{0_{ALL}}\cdot \bra{0_{ALL}} \phi_{j,R}\ket{0_{ALL}} \Bigg \|\\
	& +\sum_{\sigma\in\mathcal{P}(\{i+1,\dots,j-1\})\setminus\emptyset} \Bigg \|\left(\frac{1}{(\lambda^{i}_1\lambda^{j}_{1})^{4K+1}} - \frac{1}{(\kappa^{i}_{T, \epsilon_2}\kappa^{j}_{T, \epsilon_2})^{4K+1}}\right )\\
	&\cdot \mathcal{A}(S_{L,i},\eta-1) \cdot \mathcal{A}(S_{j,R},\eta-1) \cdot   \mathcal{B}\left ( \left (\otimes_{k \in \sigma} \Pi^K_{F_k} \bra{0_{M_k}} \right )\phi_{i,j}\left (\otimes_{k \in \sigma} \ket{0_{M_k}}\Pi^K_{F_k}\right ),\frac{\epsilon}{2^\Delta} \right)\Bigg \|
	\end{align*}
	which is equal to
	\begin{align*}
	&= \Bigg \| \frac{1}{(\lambda_1^i \lambda_1^j)^{4K+1}}\sum_{\sigma\in\mathcal{P}(\{i+1,\dots,j-1\})\setminus\emptyset} (-1)^{|\sigma|+1}\left (  \mathcal{A}(S_{L,i},\eta-1) \cdot \mathcal{A}(S_{j,R},\eta-1) \right .\\
	& \left . \cdot   \mathcal{B}\left ( \left (\otimes_{k \in \sigma} \Pi^K_{F_k} \bra{0_{M_k}} \right )\phi_{i,j}\left (\otimes_{k \in \sigma} \ket{0_{M_k}}\Pi^K_{F_k}\right ),\frac{\epsilon}{2^\Delta} \right) \right .\\
	&-\bra{0_{ALL}} \phi_{L, i}\ket{0_{ALL}} \cdot \bra{0_{ALL}} \left (\otimes_{k \in \sigma} \Pi^K_{F_k} \bra{0_{M_k}} \right )\phi_{i,j}\left (\otimes_{k \in \sigma} \ket{0_{M_k}}\Pi^K_{F_k}\right )\ket{0_{ALL}}\cdot \bra{0_{ALL}} \phi_{j,R}\ket{0_{ALL}} \Bigg )\Bigg\|\\
	&+G_1 + G_2 \numberthis \label{eq:part1bigerror}
	\end{align*}
	where $G_1$ and $G_2$ are defined as the error quantities
	
	\begin{align*}
	&G_1 \equiv \sum_{\sigma\in\mathcal{P}(\{i+1,\dots,j-1\})\setminus\emptyset} \Bigg \|\bra{0_{ALL}}\ket{\Psi_{\{i,j\}\cup \sigma}}\bra{\Psi_{\{i,j\}\cup \sigma}}\ket{0_{ALL}}\\
	&- \frac{1}{(\lambda^{i}_1\lambda^{j}_{1})^{4K+1}}\bra{0_{ALL}} \phi_{L, i}\ket{0_{ALL}} \cdot \bra{0_{ALL}} \left (\otimes_{k \in \sigma} \Pi^K_{F_k} \bra{0_{M_k}} \right )\phi_{i,j}\left (\otimes_{k \in \sigma} \ket{0_{M_k}}\Pi^K_{F_k}\right )\ket{0_{ALL}}\cdot \bra{0_{ALL}} \phi_{j,R}\ket{0_{ALL}} \Bigg \|
	\end{align*}
	and
	\begin{align*}
		&G_2 \equiv \sum_{\sigma\in\mathcal{P}(\{i+1,\dots,j-1\})\setminus\emptyset} \Bigg \|\left(\frac{1}{(\lambda^{i}_1\lambda^{j}_{1})^{4K+1}} - \frac{1}{(\kappa^{i}_{T, \epsilon_2}\kappa^{j}_{T, \epsilon_2})^{4K+1}}\right )\\
		&\cdot \mathcal{A}(S_{L,i},\eta-1) \cdot \mathcal{A}(S_{j,R},\eta-1) \cdot   \mathcal{B}\left ( \left (\otimes_{k \in \sigma} \Pi^K_{F_k} \bra{0_{M_k}} \right )\phi_{i,j}\left (\otimes_{k \in \sigma} \ket{0_{M_k}}\Pi^K_{F_k}\right ),\frac{\epsilon}{2^\Delta} \right)\Bigg \|.
	\end{align*}

		Later we will bound the size of $G_1$ and $G_2$ using Lemmas \ref{clm:breaktoproduct}, and \ref{lem:lambdaapprox} respectively.  For now we carry them along in our calculation.  So, continuing where we left off in Equation \ref{eq:part1bigerror}:

	\begin{align*}
	&\Bigg \|\sum_{\sigma\in\mathcal{P}(\{i+1,\dots,j-1\})\setminus\emptyset} (-1)^{|\sigma|+1}\left (  \frac{1}{(\kappa^{i}_{T, \epsilon_2}\kappa^{j}_{T, \epsilon_2})^{2K+1}}\mathcal{A}(S_{L,i},\eta-1) \cdot \mathcal{A}(S_{j,R},\eta-1) \right .\\
	& \cdot  \mathcal{B}\left ( \left (\otimes_{k \in \sigma} \Pi^K_{F_k} \bra{0_{M_k}} \right )\phi_{i,j}\left (\otimes_{k \in \sigma} \ket{0_{M_k}}\Pi^K_{F_k}\right ),\frac{\epsilon}{2^\Delta} \right)  - \bra{0_{ALL}}\ket{\Psi_{\{i,j\}\cup \sigma}}\bra{\Psi_{\{i,j\}\cup \sigma}}\ket{0_{ALL}}  \Bigg )
	\Bigg \| \\
	&\leq \Bigg \| \frac{1}{(\lambda_1^i \lambda_1^j)^{4K+1}}\sum_{\sigma\in\mathcal{P}(\{i+1,\dots,j-1\})\setminus\emptyset} (-1)^{|\sigma|+1}\left (  \mathcal{A}(S_{L,i},\eta-1) \cdot \mathcal{A}(S_{j,R},\eta-1) \right .\\
	& \left . \cdot   \mathcal{B}\left ( \left (\otimes_{k \in \sigma} \Pi^K_{F_k} \bra{0_{M_k}} \right )\phi_{i,j}\left (\otimes_{k \in \sigma} \ket{0_{M_k}}\Pi^K_{F_k}\right ),\frac{\epsilon}{2^\Delta} \right) \right .\\
	&-\bra{0_{ALL}} \phi_{L, i}\ket{0_{ALL}} \cdot \bra{0_{ALL}} \left (\otimes_{k \in \sigma} \Pi^K_{F_k} \bra{0_{M_k}} \right )\phi_{i,j}\left (\otimes_{k \in \sigma} \ket{0_{M_k}}\Pi^K_{F_k}\right )\ket{0_{ALL}}\cdot \bra{0_{ALL}} \phi_{j,R}\ket{0_{ALL}} \Bigg )\Bigg\|\\
	&+G_1 + G_2\\
	& \leq\Bigg \|\frac{1}{(\lambda_1^i \lambda_1^j)^{4K+1}}\mathcal{A}(S_{L,i},\eta-1) \cdot \mathcal{A}(S_{j,R},\eta-1) \sum_{\sigma\in\mathcal{P}(\{i+1,\dots,j-1\})\setminus\emptyset} (-1)^{|\sigma|+1}\left (   \right .\\
	& \left . \cdot   \mathcal{B}\left ( \left (\otimes_{k \in \sigma} \Pi^K_{F_k} \bra{0_{M_k}} \right )\phi_{i,j}\left (\otimes_{k \in \sigma} \ket{0_{M_k}}\Pi^K_{F_k}\right ),\frac{\epsilon}{2^\Delta} \right) \right .\\
	&- \bra{0_{ALL}} \left (\otimes_{k \in \sigma} \Pi^K_{F_k} \bra{0_{M_k}} \right )\phi_{i,j}\left (\otimes_{k \in \sigma} \ket{0_{M_k}}\Pi^K_{F_k}\right )\ket{0_{ALL}} \Bigg )\Bigg\|\\
	& +\Bigg \|\frac{1}{(\lambda_1^i \lambda_1^j)^{4K+1}} \Bigg (   \mathcal{A}(S_{L,i},\eta-1) \cdot \mathcal{A}(S_{j,R},\eta-1)-\bra{0_{ALL}} \phi_{L, i}\ket{0_{ALL}} \cdot \bra{0_{ALL}} \phi_{j,R}\ket{0_{ALL}} \Bigg )\\
	&\cdot \sum_{\sigma\in\mathcal{P}(\{i+1,\dots,j-1\})\setminus\emptyset} (-1)^{|\sigma|+1} \bra{0_{ALL}} \left (\otimes_{k \in \sigma} \Pi^K_{F_k} \bra{0_{M_k}} \right )\phi_{i,j}\left (\otimes_{k \in \sigma} \ket{0_{M_k}}\Pi^K_{F_k}\right )\ket{0_{ALL}}\Bigg\|\\
	&+G_1 + G_2\\
	& \leq  2^{\Delta} \cdot \frac{\epsilon}{2^{\Delta}} \cdot \Bigg \|\frac{1}{(\lambda_1^i \lambda_1^j)^{4K+1}}\mathcal{A}(S_{L,i},\eta-1) \cdot \mathcal{A}(S_{j,R},\eta-1)\Bigg\|\\
	& +\Bigg \|\frac{1}{(\lambda_1^i \lambda_1^j)^{4K+1}} \Bigg (  \mathcal{A}(S_{L,i},\eta-1) \cdot \mathcal{A}(S_{j,R},\eta-1)-\bra{0_{ALL}} \phi_{L, i}\ket{0_{ALL}} \cdot \bra{0_{ALL}} \phi_{j,R}\ket{0_{ALL}} \Bigg )\\
	&\cdot \sum_{\sigma\in\mathcal{P}(\{i+1,\dots,j-1\})\setminus\emptyset} (-1)^{|\sigma|+1} \bra{0_{ALL}} \left (\otimes_{k \in \sigma} \Pi^K_{F_k} \bra{0_{M_k}} \right )\phi_{i,j}\left (\otimes_{k \in \sigma} \ket{0_{M_k}}\Pi^K_{F_k}\right )\ket{0_{ALL}}\Bigg\|\\
	&+G_1 + G_2\\
	\end{align*}
	\begin{align*}
	& \leq \epsilon \cdot \Bigg \|\frac{1}{(\lambda_1^i \lambda_1^j)^{4K+1}}\mathcal{A}(S_{L,i},\eta-1) \cdot \mathcal{A}(S_{j,R},\eta-1)\Bigg\|\\
	& +\Bigg \|\frac{1}{(\lambda_1^i \lambda_1^j)^{4K+1}} \Bigg (  \mathcal{A}(S_{L,i},\eta-1) \cdot \mathcal{A}(S_{j,R},\eta-1)-\bra{0_{ALL}} \phi_{L, i}\ket{0_{ALL}} \cdot \bra{0_{ALL}} \phi_{j,R}\ket{0_{ALL}} \Bigg )\cdot \bra{0_{ALL}} \phi_{i,j}\ket{0_{ALL}}\Bigg\|\\
	&+\Bigg \|\frac{1}{(\lambda_1^i \lambda_1^j)^{4K+1}} \Bigg (   \mathcal{A}(S_{L,i},\eta-1) \cdot \mathcal{A}(S_{j,R},\eta-1)-\bra{0_{ALL}} \phi_{L, i}\ket{0_{ALL}} \cdot \bra{0_{ALL}} \phi_{j,R}\ket{0_{ALL}} \Bigg )\\
	&\cdot \Bigg( \bra{0_{ALL}} \phi_{i,j}\ket{0_{ALL}} -  \sum_{\sigma\in\mathcal{P}(\{i+1,\dots,j-1\})\setminus\emptyset} (-1)^{|\sigma|+1} \bra{0_{ALL}} \left (\otimes_{k \in \sigma} \Pi^K_{F_k} \bra{0_{M_k}} \right )\phi_{i,j}\left (\otimes_{k \in \sigma} \ket{0_{M_k}}\Pi^K_{F_k}\right )\ket{0_{ALL}}\Bigg)\Bigg\|\\
	&+G_1 + G_2\\
	& \leq G_1 + G_2 +G_3 + \Bigg \|\frac{1}{(\lambda_1^i \lambda_1^j)^{4K+1}} \Bigg (  \mathcal{A}(S_{L,i},\eta-1) \cdot \mathcal{A}(S_{j,R},\eta-1)-\bra{0_{ALL}} \phi_{L, i}\ket{0_{ALL}} \cdot \bra{0_{ALL}} \phi_{j,R}\ket{0_{ALL}} \Bigg )\Bigg\|\cdot\\
	& \Bigg( \Bigg \|  \bra{0_{ALL}} \phi_{i,j}\ket{0_{ALL}} -  \sum_{\sigma\in\mathcal{P}(\{i+1,\dots,j-1\})\setminus\emptyset} (-1)^{|\sigma|+1} \bra{0_{ALL}} \left (\otimes_{k \in \sigma} \Pi^K_{F_k} \bra{0_{M_k}} \right )\phi_{i,j}\left (\otimes_{k \in \sigma} \ket{0_{M_k}}\Pi^K_{F_k}\right )\ket{0_{ALL}}\Bigg\| + 1\Bigg ) \\
	& \leq G_1 + G_2 +G_3 + G_4
	\end{align*}

	Where

		\begin{align*}
		&G_3 \equiv \epsilon \cdot \Bigg \|\frac{1}{(\lambda_1^i \lambda_1^j)^{4K+1}}\mathcal{A}(S_{L,i},\eta-1) \cdot \mathcal{A}(S_{j,R},\eta-1)\Bigg\| 
		\end{align*}

	and
		
		\begin{align*}
		&G_4 \equiv\Bigg \|\frac{1}{(\lambda_1^i \lambda_1^j)^{4K+1}} \Bigg (  \mathcal{A}(S_{L,i},\eta-1) \cdot \mathcal{A}(S_{j,R},\eta-1)-\bra{0_{ALL}} \phi_{L, i}\ket{0_{ALL}} \cdot \bra{0_{ALL}} \phi_{j,R}\ket{0_{ALL}} \Bigg )\Bigg\|\cdot\\
		& \Bigg( \Bigg \|  \bra{0_{ALL}} \phi_{i,j}\ket{0_{ALL}} -  \sum_{\sigma\in\mathcal{P}(\{i+1,\dots,j-1\})\setminus\emptyset} (-1)^{|\sigma|+1} \bra{0_{ALL}} \left (\otimes_{k \in \sigma} \Pi^K_{F_k} \bra{0_{M_k}} \right )\phi_{i,j}\left (\otimes_{k \in \sigma} \ket{0_{M_k}}\Pi^K_{F_k}\right )\ket{0_{ALL}}\Bigg\| + 1\Bigg ) 
		\end{align*}
		
		We will now prove the bounds:  $G_1 \leq   2^{\Delta} (12\projerror) $, $G_2 \leq O\left (2^\Delta K \left (e(n)^{2T}+\epsilon_2 \right ) \right ) $, $G_3 \leq O( \epsilon)$, $G_4 \leq 8 (1 + (2e(n) + 2\projerror)^{\Delta -2})f(S,\eta-1,\Delta,\epsilon) \leq 16 \cdot f(S,\eta-1,\Delta,\epsilon)$.  The desired result follows from these bounds, so all that remains is to prove them, which we do below.

	We begin by bounding $G_1$.  For any fixed subset $\sigma\in\mathcal{P}(\{j-1, ..., i+1\})$ we know, by using two applications of Lemma \ref{clm:breaktoproduct} that:

		\begin{align*}
		& \Bigg \|\bra{0_{ALL}}\ket{\Psi_{\{i,j\}\cup \sigma}}\bra{\Psi_{\{i,j\}\cup \sigma}}\ket{0_{ALL}}\\
		&- \frac{1}{(\lambda^{i}_1\lambda^{j}_{1})^{4K+1}}\bra{0_{ALL}} \phi_{L, i}\ket{0_{ALL}} \cdot \bra{0_{ALL}} \left (\otimes_{k \in \sigma} \Pi^K_{F_k} \bra{0_{M_k}} \right )\phi_{i,j}\left (\otimes_{k \in \sigma} \ket{0_{M_k}}\Pi^K_{F_k}\right )\ket{0_{ALL}}\cdot \bra{0_{ALL}} \phi_{j,R}\ket{0_{ALL}} \Bigg \|\\
		& \leq \Bigg \|\ket{\Psi_{\{i,j\}\cup \sigma}}\bra{\Psi_{\{i,j\}\cup \sigma}}- \frac{1}{(\lambda^{i}_1\lambda^{j}_{1})^{4K+1}} \phi_{L, i}\otimes \left (\otimes_{k \in \sigma} \Pi^K_{F_k} \bra{0_{M_k}} \right )\phi_{i,j}\left (\otimes_{k \in \sigma} \ket{0_{M_k}}\Pi^K_{F_k}\right )\otimes \phi_{j,R} \Bigg \| \\
		& \leq 2 \cdot 6\projerror = 12\projerror
		\end{align*}
	
	This follows because we can use Lemma \ref{clm:breaktoproduct} to ``cut" the state $\ket{\Psi_{\{i,j\}\cup \sigma}}\bra{\Psi_{\{i,j\}\cup \sigma}}$ twice, once at cut $i$ and once at cut $j$, which produces the above product state, incurring error $2 \cdot 6f(n)$.  It follows that:
	
		\begin{align*}
		&G_1 \equiv \sum_{\sigma\in\mathcal{P}(\{i+1,\dots,j-1\})\setminus\emptyset} \Bigg \|\bra{0_{ALL}}\ket{\Psi_{\{i,j\}\cup \sigma}}\bra{\Psi_{\{i,j\}\cup \sigma}}\ket{0_{ALL}}\\
		&- \frac{1}{(\lambda^{i}_1\lambda^{j}_{1})^{4K+1}}\bra{0_{ALL}} \phi_{L, i}\ket{0_{ALL}} \cdot \bra{0_{ALL}} \left (\otimes_{k \in \sigma} \Pi^K_{F_k} \bra{0_{M_k}} \right )\phi_{i,j}\left (\otimes_{k \in \sigma} \ket{0_{M_k}}\Pi^K_{F_k}\right )\ket{0_{ALL}}\cdot \bra{0_{ALL}} \phi_{j,R}\ket{0_{ALL}} \Bigg \| \\
		&\leq 2^{\Delta} (12\projerror),
		\end{align*}
		
		as desired.

		For the next three bounds we will repeatedly use the fact that $(\lambda^{i}_{1})^{4K+1} = \Theta(1)= (\lambda^{j}_{1})^{4K+1}$, and thus, $\frac{1}{(\lambda^{i}_1\lambda^{j}_{1})^{4K+1}}  = \Theta(1)$.  The reason for this is that, we know, from the use of Lemma \ref{lem:highschmidtnew} in the error analysis of Algorithm \ref{alg:quasi-poly-subroutine}, that $(\lambda^{i}_{1})^{4K+1}, (\lambda^{j}_{1})^{4K+1} \geq 1 - O(e(n))$, where $e(n) \leq (1 - 2^{\frac{\log(\delta)}{\log^7(n)}}) = O(1/\log^4(n))$ (since $\delta > n^{-\log^2(n)}$ as verified in the check in the driver Algorithm \ref{alg:quasi-poly-driver}).  Since $K = O(\log^3(n))$, as specified in Algorithm \ref{alg:quasi-poly-subroutine}, it follows that  $(\lambda^{i}_{1})^{4K+1} = \Theta(1)= (\lambda^{j}_{1})^{4K+1}$.   
		
		    We now bound $G_2$.

		\begin{align*}
		&G_2 \equiv \sum_{\sigma\in\mathcal{P}(\{i+1,\dots,j-1\})\setminus\emptyset} \Bigg \|\left(\frac{1}{(\lambda^{i}_1\lambda^{j}_{1})^{4K+1}} - \frac{1}{(\kappa^{i}_{T, \epsilon_2}\kappa^{j}_{T, \epsilon_2})^{4K+1}}\right )\\
		&\cdot \mathcal{A}(S_{L,i},\eta-1) \cdot \mathcal{A}(S_{j,R},\eta-1) \cdot   \mathcal{B}\left ( \left (\otimes_{k \in \sigma} \Pi^K_{F_k} \bra{0_{M_k}} \right )\phi_{i,j}\left (\otimes_{k \in \sigma} \ket{0_{M_k}}\Pi^K_{F_k}\right ),\frac{\epsilon}{2^\Delta} \right)\Bigg \|\\
		& \leq \sum_{\sigma\in\mathcal{P}(\{i+1,\dots,j-1\})\setminus\emptyset} \Bigg \|\left(\frac{1}{(\lambda^{i}_1\lambda^{j}_{1})^{4K+1}} - \frac{1}{(\kappa^{i}_{T, \epsilon_2}\kappa^{j}_{T, \epsilon_2})^{4K+1}}\right ) \Bigg \| \\
		&= 2^{\Delta-2 } \Bigg \|\left(\frac{1}{(\lambda^{i}_1\lambda^{j}_{1})^{4K+1}} - \frac{1}{(\kappa^{i}_{T, \epsilon_2}\kappa^{j}_{T, \epsilon_2})^{4K+1}}\right ) \Bigg \| =  2^{\Delta-2 } \Bigg \|\left(\frac{(\lambda^{i}_1\lambda^{j}_{1})^{4K+1} -(\kappa^{i}_{T, \epsilon_2}\kappa^{j}_{T, \epsilon_2})^{4K+1} }{(\lambda^{i}_1\lambda^{j}_{1})^{4K+1}(\kappa^{i}_{T, \epsilon_2}\kappa^{j}_{T, \epsilon_2})^{4K+1}} \right ) \Bigg \|\\
		& = O(2^\Delta)  \Bigg \|(\lambda^{i}_1\lambda^{j}_{1})^{4K+1} -(\kappa^{i}_{T, \epsilon_2}\kappa^{j}_{T, \epsilon_2})^{4K+1} \Bigg \| = O\left (2^\Delta (4K+1) \left (|\lambda^{i}_1 - \kappa^{i}_{T, \epsilon_2}| + |\lambda^{j}_1 - \kappa^{j}_{T, \epsilon_2}|  \right )\right )\\
		& \leq O\left (2^\Delta K \left (\frac{e(n)^{2T}+\epsilon_2}{(\lambda_1^i)^{2T+1}} \right ) \right ) = O\left (2^\Delta K \left (e(n)^{2T}+\epsilon_2 \right ) \right )
		\end{align*}
		
		Where the first inequality follows because, by definition, $\mathcal{A}(S_{L,i},\eta-1) , \mathcal{A}(S_{j,R},\eta-1) ,  \mathcal{B}\left ( \left (\otimes_{k \in \sigma} \Pi^K_{F_k} \bra{0_{M_k}} \right )\phi_{i,j}\left (\otimes_{k \in \sigma} \ket{0_{M_k}}\Pi^K_{F_k}\right ),\frac{\epsilon}{2^\Delta} \right) = O(1)$ (since each is a close approximation of a quantum state amplitude squared, which is at most 1 by definition).  The remaining steps follow by using the fact that $(\lambda^{i}_{1})^{4K+1} = \Theta(1)= (\lambda^{j}_{1})^{4K+1}$ as discussed above (note that $(\lambda^{i}_{1})^{2T} = \Theta(1)$ for the same reason, since $T = O(\log^3(n))$), and by using Lemma \ref{lem:lambdaapprox} which gives the error bound for how well the $\kappa$ terms approximate the $\lambda$ terms.

		We now bound $G_3$:
		
		\begin{align*}
		&G_3 \equiv \epsilon \cdot \Bigg \|\frac{1}{(\lambda_1^i \lambda_1^j)^{4K+1}}\mathcal{A}(S_{L,i},\eta-1) \cdot \mathcal{A}(S_{j,R},\eta-1)\Bigg\| \leq O( \epsilon)
		\end{align*}
		
		Where we have used that $(\lambda^{i}_{1})^{4K+1} = \Theta(1)= (\lambda^{j}_{1})^{4K+1}$, and $\mathcal{A}(S_{L,i},\eta-1) , \mathcal{A}(S_{j,R},\eta-1)  = O(1)$, for the same reasons as in the bound of $G_2$.  
		
        We now bound $G_4$:

		\begin{align*}
		&G_4 \equiv\Bigg \|\frac{1}{(\lambda_1^i \lambda_1^j)^{4K+1}} \Bigg (  \mathcal{A}(S_{L,i},\eta-1) \cdot \mathcal{A}(S_{j,R},\eta-1)-\bra{0_{ALL}} \phi_{L, i}\ket{0_{ALL}} \cdot \bra{0_{ALL}} \phi_{j,R}\ket{0_{ALL}} \Bigg )\Bigg\|\cdot\\
		& \Bigg( \Bigg \|  \bra{0_{ALL}} \phi_{i,j}\ket{0_{ALL}} -  \sum_{\sigma\in\mathcal{P}(\{i+1,\dots,j-1\})\setminus\emptyset} (-1)^{|\sigma|+1} \bra{0_{ALL}} \left (\otimes_{k \in \sigma} \Pi^K_{F_k} \bra{0_{M_k}} \right )\phi_{i,j}\left (\otimes_{k \in \sigma} \ket{0_{M_k}}\Pi^K_{F_k}\right )\ket{0_{ALL}}\Bigg\| + 1\Bigg ) \\
		&  \leq 4 \Bigg \|   \mathcal{A}(S_{L,i},\eta-1) \cdot \mathcal{A}(S_{j,R},\eta-1)-\bra{0_{ALL}} \phi_{L, i}\ket{0_{ALL}} \cdot \bra{0_{ALL}} \phi_{j,R}\ket{0_{ALL}} \Bigg\|\cdot \\
		& \Bigg( (2e(n) + 2\projerror)^{\Delta-2} + 1\Bigg )\\
		&  \leq  8 \Bigg \|   \mathcal{A}(S_{L,i},\eta-1) \cdot \mathcal{A}(S_{j,R},\eta-1)-\bra{0_{ALL}} \phi_{L, i}\ket{0_{ALL}} \cdot \bra{0_{ALL}} \phi_{j,R}\ket{0_{ALL}} \Bigg\|\\
		& \leq 8 \cdot 2 \cdot  f(S,\eta-1,\Delta,\epsilon) = 16 f(S,\eta-1,\Delta,\epsilon)
		\end{align*}
	
	Here the first inequality follows by our previous argument that $\frac{1}{(\lambda_1^i \lambda_1^j)^{4K+1}} = \Theta(1)$, as well as Lemma \ref{clm:expansiontrick}.  (In fact, since we find it desirable to have an explicit constant for this particular error term, we are using $\frac{1}{(\lambda_1^i \lambda_1^j)^{4K+1}} \leq 4$, which the reader may verify, although we emphasize that the value of this constant does not matter for the asymptotic scaling and is only used for simplicity of presentation elsewhere in this paper.)  Note that our use of Lemma \ref{clm:expansiontrick}, while simple, was key here in order to avoid a factor of $2^\Delta$ appearing in the bound of $G_4$.  The second inequality follows because the bound $(2e(n) + 2\projerror)^{\Delta-2} =  o(1)$ is immediate (in fact, since $e(n), f(n) = o(1)$, and $\Delta = \Theta(\log(n))$, this quantity actually quite small, but here we only need that it is $o(1)$).  The final inequality follows by two uses of the definition of $f(S,\eta-1,\Delta,\epsilon)$, which, we recall, is defined, recursively, to be the error bound on $\mathcal{A}(\cdot,\eta-1)$, so that $f(S,\eta-1,\Delta,\epsilon) \geq |\mathcal{A}(S_{L,i},\eta-1) -\bra{0_{ALL}} \phi_{L, i}\ket{0_{ALL}} |$, and $f(S,\eta-1,\Delta,\epsilon) \geq |\mathcal{A}(S_{j,R},\eta-1)- \bra{0_{ALL}} \phi_{j,R}\ket{0_{ALL}} |$ by definition.  (This final step also uses the triangle inequality, and the facts that  $\mathcal{A}(S_{L,i},\eta-1) , \mathcal{A}(S_{j,R},\eta-1)  = O(1)$, etc).  
	
  Now that we have bounded $G_1, G_2, G_3$, and $G_4$, the proof is complete.

\end{proof}

\bibliography{cubebib}

\end{document}